\documentclass[a4paper,UKenglish,cleveref, autoref, thm-restate]{lipics-v2021}
\usepackage[utf8]{inputenc}
\usepackage{tikz}
\usetikzlibrary{arrows.meta,positioning,calc}
\usetikzlibrary{shapes}
\usepackage{adjustbox}
\usepackage{subcaption}
\usepackage{float}

\usepackage[ruled,vlined,linesnumbered,longend, noend]{algorithm2e}
\newcommand{\algosize}{\fontsize{9pt}{11pt}\selectfont}

\SetAlFnt{\algosize}        
\SetAlCapFnt{\algosize}     
\SetAlCapNameFnt{\algosize} 

\usepackage{url}

\usepackage{etoolbox}
\makeatletter
\patchcmd{\@algocf@start}{\vbox}{\vbox}{}{}
\makeatother

\usetikzlibrary{shapes}
\usepackage{adjustbox}

\usepackage{bm}

\SetKwComment{Comment}{/* }{ */}
\title{How Long Can the Escaping Ant Be Confined?}

\author{Kossi Roland Etse}{DAVID Lab., UVSQ, Université Paris Saclay, 45 avenue des Etats-Unis, 78000, Versailles, France. \and \url{https://www.david.uvsq.fr/m-kossi-roland-etse} }{kossi-roland.etse@uvsq.fr}{https://orcid.org/0009-0002-3583-6708}{}

\authorrunning{K. R. Etse}

\Copyright{Kossi Roland Etse}

\ccsdesc[500]{Mathematics of computing}

\keywords{Langton's Ant, Escaping Time, Finite Grid Dynamics, Combinatorial Bounds,
Discrete Dynamical Systems, Cellular Automata} 

\category{} 

\relatedversion{} 

\acknowledgements{I would like to thank David Auger and Pierre Coucheney for valuable discussions and insightful feedback. I also thank Jérémie Cabessa and Yann Strozecki for their help with the simulations, and anonymous reviewers for their comments and suggestions.}

\nolinenumbers 

\EventEditors{John Q. Open and Joan R. Access}
\EventNoEds{2}
\EventLongTitle{42nd Conference on Very Important Topics (CVIT 2016)}
\EventShortTitle{CVIT 2016}
\EventAcronym{CVIT}
\EventYear{2016}
\EventDate{December 24--27, 2016}
\EventLocation{Little Whinging, United Kingdom}
\EventLogo{}
\SeriesVolume{42}
\ArticleNo{23}

\begin{document}

\maketitle


\begin{abstract}
Langton's ant is a simple two-dimensional cellular automaton whose long-term behavior exhibits
remarkable complexity. While it is known that the ant eventually escapes any finite
connected region of the grid, the quantitative aspects of this escape remain poorly
understood. In this paper, we study the \emph{escaping time} of Langton's ant,
defined as the maximum number of steps the ant can perform within a finite connected
domain before leaving it.

We establish general upper bounds on the escaping time as a function of the domain
size, and derive improved bounds for rectangular domains. In
particular, we obtain a factorial upper bound for square domains via an inductive
decomposition argument. We also obtain linear upper bounds for rectangular domains of height
two and three via a column-by-column analysis. More generally, for rectangular domains with a fixed height, we establish a polynomial upper bound in the number of columns.
These results are
complemented by exact values computed through an optimized simulation algorithm
that exploits the geometric symmetries of the grid and employs a backtracking
branching strategy to avoid exhaustive search over all color configurations. We also provide lower-bound constructions, proving that the linear upper bounds for rectangular domains of heights two and three are asymptotically optimal.
\end{abstract}

\section{Introduction}

Many natural and artificial systems can be modeled as collections of simple interacting elements evolving over discrete time. Each element updates its state according to local rules that depend only on a small neighborhood, yet the repeated application of these rules can generate complex global behavior. Understanding how such macroscopic patterns arise from microscopic interactions, often referred to as \emph{emergence}, is a central question in the study of discrete dynamical systems. 

A classical example of emergence in a discrete dynamical system is \emph{Langton's ant},
a two-dimensional cellular automaton introduced by Christopher G. Langton in 1986 in the context
of artificial life \cite{langton1986artificial}. 
Langton's ant is defined on the infinite square grid whose cells are in one of two states: \emph{black} (turn-left) or \emph{white} (turn-right). The ant occupies a cell and has an orientation among the four cardinal directions. At each step, the ant turns left on a black cell and right on a white cell, flips the state of the cell, and moves forward to the adjacent cell in the new direction.

Despite the simplicity of these local rules, the trajectory of the ant exhibits highly unpredictable behavior. Starting from a uniformly white (or black) initial configuration, the ant's trajectory shows rotational symmetry during the first 500 steps, followed by an apparently erratic phase lasting until around step 10 000, after which the trajectory suddenly becomes regular: the ant advances indefinitely along one of the four diagonal directions, tracing a periodic pattern of period 104 known as the \emph{highway} (Figure \ref{fig:highway}). Numerical simulations consistently show that this highway emerges from any finite initial configuration (configuration in which all but a finite number of cells are in the same state), yet this experimental observation has never been formally established.

Beyond the standard two-dimensional square lattice, Langton's ant has been studied on a variety of other regular structures and topologies, including one-dimensional lattices \cite{gajardo2004dynamics, grosfils1999propagation}, triangular and hexagonal grids \cite{grosfils1999propagation, tsukiji2011pspace}, infinite bi-regular and hyperbolic graphs, finite and planar graphs \cite{gajardo2001stacs}, twisted tori \cite{hagiwara2020torus}, and higher-dimensional lattices \cite{bunimovich1996dim}. Those works show that the dynamical behavior of the ant is strongly influenced by the topology of the underlying graph. Generalizations have also been proposed along other axes: variants with a richer set of movements and multi-state extensions \cite{gajardo2025highways, gale1995further}.

While much attention has been devoted to the ant's behavior on infinite grids, its dynamics on finite domains remain poorly understood. It is known that Langton’s ant eventually escapes any finite connected region of the grid, yet the quantitative aspects of this escape —and more broadly, the computational complexity of predicting its behavior on finite two-color grids— are still unclear.  A natural and fundamental question thus arises:\emph{ how long can the ant remain confined within a finite region before escaping?} Bounding the escaping time provides insight into the complexity of reachability problems and prediction difficulty for the ant on finite domains.

\begin{figure}[!t]
    \centering
    \resizebox{0.8\linewidth}{!}{%
    \begin{minipage}{\linewidth}
    \begin{subfigure}[b]{0.32\textwidth}
        \includegraphics[width=\textwidth]{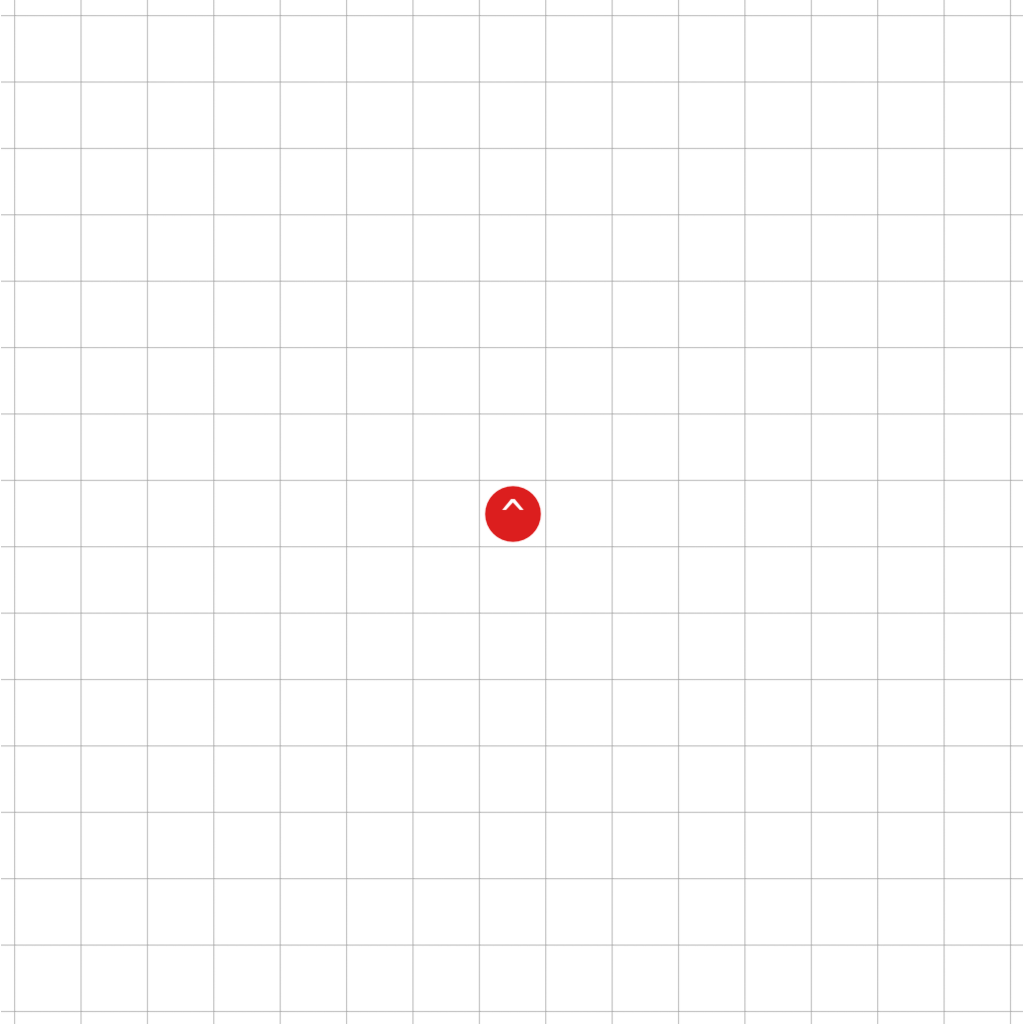}
        \caption*{(a) 0 Steps }
    \end{subfigure}
    \hfill
    \begin{subfigure}[b]{0.32\textwidth}
        \includegraphics[width=\textwidth]{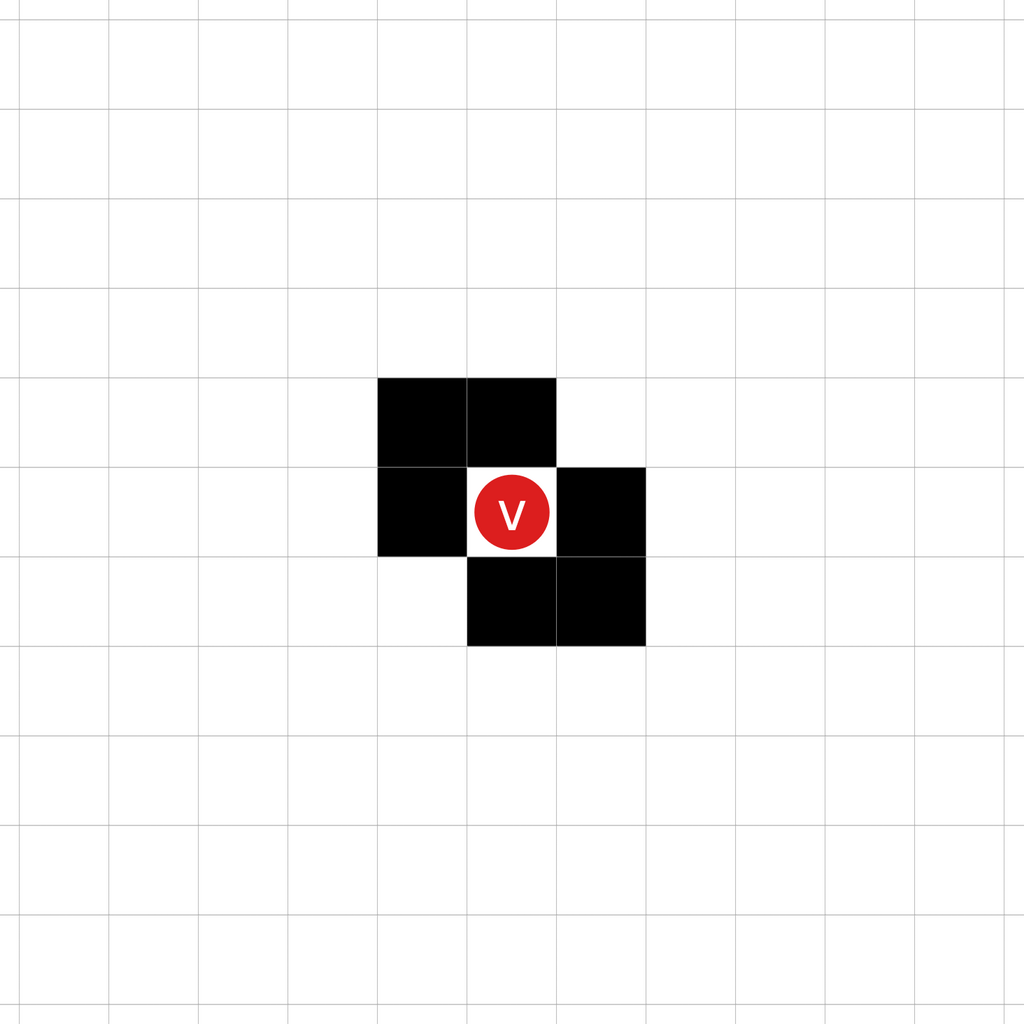}
        \caption*{(b) 8 Steps }
    \end{subfigure}
    \hfill
    \begin{subfigure}[b]{0.32\textwidth}
        \includegraphics[width=\textwidth]{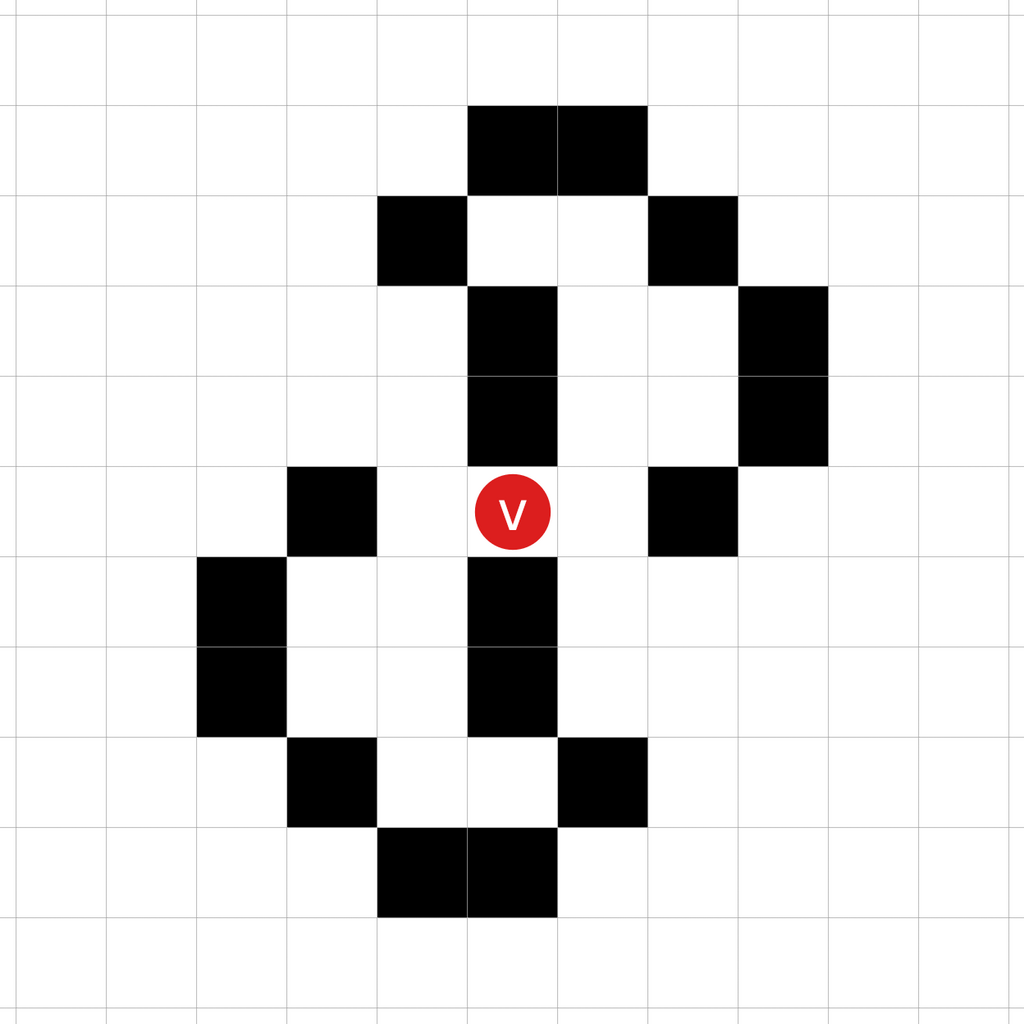}
        \caption*{(c) 96 Steps}
    \end{subfigure}

    \vspace{0.5em}

    \begin{subfigure}[b]{0.32\textwidth}
        \includegraphics[width=\textwidth]{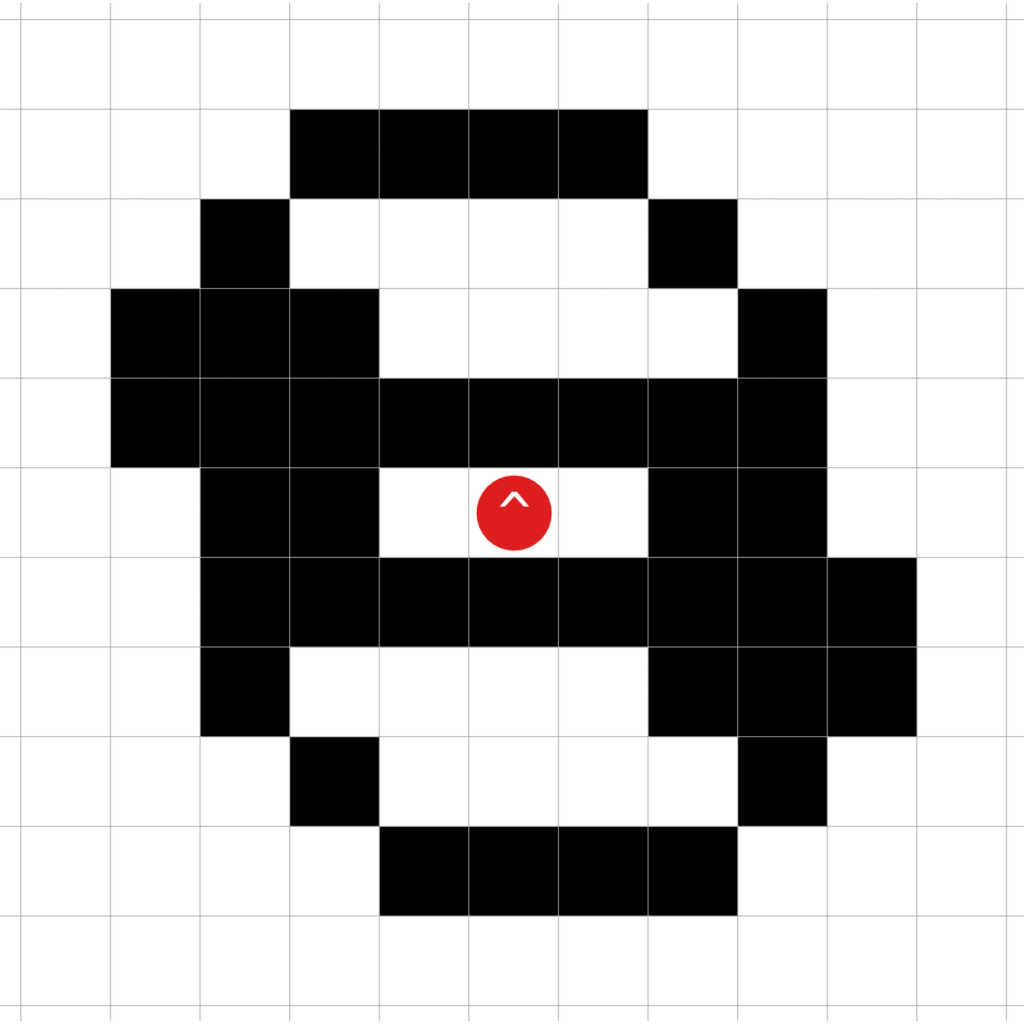}
        \caption*{(d) 184 Steps}
    \end{subfigure}
    \hfill
    \begin{subfigure}[b]{0.32\textwidth}
        \includegraphics[width=\textwidth]{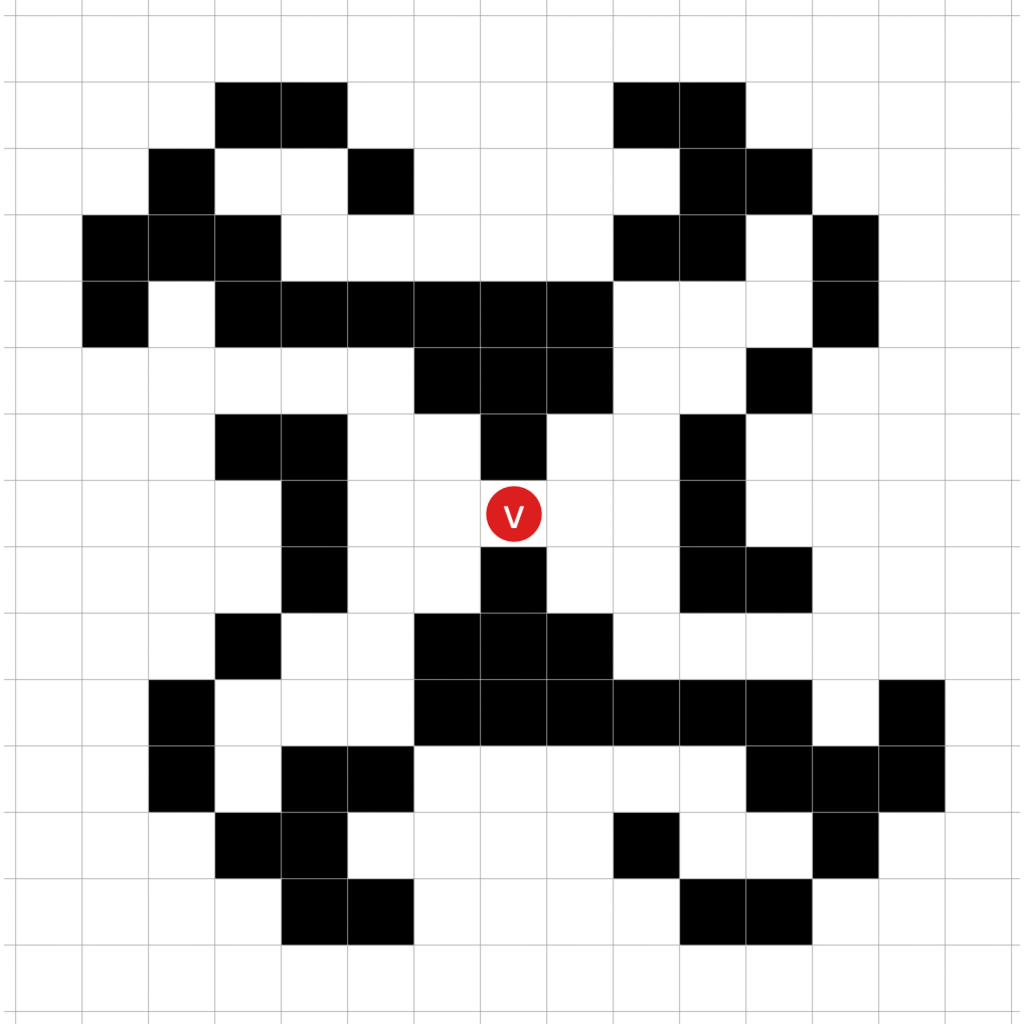}
        \caption*{(e) 368 Steps}
    \end{subfigure}
    \hfill
    \begin{subfigure}[b]{0.32\textwidth}
        \includegraphics[width=\textwidth]{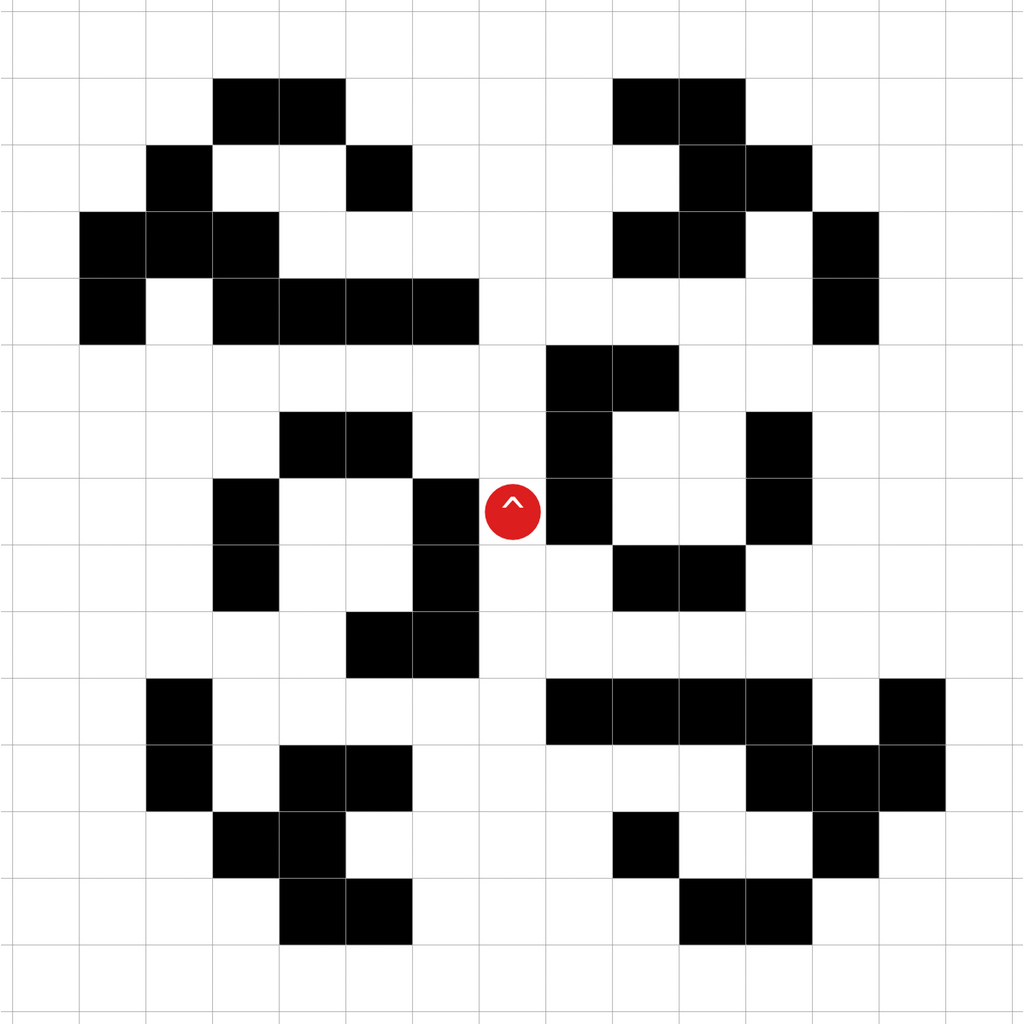}
        \caption*{(f) 472 Steps}
    \end{subfigure}

    \vspace{0.5em}

    \begin{center}
    \begin{subfigure}[b]{0.5\textwidth}
        \includegraphics[width=\textwidth]{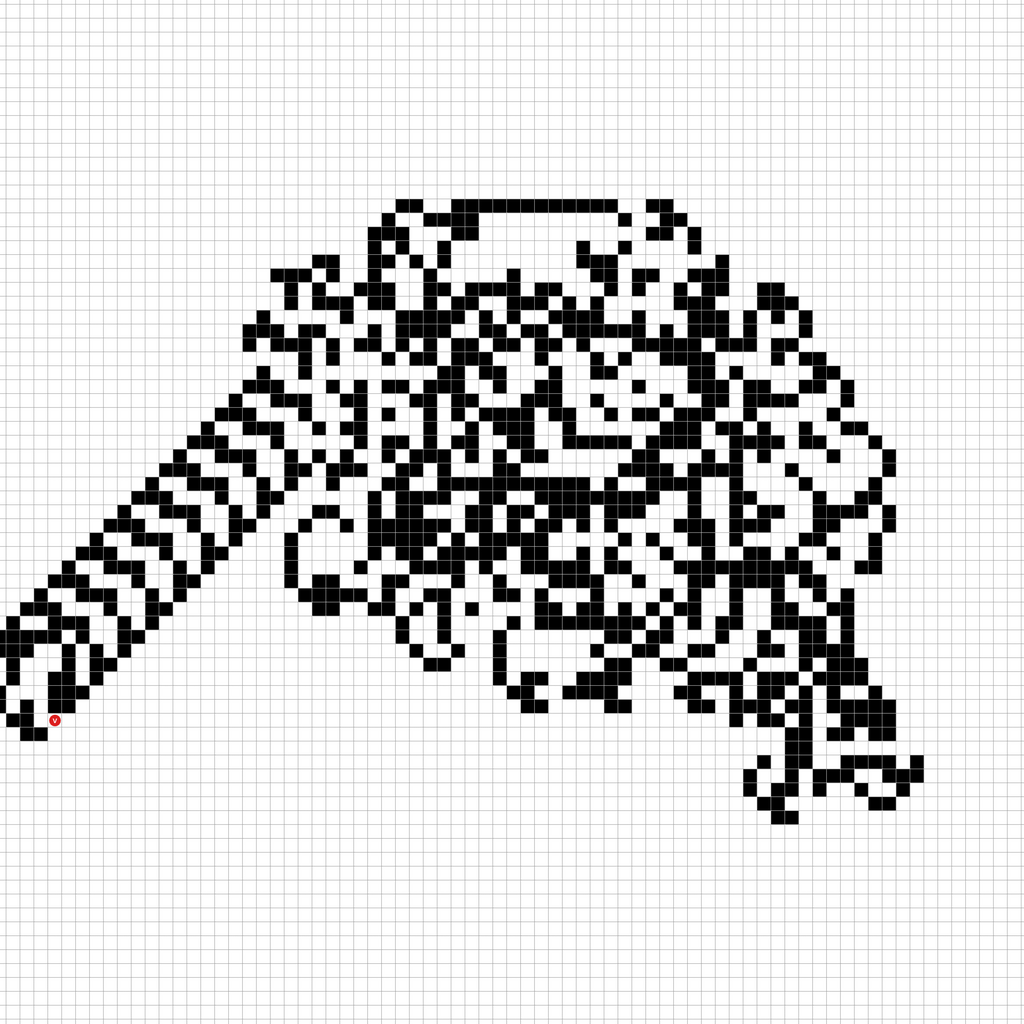}
        \caption*{(g) Highway}
    \end{subfigure}
    \end{center}
    \end{minipage}
    }

    \caption{The ant starting from a uniform white configuration facing north. 
Snapshots (a--f) after steps 8, 96, 184, 368, and 472 illustrate the early 
rotational symmetry, while (g) shows the eventual \emph{highway} regime.}
    \label{fig:highway}
\end{figure}

\paragraph*{The Present Work}
This paper studies the escaping time of Langton's ant on the standard two-color
square grid from a quantitative perspective, without modifying the cell states
or the underlying topology. Our contributions are as follows.

We first establish that, while confined to a finite connected domain, the same
color configuration cannot appear twice in the ant's trajectory. This
immediately yields a general upper bound of $2^d$ on the escaping time for any
domain of $d$ cells.

We then derive improved bounds depending on the shape of the domain. For square
domains with $n$ rows and $n$ columns, an inductive decomposition into a
boundary layer and interior yields a factorial upper bound $(n+1)!$. For
rectangular domains of height two, a column-by-column analysis gives the linear
bound $6(n-1)$ for a grid with $n$ columns, which is tight: we exhibit explicit
configurations achieving exactly $6(n-1)$ steps. For rectangular domains of
height three, the same approach yields the linear bound $10n-4$, with
near-matching constructions.
More generally, for fixed $k$, an inductive
argument gives for rectangular domains with $k$ rows and $n$
columns an upper bound $\left(\frac{n+1}{\sqrt{2}}\right)^k$, 
which is polynomial in $n$ for every fixed $k$.

These results are complemented by exact escaping times computed for all
rectangular grids up to size $9 \times 9$, using an optimized simulation
algorithm that exploits grid symmetries and a backtracking branching strategy,
which in particular corrects and extends the OEIS sequence
A282425~\cite{oeisA282425}.

\medskip
The remainder of this paper is organized as follows. 
Section \ref{sec:preliminaries} introduces the formal framework, definitions, and known results.
Section \ref{sec:general} presents our general results on finite connected domains, including the proof that color configurations cannot repeat. Section \ref{sec:squarerectangular} details our simulations (\ref{sec:simulation}), the inductive derivation of the square domain bound (\ref{sec:square}), the linear bounds for two-row  (\ref{sec:tworow}) and three-row  (\ref{sec:threerow}) rectangular grids and the inductive derivation of the bound of rectangular domain with fixed height (\ref{sec:rectupperbound}).

\section{Preliminaries} \label{sec:preliminaries}

This section introduces the concepts, definitions, and notation that will be used throughout the paper.

\subsection{Graph Representation and Transition Function}
We represent the infinite grid as a planar directed graph  $G = (V, A)$,  where the vertex set $V = \mathbb{Z}^2$ corresponds to the cells, and $A$ consists of arcs in both directions between vertices corresponding to adjacent cells (i.e. pairs of vertices differing by exactly one unit in exactly one coordinate). See Figure \ref{fig:gg} for an illustration, where each pair of opposite arcs is represented as single undirected edge.
In this representation, the position of the ant (the cell it occupies and its orientation) is encoded by an arc $(u,v) \in A$, specifying both the cell $u$ from which the ant arrives and the cell $v$ toward which it is heading. 
We identify vertices with cells of the grid and arcs with oriented positions of the ant. Thus, we may use ‘cell’ for vertices and ‘position’ for arcs.

\begin{figure}[H]
\centering
\begin{tikzpicture}[
scale=0.5,
vertex/.style={circle, draw, minimum size=2.5mm, inner sep=0pt},
fvertex/.style={circle, draw, fill=black, minimum size=2.5mm, inner sep=0pt}, 
>=Stealth,
thick
]
\def\k{5}
\def\n{5}
\def\spacing{1.5}
\def\externalarrow{1.0}

\pgfmathsetmacro{\gridleft}{-\spacing/2}
\pgfmathsetmacro{\gridright}{(\n-1)*\spacing+\spacing/2}
\pgfmathsetmacro{\gridbottom}{-\spacing/2}
\pgfmathsetmacro{\gridtop}{(\k-1)*\spacing+\spacing/2}

\foreach \i in {0,...,\n} {
    \pgfmathsetmacro{\xline}{(\i-0.5)*\spacing}
    \draw[dashed, gray] (\xline,\gridbottom) -- (\xline,\gridtop);
}
\foreach \j in {0,...,\k} {
    \pgfmathsetmacro{\yline}{(\j-0.5)*\spacing}
    \draw[dashed, gray] (\gridleft,\yline) -- (\gridright,\yline);
}

\foreach \i in {0,...,\numexpr\n-1} {
    \foreach \j in {0,...,\numexpr\k-1} {
        \pgfmathsetmacro{\x}{\i*\spacing}
        \pgfmathsetmacro{\y}{\j*\spacing}
        \node[vertex] (\i v\j) at (\x,\y) {};
    }
}

\foreach \j in {0,...,\numexpr\k-1} {
    \pgfmathsetmacro{\y}{\j*\spacing}
    \draw (-\externalarrow,\y) -- (0v\j);
    \foreach \i in {0,...,\numexpr\n-2} {
        \draw (\i v\j) -- (\the\numexpr\i+1\relax v\j);
    }
    \pgfmathsetmacro{\xright}{(\n-1)*\spacing+\externalarrow}
    \draw (\the\numexpr\n-1\relax v\j) -- (\xright,\y);
}

\foreach \i in {0,...,\numexpr\n-1} {
    \pgfmathsetmacro{\x}{\i*\spacing}
    \draw (\x,-\externalarrow) -- (\i v0);
    \foreach \j in {0,...,\numexpr\k-2} {
        \draw (\i v\j) -- (\i v\the\numexpr\j+1\relax);
    }
    \pgfmathsetmacro{\ytop}{(\k-1)*\spacing+\externalarrow}
    \draw (\i v\the\numexpr\k-1\relax) -- (\x,\ytop);
}

\end{tikzpicture}
\caption{Representation of a portion of the infinite grid as a planar directed 
graph $G=(V,A)$: vertices correspond to cells, and each undirected edge 
represents a pair of arcs in both directions between adjacent cells.}
\label{fig:gg}
\end{figure}

In all figures cited in the remainder of this paper, black cells are in the $L$ state and white cells in the $R$ state. Gray cells represent cells whose state may be either $L$ or $R$ and is not yet fixed.

A \emph{color configuration} of the grid is a function $C :  \mathbb{Z}^2 \rightarrow \{L,R\}$ assigning to each cell its state, where $L$ denotes the turn-left  state and $R$ denotes the turn-right state.
A \emph{configuration}  is a  pair $(C, pos)$ consisting of a color configuration of the grid together with the position $pos \in A$ currently occupied by the ant.

The global \emph{transition function} $T$ is defined by $T(C, pos) = (C', pos')$,
where, given $pos = (u, v)$:
\begin{itemize}
    \item $C'(w) = C(w)$ for all $w \neq v$, and $C'(v) = \overline{C(v)}$, 
    where $\overline{L} = R$ and $\overline{R} = L$;
    \item $pos' = (v, w)$, where $w$ is the neighbor of $v$ obtained by 
    turning right from the direction $(u \to v)$ if $C(v) = R$, or turning 
    left if $C(v) = L$.
\end{itemize}

Applying $T$ once corresponds to executing one step of the ant's motion. See Figure \ref{fig:transitions} for an illustration of the application of transition function.

\begin{figure}[htbp]
\centering

\begin{minipage}{0.48\textwidth}
\centering
\begin{tikzpicture}[
scale=0.55,
vertex/.style={circle, draw, minimum size=2.5mm, inner sep=0pt},
fvertex/.style={circle, draw, fill=black, minimum size=2.5mm, inner sep=0pt},
gvertex/.style={circle, draw, fill=gray!40, minimum size=2.5mm, inner sep=0pt}, 
>=Stealth, thick
]

\node[gvertex] (0v0) at (0,0) {};
\node[gvertex] (0v1) at (0,1.5) {};
\node[gvertex] (0v2) at (0,3) {};

\node[gvertex] (1v0) at (1.5,0) {};
\node[vertex] (1v1) at (1.5,1.5) {};
\node[gvertex] (1v2) at (1.5,3) {};

\node[gvertex] (2v0) at (3,0) {};
\node[gvertex] (2v1) at (3,1.5) {};
\node[gvertex] (2v2) at (3,3) {};

\draw (0v0)--(1v0)--(2v0);
\draw (0v1)--(1v1)--(2v1);
\draw (0v2)--(1v2)--(2v2);

\draw (0v0)--(0v1)--(0v2);
\draw (1v0)--(1v1)--(1v2);
\draw (2v0)--(2v1)--(2v2);

\draw (-0.5,0)--(0v0);
\draw (-0.5,1.5)--(0v1);
\draw (-0.5,3)--(0v2);

\draw (3.5,0)--(2v0);
\draw (3.5,1.5)--(2v1);
\draw (3.5,3)--(2v2);

\draw (0,-0.5)--(0v0);
\draw (1.5,-0.5)--(1v0);
\draw (3,-0.5)--(2v0);

\draw (0,3.5)--(0v2);
\draw (1.5,3.5)--(1v2);
\draw (3,3.5)--(2v2);

\draw[->, very thick, double] (1v0) -- node[right] {$pos$} (1v1);

\draw[->, very thick] (4.5,1.5) -- node[above] {$T$} node[below] {\small one step} (6.5,1.5);

\node[gvertex] (R0v0) at (8,0) {};
\node[gvertex] (R0v1) at (8,1.5) {};
\node[gvertex] (R0v2) at (8,3) {};

\node[gvertex] (R1v0) at (9.5,0) {};
\node[fvertex] (R1v1) at (9.5,1.5) {};
\node[gvertex] (R1v2) at (9.5,3) {};

\node[gvertex] (R2v0) at (11,0) {};
\node[gvertex] (R2v1) at (11,1.5) {};
\node[gvertex] (R2v2) at (11,3) {};

\draw (R0v0)--(R1v0)--(R2v0);
\draw (R0v1)--(R1v1)--(R2v1);
\draw (R0v2)--(R1v2)--(R2v2);

\draw (R0v0)--(R0v1)--(R0v2);
\draw (R1v0)--(R1v1)--(R1v2);
\draw (R2v0)--(R2v1)--(R2v2);

\draw (7.5,0)--(R0v0);
\draw (7.5,1.5)--(R0v1);
\draw (7.5,3)--(R0v2);

\draw (11.5,0)--(R2v0);
\draw (11.5,1.5)--(R2v1);
\draw (11.5,3)--(R2v2);

\draw (8,-0.5)--(R0v0);
\draw (9.5,-0.5)--(R1v0);
\draw (11,-0.5)--(R2v0);

\draw (8,3.5)--(R0v2);
\draw (9.5,3.5)--(R1v2);
\draw (11,3.5)--(R2v2);

\draw[->, very thick, double] (R1v1) -- node[above] {$pos'$} (R2v1);

\end{tikzpicture}
\subcaption{Right turn}
\end{minipage}
\hfill
\vrule width 0.5pt
\hfill
\begin{minipage}{0.48\textwidth}
\centering
\begin{tikzpicture}[
scale=0.55,
vertex/.style={circle, draw, minimum size=2.5mm, inner sep=0pt},
fvertex/.style={circle, draw, fill=black, minimum size=2.5mm, inner sep=0pt},
gvertex/.style={circle, draw, fill=gray!40, minimum size=2.5mm, inner sep=0pt}, 
>=Stealth, thick
]

\node[gvertex] (0v0) at (0,0) {};
\node[gvertex] (0v1) at (0,1.5) {};
\node[gvertex] (0v2) at (0,3) {};

\node[gvertex] (1v0) at (1.5,0) {};
\node[fvertex] (1v1) at (1.5,1.5) {};
\node[gvertex] (1v2) at (1.5,3) {};

\node[gvertex] (2v0) at (3,0) {};
\node[gvertex] (2v1) at (3,1.5) {};
\node[gvertex] (2v2) at (3,3) {};

\draw (0v0)--(1v0)--(2v0);
\draw (0v1)--(1v1)--(2v1);
\draw (0v2)--(1v2)--(2v2);

\draw (0v0)--(0v1)--(0v2);
\draw (1v0)--(1v1)--(1v2);
\draw (2v0)--(2v1)--(2v2);

\draw (-0.5,0)--(0v0);
\draw (-0.5,1.5)--(0v1);
\draw (-0.5,3)--(0v2);

\draw (3.5,0)--(2v0);
\draw (3.5,1.5)--(2v1);
\draw (3.5,3)--(2v2);

\draw (0,-0.5)--(0v0);
\draw (1.5,-0.5)--(1v0);
\draw (3,-0.5)--(2v0);

\draw (0,3.5)--(0v2);
\draw (1.5,3.5)--(1v2);
\draw (3,3.5)--(2v2);

\draw[->, very thick, double] (1v0) -- node[right] {$pos$} (1v1);

\draw[->, very thick] (4.5,1.5) -- node[above] {$T$} node[below] {\small one step} (6.5,1.5);

\node[gvertex] (R0v0) at (8,0) {};
\node[gvertex] (R0v1) at (8,1.5) {};
\node[gvertex] (R0v2) at (8,3) {};

\node[gvertex] (R1v0) at (9.5,0) {};
\node[vertex] (R1v1) at (9.5,1.5) {};
\node[gvertex] (R1v2) at (9.5,3) {};

\node[gvertex] (R2v0) at (11,0) {};
\node[gvertex] (R2v1) at (11,1.5) {};
\node[gvertex] (R2v2) at (11,3) {};

\draw (R0v0)--(R1v0)--(R2v0);
\draw (R0v1)--(R1v1)--(R2v1);
\draw (R0v2)--(R1v2)--(R2v2);

\draw (R0v0)--(R0v1)--(R0v2);
\draw (R1v0)--(R1v1)--(R1v2);
\draw (R2v0)--(R2v1)--(R2v2);

\draw (7.5,0)--(R0v0);
\draw (7.5,1.5)--(R0v1);
\draw (7.5,3)--(R0v2);

\draw (11.5,0)--(R2v0);
\draw (11.5,1.5)--(R2v1);
\draw (11.5,3)--(R2v2);

\draw (8,-0.5)--(R0v0);
\draw (9.5,-0.5)--(R1v0);
\draw (11,0.5)--(R2v0);

\draw (8,3.5)--(R0v2);
\draw (9.5,3.5)--(R1v2);
\draw (11,3.5)--(R2v2);

\draw[->, very thick, double] (R1v1) -- node[above] {$pos'$} (R0v1);

\end{tikzpicture}
\subcaption{Left turn}
\end{minipage}

\caption{Configuration before and after one application of the transition function $T$. In~(a), the ant arrives at a white cell, turns right, flips the cell to black, and moves forward. In~(b), the ant arrives at a black cell, turns left, flips the cell to white, and moves 
forward.}
\label{fig:transitions}
\end{figure}

\begin{remark}
An important property is that the dynamics of the ant is deterministic and \emph{reversible}, i.e. the transition function is bijective. Configurations at each time in the future and in the past are entirely determined by the current configuration.
To determine the previous configuration, it is enough to invert the direction of the ant, to apply the transition function $T$, and to invert once more the direction of the ant.
\end{remark}

\subsection{HV-partition Property}
Another fundamental structural property of Langton's ant is what we call the \emph{HV-partition property}. The ant’s orientation alternates between horizontal and vertical, since its direction is rotated by a quarter turn at each step.  Color the cells of $G$ in a checkerboard pattern by assigning one color to all cells $(i,j)$ with $i+j$ even, and the other color to those with $i+j$ odd. If the ant starts in a horizontal orientation and is pointing toward a cell of the first color, then it will always be in a horizontal orientation when pointing toward a cell of that color, and in a vertical orientation when pointing toward a cell of the other color.
This induces a natural partition of $\mathbb{Z}^2$ into two classes, \emph{H-cells} and \emph{V-cells}: when the ant points toward an H-cell (resp.\ a V-cell), it arrives horizontally (resp.\ vertically) and departs vertically  (resp.\ horizontally).  See Figure \ref{fig:HVpartition} for illustration.

\begin{figure}[H]
\centering
\begin{tikzpicture}[
scale=0.5,
vcell/.style={circle, draw, fill=gray!40, minimum size=2.5mm, inner sep=0pt},
hcell/.style={rectangle, draw, fill=gray!40, minimum size=2.5mm, inner sep=0pt},
>=Stealth,
thick
]
\def\k{5}
\def\n{5}
\def\spacing{1.5}
\def\externalarrow{1.0}

\foreach \i in {0,...,\numexpr\n-1} {
    \foreach \j in {0,...,\numexpr\k-1} {
        \pgfmathsetmacro{\x}{\i*\spacing}
        \pgfmathsetmacro{\y}{\j*\spacing}
        \pgfmathsetmacro{\parity}{int(mod(\i+\j,2))}
        \ifnum\parity=0
            \node[hcell] (\i v\j) at (\x,\y) {};
        \else
            \node[vcell] (\i v\j) at (\x,\y) {};
        \fi
    }
}

\foreach \j in {0,...,\numexpr\k-1} {
    \foreach \i in {0,...,\numexpr\n-2} {
        \pgfmathsetmacro{\parity}{int(mod(\i+\j,2))}
        \ifnum\parity=0
            \draw[<-] (\i v\j) -- (\the\numexpr\i+1\relax v\j);
        \else
            \draw[->] (\i v\j) -- (\the\numexpr\i+1\relax v\j);
        \fi
    }
}

\foreach \i in {0,...,\numexpr\n-1} {
    \foreach \j in {0,...,\numexpr\k-2} {
        \pgfmathsetmacro{\parity}{int(mod(\i+\j,2))}
        \ifnum\parity=0
            \draw[->] (\i v\j) -- (\i v\the\numexpr\j+1\relax);
        \else
            \draw[<-] (\i v\j) -- (\i v\the\numexpr\j+1\relax);
        \fi
    }
}

\foreach \j in {0,...,\numexpr\k-1} {
    \pgfmathsetmacro{\y}{\j*\spacing}
    \pgfmathsetmacro{\parity}{int(mod(\j,2))}
    \ifnum\parity=0
        \draw[->] (-\externalarrow,\y) -- (0v\j);
    \else
        \draw[<-] (-\externalarrow,\y) -- (0v\j);
    \fi
}

\foreach \j in {0,...,\numexpr\k-1} {
    \pgfmathsetmacro{\y}{\j*\spacing}
    \pgfmathsetmacro{\xright}{(\n-1)*\spacing+\externalarrow}
    \pgfmathsetmacro{\parity}{int(mod(\n-1+\j,2))}
    \ifnum\parity=0
        \draw[->] (\xright,\y) -- (\the\numexpr\n-1\relax v\j);
    \else
        \draw[->] (\the\numexpr\n-1\relax v\j) -- (\xright,\y);
    \fi
}

\foreach \i in {0,...,\numexpr\n-1} {
    \pgfmathsetmacro{\x}{\i*\spacing}
    \pgfmathsetmacro{\parity}{int(mod(\i,2))}
    \ifnum\parity=0
        \draw[->] (\i v0) -- (\x,-\externalarrow);
    \else
        \draw[->] (\x,-\externalarrow) -- (\i v0);
    \fi
}

\foreach \i in {0,...,\numexpr\n-1} {
    \pgfmathsetmacro{\x}{\i*\spacing}
    \pgfmathsetmacro{\ytop}{(\k-1)*\spacing+\externalarrow}
    \pgfmathsetmacro{\parity}{int(mod(\i+\k-1,2))}
    \ifnum\parity=0
        \draw[->] (\i v\the\numexpr\k-1\relax) -- (\x,\ytop);
    \else
        \draw[->] (\x,\ytop) -- (\i v\the\numexpr\k-1\relax);
    \fi
}

\end{tikzpicture}
\caption{Illustration of the HV-partition property under the assumption that the ant enters horizontally into the cell located in the last row of the first column.  V-cells are represented as circles and H-cells as rectangles. Horizontal arcs point toward H-cells, and vertical arcs point toward V-cells. The set of arcs thus specifies exactly all admissible positions of the ant.}
\label{fig:HVpartition}
\end{figure}
This partition of the cells induces a corresponding partition of the arc set $A$ 
into two subsets $A_H$ and $A_V$, where $A_H$ (resp.\ $A_V$) is the set of admissible positions of the ant when the origin cell $(0,0)$ is taken as an H-cell (resp.\ a V-cell). So according to the starting position of the ant the motion of the ant will be either supported by the graph  $(\mathbb{Z}^2,A_H)$ (horizontal $G$) or $(\mathbb{Z}^2,A_V)$ (vertical $G$). In these graphs, every cell has exactly two incoming arcs and exactly two outgoing arcs.

\subsection{Finite Connected Domain}
We say that a finite set of cells $V_D \subset \mathbb{Z}^2$ is 
\emph{connected} if for any two cells $u, v \in V_D$, there exists a 
sequence $u = v_0, v_1, \ldots, v_k = v$ in $V_D$ with $k\in \mathbb{N}$ such that $v_{i+1}$ is adjacent to $v_i$ for every $0\le i \le k-1$.

For any such set, we define the associated \emph{finite connected domain} $D = (V_D, A_D)$, where $A_D$ consists of all arcs of $A$ incident to at least one cell of $V_D$. Three types of 
arcs in $A_D$ can be distinguished:
\emph{interior arcs}: both endpoints in $V_D$; \emph{in-boundary arcs}: tail outside $V_D$, head in $V_D$ and \emph{out-boundary arcs}: tail in $V_D$, head outside $V_D$.

The HV-partition of $G$ restricts naturally to $D$: the arc set $A_D$ 
is partitioned into $A_D \cap A_H$ and $A_D \cap A_V$, identifying 
within $D$ the admissible positions of the ant under each of the two 
partitions. We define \emph{horizontal} $D$ 
as $(V_D, A_D \cap A_H)$ and \emph{vertical} $D$ as $(V_D, A_D \cap 
A_V)$. Henceforth, unless stated 
otherwise, $D$ refers to horizontal $D$ (i.e $A_D= A_D \cap A_H$).

A cell of $D$ is called a \emph{boundary cell} if it is incident to 
at least one boundary arc (either an in-boundary or an out-boundary arc), and a \emph{corner cell} if it is incident to both an in-boundary and an out-boundary arc. See 
Figure~\ref{fig:finite} for an illustration.

\begin{figure}[htbp]
\centering
\begin{tikzpicture}[
scale=0.5,
vertex/.style={circle, draw, fill=gray!40, minimum size=2.5mm, inner sep=0pt},
boundary/.style={rectangle, draw, fill=gray!40, minimum size=2.5mm, inner sep=0pt},
corner/.style={diamond, draw, fill=gray!40, minimum size=2.5mm, inner sep=0pt},
>=Stealth,
thick
]

\def\k{4}
\def\n{6}
\def\spacing{1.5}
\def\externalarrow{1.2}

\foreach \i in {0,...,\numexpr\n-1} {
    \foreach \j in {0,...,\numexpr\k-1} {
        \pgfmathsetmacro{\x}{\i*\spacing}
        \pgfmathsetmacro{\y}{\j*\spacing}
        \node[vertex] (\i v\j) at (\x,\y) {};
    }
}

\foreach \i/\j in {
0/0, 0/1, 0/2, 0/3,
5/0, 5/1, 5/2, 5/3,
1/0, 2/0, 3/0, 4/0,
1/3, 2/3, 3/3, 4/3
} {
    \pgfmathsetmacro{\x}{\i*\spacing}
    \pgfmathsetmacro{\y}{\j*\spacing}
    \node[boundary] (\i v\j) at (\x,\y) {};
}

\foreach \i/\j in {0/0, 0/3, 5/0, 5/3} {
    \pgfmathsetmacro{\x}{\i*\spacing}
    \pgfmathsetmacro{\y}{\j*\spacing}
    \node[corner] (\i v\j) at (\x,\y) {};
}

\foreach \j in {0,...,\numexpr\k-1} {
    \pgfmathsetmacro{\y}{\j*\spacing}
    
    \pgfmathsetmacro{\dir}{int(mod(\j,2)==0?1:0)}
    \ifnum\dir=1
        \draw[->, dashed] (-\externalarrow,\y) -- (0v\j);
    \else
        \draw[<-, dotted] (-\externalarrow,\y) -- (0v\j);
    \fi
    
    \foreach \i in {0,...,\numexpr\n-2} {
        \pgfmathsetmacro{\arrowdir}{int(mod(\i+\j,2)==0?1:0)}
        \ifnum\arrowdir=1
            \draw[<-] (\i v\j) -- (\the\numexpr\i+1\relax v\j);
        \else
            \draw[->] (\i v\j) -- (\the\numexpr\i+1\relax v\j);
        \fi
    }
    
    \pgfmathsetmacro{\xright}{(\n-1)*\spacing+\externalarrow}
    \pgfmathsetmacro{\lastdir}{int(mod(\n-1+\j,2)==0?1:0)}
    \ifnum\lastdir=1
        \draw[<-, dashed] (\the\numexpr\n-1\relax v\j) -- (\xright,\y);
    \else
        \draw[->, dotted] (\the\numexpr\n-1\relax v\j) -- (\xright,\y);
    \fi
}

\foreach \i in {0,...,\numexpr\n-1} {
    \pgfmathsetmacro{\x}{\i*\spacing}
    
    \pgfmathsetmacro{\dir}{int(mod(\i,2)==0?1:0)}
    \ifnum\dir=1
        \draw[<-, dotted] (\x,-\externalarrow) -- (\i v0);
    \else
        \draw[->, dashed] (\x,-\externalarrow) -- (\i v0);
    \fi
    
    \foreach \j in {0,...,\numexpr\k-2} {
        \pgfmathsetmacro{\arrowdir}{int(mod(\i+\j,2)==0?1:0)}
        \ifnum\arrowdir=1
            \draw[->] (\i v\j) -- (\i v\the\numexpr\j+1\relax);
        \else
            \draw[<-] (\i v\j) -- (\i v\the\numexpr\j+1\relax);
        \fi
    }
    
    \pgfmathsetmacro{\yup}{(\k-1)*\spacing+\externalarrow}
    \pgfmathsetmacro{\lastdir}{int(mod(\i+\k-1,2)==0?1:0)}
    \ifnum\lastdir=1
        \draw[->, dotted] (\i v\the\numexpr\k-1\relax) -- (\x,\yup);
    \else
        \draw[<-, dashed] (\i v\the\numexpr\k-1\relax) -- (\x,\yup);
    \fi
}

\end{tikzpicture}

\caption{A horizontal finite connected domain of four rows and six columns. Interior cells are shown as circles, boundary cells as rectangles, and corner cells as diamonds; interior arcs are drawn as solid lines, in-boundary arcs as dashed lines and out-boundary arcs as dotted lines. Arc directions are determined by HV-partition.}
\label{fig:finite}
\end{figure}

A color configuration of $D$ is a function $C_D: V_D \rightarrow \{L,R\}$, and a configuration of $D$ is a pair $(C_D, pos)$, where  $pos \in A_D$.
The ant is said to be \emph{confined} in $D$ as long as its position is an interior or in-boundary arc of $D$, and it \emph{escapes} $D$ at the first step at which its position is an out-boundary arc of $D$. The transition function $T$ restricts naturally to $D$.

Given an initial configuration $(C_0, pos_0)$ of $D$ with $pos_0=(v_0, v_1)$,  for $k\in \mathbb{N}$, we denote by $(C_k, pos_k)=T^k (C_0, pos_0)$ with $pos_k=(v_k,v_{k+1})$, the configuration after $k$ steps.
The \emph{trajectory} of the ant from $(C_0, pos_0)$  to  $(C_k, pos_k)$ is the sequence of configurations $(C_i, pos_i)_{0 \le i \le k}$. The \emph{set of exited cells} during this trajectory is  the set $\{v_i \mid 1 \le i \le k \}$ consisting of all cells that the ant leaves during its motion.
The positions $pos_0$ and $pos_k$ are the starting and ending positions of the trajectory, respectively, and $v_1$ and $v_k$ are called the starting and ending cells.

The \emph{motion of the ant within $D$} from $(C_0, pos_0)$ 
refers to any trajectory $(C_i, pos_i)_{0 \le i \le k}$ such that each  $pos_i$ is an arc of $D$.

The \emph{number of steps performed} by 
the ant within $D$ starting from $(C_0, pos_0)$ is:
$$S_D(C_0, pos_0) = \min\,\{k \in \mathbb{N} \mid v_{k+1} \notin V_D\},$$ the number of steps during which it remains confined in $D$.
The \emph{escaping time} of $D$ is defined as: $$S_D = \max\,\{S_D(C_0, pos_0) \mid (C_0, pos_0) \text{ configuration of } D\},$$ that is, the maximum number of steps the ant can perform within $D$ over 
all initial configurations. The central question addressed in this paper is the following:

\begin{quote}
\textbf{Escaping Time Problem.} \emph{Given a finite connected domain $D$ 
of the grid, what is the escaping time of $D$?}
\end{quote}

\subsection{Related Work}

The most fundamental result on Langton's ant on the infinite square grid, due to Troubetzkoy~\cite{troubetzkoy1997ant}, states that for any initial configuration, if the ant moves indefinitely, the set of cells exited infinitely often contains no corner cell. As a corollary~\cite{bunimovich1992recurrence}, the ant's trajectory is always \emph{unbounded} (non-periodic). Consequently, the ant escapes any finite connected domain in a finite number of steps. The main objective of this paper is to quantify this escaping time.

The other main theoretical results concern the computational complexity of 
the ant. 
Gajardo, Moreira, and Goles~\cite{gajardo2002complexity} showed that Boolean circuits can be simulated within Langton's ant by embedding logic gadgets into the initial color configuration of the grid, with the ant's trajectory serving as the computational signal. This yields the $\mathsf{P}$-hardness under 
log-space reductions of the reachability problem: given an initial 
configuration, does the ant ever visit a specified target cell? By further 
composing these gadgets to simulate one-dimensional cellular automata, they 
established that Langton's ant is computationally universal, which in turn 
implies the existence of undecidable problems about its trajectory.

When the ant is restricted to a finite domain, this reachability problem becomes  decidable. But what is its precise complexity? In particular, is it $\mathsf{P}$-complete? If the escaping time was polynomially bounded, then reachability on finite grids can be decided by direct simulation in 
polynomial time, suggesting that the problem lies strictly within $\mathsf{P}$.

\medskip
Tsukiji and Hagiwara established $\mathsf{PSPACE}$-hardness results concerning the predictability and macroscopic behavior of the ant.
On the square grid ~\cite{tsukiji2011pspace}, by introducing a third cell state in which the ant moves straight ahead without changing the cell's color, they proved that determining whether a given finite configuration is repeatable (i.e. the ant's trajectory is bounded  and eventually cycles through a finite sequence of configurations) is $\mathsf{PSPACE}$-complete.  They also established the $\mathsf{PSPACE}$-hardness of this problem on a hexagonal grid. These results are obtained through a reduction from the Quantified Boolean Formula (QBF) evaluation problem.
In a subsequent study on a twisted torus ~\cite{hagiwara2020torus}, they proved that it is $\mathsf{PSPACE}$-hard to determine whether the ant will ever visit almost all vertices or nearly none of them via a reduction from the Quantified Conjunctive Normal Form (QCNF) problem.
The introduction of the third state on the square grid and the use of the twisted torus topology are crucial for these reductions. These break the standard HV-partition of the square grid and allows the ant to be "reversed", so that it can traverse arcs in both directions. This property is exploited in the reductions through the reversible behavior of the system. 

On the standard two-color square grid, however, the HV-partition is a rigid 
structural constraint: the ant cannot be reversed, and every arc is traversed 
in a fixed direction. This raises the question of whether 
$\mathsf{PSPACE}$-hardness can be achieved without breaking the 
HV-partition, or whether every natural decision problem about the ant on 
the finite two-color square grid is solvable in polynomial time. 
Quantifying the escaping time is a necessary first step toward answering this question.

\section{Escaping Time of Finite Connected Domains}\label{sec:general}

Let $D$ be a finite connected domain of $d \in \mathbb{N}^*$ cells. Since 
each cell can be in one of two states, there are $2^d$ color configurations 
of $D$. By the HV-partition, each cell is the head of exactly two arcs in 
$A_D$, giving $2d$ admissible positions. Together with $2^d$ possible 
color configurations, the total number of configurations of $D$ is 
$2d \cdot 2^d$.
The unboundedness result ensures that the ant's trajectory cannot be periodic within $D$, so no configuration of $D$ can be encountered twice in the trajectory of the ant within $D$, and the escaping time of $D$ is bounded by $2d \cdot 2^d$. However, this bound is far from tight, and the following results provide a slight improvement.

\begin{proposition}\label{pr:no_repeated_coloring}
Starting from a configuration of a finite connected domain $D$, no color configuration of $D$ can appear twice in the motion of the ant within $D$.
\end{proposition}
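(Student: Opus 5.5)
The plan is to reduce the statement to the fact, recalled in the Related Work, that the ant's trajectory is never periodic (Troubetzkoy's theorem and its corollary). Suppose, for contradiction, that in the motion of the ant within $D$ we have $C_i = C_j$ for some $i<j$, with all $pos_k$ for $i \le k \le j$ arcs of $D$. It suffices to show that then $pos_i = pos_j$. Indeed, the full configuration $(C_i,pos_i)$ would then recur, and by determinism the ant would repeat forever the same finite cycle of configurations inside $D$. That is a bounded, periodic trajectory of the ant on $\mathbb{Z}^2$ (extend $C_i$ arbitrarily outside $D$; these cells are never visited), contradicting unboundedness.

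\textbf{Step 1 (parity and alternation at each cell).} Since $C_i=C_j$, every cell of $D$ is exited an even number of times during steps $i+1,\dots,j$, because each exit flips the cell once. Moreover, the successive visits to a fixed cell $v$ alternate between an $R$-turn and an $L$-turn, since the color of $v$ flips at each visit and is changed by no other event. Consequently each cell $v$ contributes equally many right and left turns. Summing over cells, the total rotation of the ant's heading between times $i$ and $j$ is $0$, so $pos_i$ and $pos_j$ have the same direction. By the HV-partition, they also lie in the same class of arcs.

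\textbf{Step 2 (local flow at each cell).} By the HV-partition, each cell $v$ has exactly two in-arcs $a,b$ and two out-arcs $c,d$. Color $R$ sends $a\mapsto c$ and $b\mapsto d$, while color $L$ sends $a\mapsto d$ and $b \mapsto c$ (up to naming). Record, for each arc $x$, the number $n_x$ of times the ant's position equals $x$ during the steps $i,\dots,j-1$. The walk gives flow conservation: at each cell, (in-traversals) $=$ (out-traversals), except for a $+1/-1$ defect at the heads of $pos_i$ and $pos_j$ when these differ. Combining this with the alternation from Step 1 (exactly $m_v$ visits of each color at a cell visited $2m_v$ times), I would try to show that the local mapping forces $n_c$ and $n_d$ to be determined symmetrically by $n_a$ and $n_b$. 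The intended consequence is that a defect cannot occur at a single cell. Propagating this through the connected domain should then force the start and end arcs to coincide, i.e.\ $pos_i=pos_j$.

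\textbf{Main obstacle.} Step 2 is the delicate point. Alternation of colors alone does not fix which in-arc is used at each visit, so a pure counting argument may not close. If it fails, my fallback is a time-reversal argument. Using the reversibility remark, run the ant backwards from $(C_j,pos_j)$ and forwards from $(C_i,pos_i)$, both over the same color configuration $C_i$. Then compare the two trajectories arc by arc and show that the first arc at which they disagree would require some cell to carry two different colors at the same moment, which is impossible. This again yields $pos_i=pos_j$ and hence the desired contradiction with non-periodicity.
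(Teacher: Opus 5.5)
Your reduction is sound as far as it goes. If you also had $pos_i=pos_j$, the configuration would recur, the ant would cycle inside $D$ forever, and the unboundedness theorem would be contradicted. Step 1 is also correct, but it only shows that $pos_i$ and $pos_j$ point in the same direction.

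The gap is that the whole content of the proposition lies in the implication $C_i=C_j\Rightarrow pos_i=pos_j$. The proposition makes that implication vacuous, so given unboundedness the two are equivalent, and proving the implication is no easier. Neither of your routes proves it.

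The local flow count of Step 2 cannot work, because at a cell the exit counts are not determined by the entry counts together with the color alternation. Take an H-cell entered once from the left and once from the right, with one $R$-visit and one $L$-visit. If the left entry met $R$, both exits go down; if it met $L$, both exits go up. So there is no symmetric local relation between $(n_a,n_b)$ and $(n_c,n_d)$, and a defect at a single cell is not excluded locally.

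The time-reversal fallback is not an argument either. Applying $T^{-1}$ from $(C_j,pos_j)$ just retraces $pos_{j-1},pos_{j-2},\dots$, and applying $T$ from $(C_i,pos_i)$ traces $pos_{i+1},pos_{i+2},\dots$. These are the two ends of one and the same trajectory. They already differ at the first arc when $pos_i\neq pos_j$, and there is no common moment at which some cell would be forced to carry two colors.

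What is missing is a global, geometric argument on the connected set $D'$ of cells $v_{i+1},\dots,v_j$ exited between times $i$ and $j$. An arc entering $D'$ from outside can be occupied only at time $i$, and an arc leaving $D'$ only at time $j$. Every cell of $D'$ is visited at least twice, so no cell can have both in-arcs, or both out-arcs, on the boundary of $D'$.

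Now take a corner cell of $D'$, with boundary arcs $b_{in},b_{out}$ and interior arcs $a_{in},a_{out}$. Each color induces a bijection from in-arcs to out-arcs, and the two colors induce the two different bijections. Hence one color sends $a_{in}\mapsto b_{out}$ and $b_{in}\mapsto a_{out}$. Because colors alternate over at least two visits, some visit has that color and therefore uses $b_{in}$ or $b_{out}$. So every corner of $D'$ is the starting cell or the ending cell of the segment.

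The same boundary constraint forces the topmost of the leftmost cells of $D'$ to have $D'$-neighbours to its right and below. This produces three distinct corners, which contradicts there being only one starting and one ending cell. This is the paper's proof (Claims 1 and 2). It yields the contradiction directly, so the detour through Troubetzkoy's theorem becomes unnecessary.
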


\begin{proof}
Suppose, for contradiction, that some color configuration $C_D$ appears 
twice during the ant's motion within $D$. Let $Traj$ denote the trajectory of the ant between the first and second occurrences of $C_D$. Since the successive positions of the ant during its trajectory form a connected path, the set of exited cells during $Traj$ induces a finite connected subdomain of $D$, which we denote by $D'$. As the configuration $C_D$ is identical at the beginning and end of $Traj$, the state of each cell in $D'$ must be flipped an even number of times during $Traj$. Consequently, the ant must exit each cell of $D'$ a positive even number of times. The contradiction follows from the two claims below.

\medskip
\textbf{Claim 1.} 
\emph{Every corner cell of $D'$ is either a starting cell or an ending cell of $Traj$.}

\medskip
\textbf{Claim 2.} \emph{$D'$ has at least three distinct corner cells.}

\medskip
Claims~1 and~2 together imply that the trajectory $Traj$ has at
least two distinct starting cells or at least two distinct ending cells, contradicting the fact that each ant trajectory has a unique starting and a unique ending cell.
Thus, no coloring of $D$ can appear twice during the ant motion within $D$.

\medskip
\emph{Proof of Claim 1.}
Every cell of $D'$ has exactly two incoming arcs and exactly two outgoing arcs. 
Let $v_c$ be a corner cell of $D'$, and let $b_{in}$ and $b_{out}$ denote respectively an in-boundary arc and an out-boundary arc of $D'$ incident to $v_c$.  We denote by $a_{in}$ the other incoming arc of $v_c$, and by $a_{out}$ the other outgoing arc of $v_c$. See Figure~\ref{fig:cornercell_a}.

Since the state of $v_c$ alternates at every visit, for the ant to 
exit $v_c$ a positive even number of times, it must exit exactly twice, in 
one of these two possible ways:
\begin{itemize}
    \item The ant enters once via $a_{in}$ and once via $b_{in}$, and 
    exits both times via $a_{out}$ (Figure~\ref{fig:cornercell_b}). 
    Then $b_{in}$ must be the starting position of  $Traj$, so $v_c$ is 
    a starting cell.
    \item The ant enters twice via $a_{in}$, and exits once via 
    $a_{out}$ and once via $b_{out}$ (Figure~\ref{fig:cornercell_c}). 
    Then $b_{out}$ must be the ending position of  $Traj$, so $v_c$ is 
    an ending cell.
\end{itemize}
In both cases, $a_{in}$ and $a_{out}$ are interior arcs of $D'$, since 
 $Traj$ cannot have two starting positions or two ending positions.
\begin{figure}[H]
\centering
\subcaptionbox{$v_c$ and its four incident arcs.\label{fig:cornercell_a} }{%
\begin{tikzpicture}[
scale=0.6,
corner/.style={diamond, draw, fill=gray!40, minimum size=3mm, inner sep=0pt},
boundary/.style={rectangle, draw, fill=gray!40, minimum size=3mm, inner sep=0pt},
>=Stealth,
thick
]
\def\ext{1.5}
\node[boundary] (vc) at (0,0) {};
\node[corner] (vc) at (0,0) {};
\draw[<-, dotted] (-\ext,0) -- node[at start, left] {$b_{out}$} (vc);
\draw[very thick,<-] (\ext,0) -- node[at start, right] {$a_{out}$} (vc);
\draw[<-, dashed] (vc) -- node[at end, above] {$b_{in}$} (0,\ext);
\draw[very thick,<-] (vc) -- node[at end, below] {$a_{in}$} (0,-\ext);
\end{tikzpicture}
}%
\hspace{2em}
\subcaptionbox{The ant enters via $a_{in}$ and $b_{in}$, and 
exits twice via $a_{out}$.\label{fig:cornercell_b}}{%
\begin{tikzpicture}[
scale=0.55,
corner/.style={diamond, draw, fill=gray!40, minimum size=2.5mm, inner sep=0pt},
boundary/.style={rectangle, draw, fill=gray!40, minimum size=2.5mm, inner sep=0pt},
>=Stealth,
thick
]
\def\ext{1.5}
\node[boundary] (vc) at (0,0) {};
\node[corner] (vc) at (0,0) {};
\draw[<-, dotted] (-\ext,0) -- node[at start, left] {$b_{out}$} (vc);
\draw[very thick,<-] (\ext,0) -- node[at start, right] {$a_{out}$} (vc);
\draw[<-, dashed] (vc) -- node[at end, above] {$b_{in}$} (0,\ext);
\draw[very thick,<-] (vc) -- node[at end, below] {$a_{in}$} (0,-\ext);
\draw[very thick, ->] (0.2,\ext) to[out=-90, in=-180, looseness=1.5] (\ext+0.3,0.2);
\draw[very thick, ->] (0.2,-\ext) to[out=90, in=180, looseness=1.5] (\ext+0.3,-0.2);
\end{tikzpicture}
}%
\hspace{2em}
\subcaptionbox{The ant enters twice via $a_{in}$, and exits 
via $a_{out}$ and $b_{out}$.\label{fig:cornercell_c}}{%
\begin{tikzpicture}[
scale=0.6,
corner/.style={diamond, draw, fill=gray!40, minimum size=3mm, inner sep=0pt},
boundary/.style={rectangle, draw, fill=gray!40, minimum size=3mm, inner sep=0pt},
>=Stealth,
thick
]
\def\ext{1.5}
\node[boundary] (vc) at (0,0) {};
\node[corner] (vc) at (0,0) {};
\draw[<-, dotted] (-\ext,0) -- node[at start, left] {$b_{out}$} (vc);
\draw[very thick,<-] (\ext,0) -- node[at start, right] {$a_{out}$} (vc);
\draw[<-, dashed] (vc) -- node[at end, above] {$b_{in}$} (0,\ext);
\draw[very thick,<-] (vc) -- node[at end, below] {$a_{in}$} (0,-\ext);

\draw[very thick, ->] (-0.2,-\ext) to[out=90, in=0, looseness=1.5] (-\ext-0.3,-0.2);
\draw[very thick, ->] (0.2,-\ext) to[out=90, in=180, looseness=1.5] (\ext+0.3,-0.2);
\end{tikzpicture}
}%
\caption{A boundary cell $v_c$ (here a V-cell) and its four incident arcs (solid arcs are interior arcs, dashed arcs are in-boundary 
arcs, and dotted arcs are out-boundary arcs)~(a); together 
with the two possible ways the ant can visit $v_c$ an even positive number of 
times~(b,c). Curved solid arrows indicate the motion of the ant 
through $v_c$.}
\label{fig:cornercell}
\end{figure}

\emph{Proof of Claim 2.}
Without loss of generality, assume $v_c$ is the topmost among the leftmost cells of $D'$. By Claim~1, the two possibilities for the ant to exit $v_c$ a positive even number of times involves $a_{in}$ and $a_{out}$ as interior arcs, so $D'$ contains at least one cell to the right of $v_c$ and at least one cell below $v_c$. Consequently, the topmost among the rightmost cells of $D'$ and the leftmost among the bottommost cells of $D'$ are both distinct from $v_c$ and from each other, giving at least three distinct corner cells of $D'$.

This completes the proof of Proposition~\ref{pr:no_repeated_coloring}.
\end{proof}

\begin{corollary}
\label{pr:exit_bound}
For any finite connected domain $D$ of $d \in \mathbb{N}^*$ cells, one has  $S_D \le 2^d$.
\end{corollary}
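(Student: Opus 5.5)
The plan is to derive the bound directly from Proposition~\ref{pr:no_repeated_coloring} with a pigeonhole argument. Fix an arbitrary initial configuration $(C_0,pos_0)$ of $D$ and set $k = S_D(C_0,pos_0)$. By definition of $S_D(C_0,pos_0)$ as a minimum, for every $0 \le i \le k-1$ we have $v_{i+1}\in V_D$. Moreover $v_{k+1}\notin V_D$, while $v_k \in V_D$ (it equals $v_1$ when $k=0$; for $k\ge 1$ it is covered by the previous range). Hence all positions $pos_0,\dots,pos_k$ are arcs of $A_D$, and the trajectory $(C_i,pos_i)_{0\le i\le k}$ is a motion of the ant within $D$.

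Restrict each $C_i$ to $V_D$. Proposition~\ref{pr:no_repeated_coloring} then says the restrictions $C_0|_{V_D},\dots,C_k|_{V_D}$ are pairwise distinct. I should double-check this when writing it up, since the proposition could be read as concerning only the positions strictly before escape. There are $k+1$ such colorings, taking values among the $2^d$ functions $V_D\to\{L,R\}$. Pigeonhole gives $k+1\le 2^d$, so $k\le 2^d-1\le 2^d$. Taking the maximum over initial configurations yields $S_D\le 2^d$.

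The only delicate point is the indexing, namely which of the configurations $C_0,\dots,C_k$ the proposition actually covers. Even if it applied only to $C_0,\dots,C_{k-1}$, we would still get $k\le 2^d$, which is exactly the stated bound. So I expect no real obstacle: the argument is robust to this off-by-one ambiguity.
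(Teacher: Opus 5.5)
Your argument is correct and matches the paper's proof: apply Proposition~\ref{pr:no_repeated_coloring} to the confined trajectory and use pigeonhole over the $2^d$ colorings of $V_D$. The paper argues by contradiction, while you argue directly and even obtain the slightly sharper $S_D\le 2^d-1$. Two small points need fixing. First, when $k=0$ one has $v_k=v_0$, not $v_1$, and $v_0\in V_D$ holds because $pos_0\in A_D$ while $v_1\notin V_D$. Second, setting $k=S_D(C_0,pos_0)$ presupposes that the ant escapes; this is known from the unboundedness result quoted in the paper, and it also follows from your own pigeonhole bound applied to any confined prefix.
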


\begin{proof}
Suppose, for the sake of contradiction, that the ant is confined in $D$ for more than $2^d$ steps. Since there are exactly $2^d$ distinct color configurations of $D$, the pigeonhole principle implies that some coloring $C_D$ must occur at least twice during the ant's motion within $D$. This contradicts Proposition~\ref{pr:no_repeated_coloring}.
\end{proof}
This bound is, however, also far from being achieved in practice. In the 
following sections, we present simulation results on the escaping time over rectangular domains, which suggest that the true maximum grows 
significantly slower than $2^d$, and we establish rigorous upper bounds for square connected domains and some rectangular connected domains.

We end this section by pointing out that:
\begin{remark}\label{rm:inb_max}
In order to maximize the number of steps performed within a
finite connected domain $D$, the ant must start from an in-boundary arc of $D$. 
\end{remark}
Indeed, if the maximal number of steps were achieved by an ant starting from an interior arc, then, by reversibility of the transition function, this number could be extended, contradicting maximality.

\section{Escaping Time of Square and Rectangular Connected Domains}\label{sec:squarerectangular}

We now focus on rectangular finite connected domains. For $k, n \in \mathbb{N}^*$, we denote by $G_{k,n}$ the rectangular domain with $k$ rows and $n$ columns. The \emph{horizontal} $G_{k,n}$ is the graph equipped with the HV-partition in which the cell in the last row of the first column is an H-cell; the alternative partition is referred to as the \emph{vertical} $G_{k,n}$.
We denote by $S_{k,n}$ the maximum escaping time over both horizontal $G_{k,n}$ and vertical $G_{k,n}$.

The goal of this section is twofold: we first present exact values of 
$S_{k,n}$ obtained by computer simulation, which serve both as a 
reference and as a guide for the theoretical bounds established 
thereafter; we then prove rigorous bounds for square connected domains and some rectangular connected domains.

\subsection{Exact Escape Times via Simulation on Small Grids}\label{sec:simulation}

We computed the exact values of $S_{k,n}$ for $1 \le k, n \le 9$ 
(Table~\ref{tab:S99}). The  values $S_{n,n} - 1$ up to $n = 7$ 
were previously listed as sequence A282425 in the OEIS~\cite{oeisA282425}, 
with an error of 1 for $S_{7,7}$; our values correct this entry.

\begin{table}[H]
\centering
\begin{tabular}{c|c|c|c|c|c|c|c|c|c|}
 & 1 & 2 & 3 & 4 & 5 & 6 & 7 & 8 & 9 \\
\hline
1 & 1 & 2 & 2 & 2 & 2 & 2 & 2 & 2 & 2\\
\hline
2 & & 6 & 12 & 18 & 24 & 30 & 36 & 42 & 48 \\
\hline
3 & & & 17 & 28 & 37 & 48 & 57 & 68 & 77 \\
\hline
4 & & & & 46 & 64 & 86 & 104 & 146 & 156\\
\hline
5 & & & & & 85 & 130 & 145 & 214 & 221\\
\hline
6 & & & & & & 164 & 220 & 280 &  315\\
\hline
7 & & & & & & & 262 & 356 &410 \\
\hline
8 & & & & & & & & 488 &618 \\
\hline
9 & & & & & & & &  & 679 \\
\hline
\end{tabular}
\caption{Values of $S_{k,n}$ for $1\le k,n \le 9$.}
\label{tab:S99}
\end{table}

\paragraph*{Simulation Algorithm}
By Remark~\ref{rm:inb_max}, it suffices to consider initial 
configurations where the ant starts from an in-boundary arc. Computing the escape time $S_{k,n}$ amounts to evaluating, over all initial configurations $(C, pos)$, the number of steps before escape, where $pos$ ranges over in-boundary arcs and $C$ over all possible color configurations of $G_{k,n}$.
A naive exhaustive search over all $2^{kn}$ color configurations for each 
starting position is prohibitively expensive; we reduce this cost via 
two complementary optimizations.

\textbf{Branching exploration.}  
For a fixed starting position $pos$, instead of simulating the ant independently for every color configuration $C$, we exploit the fact that the choice of the state of a cell only matters upon its first visit. The exploration can thus be structured as a binary decision tree: whenever the ant visits a cell for the first time, the algorithm branches into two cases corresponding to the two possible initial states of that cell. Each branch then evolves deterministically according to Langton’s rule.
In this way, all configurations that coincide on the set of already visited cells share the same trajectory prefix and are explored simultaneously. This avoids redundant recomputation and allows the entire configuration space to be traversed in a single branching exploration.

\begin{figure}[H]
\centering

%
%

\newcommand{\diagS}{%
\begin{tikzpicture}[
scale=0.5,
vertex/.style={circle, draw, minimum size=2.5mm, inner sep=0pt},
fvertex/.style={circle, draw, fill=black, minimum size=2.5mm, inner sep=0pt},
gvertex/.style={circle, draw, fill=gray!60, minimum size=2.5mm, inner sep=0pt},
>=Stealth, thick]
\node[gvertex] (0v0) at (0,0) {};
\node[gvertex] (1v0) at (1.5,0) {};
\node[gvertex] (0v1) at (0,1.5) {};
\node[gvertex] (1v1) at (1.5,1.5) {};
\draw[gray] (-0.75,-0.75) -- (2.25,-0.75);
\draw[gray] (-0.75, 2.25) -- (2.25, 2.25);
\draw[gray] (-0.75,-0.75) -- (-0.75, 2.25);
\draw[gray] ( 2.25,-0.75) -- ( 2.25, 2.25);
\draw[<-] (-1.2,0) -- (0v0);
\draw[->] (0v0) -- (1v0);
\draw[<-] (1v0) -- (2.7,0);
\draw[->, double, very thick, red] (-1.2,1.5) -- (0v1);
\draw[<-] (0v1) -- (1v1);
\draw[->] (1v1) -- (2.7,1.5);
\draw[->] (0,-1.2) -- (0v0);
\draw[<-] (0v0) -- (0v1);
\draw[->] (0v1) -- (0,2.7);
\draw[<-] (1.5,-1.2) -- (1v0);
\draw[->] (1v0) -- (1v1);
\draw[<-] (1v1) -- (1.5,2.7);
\node at (0.75,-1.6) {\small $S$};
\end{tikzpicture}}

\newcommand{\diagR}{%
\begin{tikzpicture}[
scale=0.5,
vertex/.style={circle, draw, minimum size=2.5mm, inner sep=0pt},
fvertex/.style={circle, draw, fill=black, minimum size=2.5mm, inner sep=0pt},
gvertex/.style={circle, draw, fill=gray!60, minimum size=2.5mm, inner sep=0pt},
>=Stealth, thick]
\node[gvertex] (0v0) at (0,0) {};
\node[gvertex] (1v0) at (1.5,0) {};
\node[fvertex] (0v1) at (0,1.5) {};
\node[gvertex] (1v1) at (1.5,1.5) {};
\draw[gray] (-0.75,-0.75) -- (2.25,-0.75);
\draw[gray] (-0.75, 2.25) -- (2.25, 2.25);
\draw[gray] (-0.75,-0.75) -- (-0.75, 2.25);
\draw[gray] ( 2.25,-0.75) -- ( 2.25, 2.25);
\draw[<-] (-1.2,0) -- (0v0);
\draw[->] (0v0) -- (1v0);
\draw[<-] (1v0) -- (2.7,0);
\draw[->] (-1.2,1.5) -- (0v1);
\draw[<-] (0v1) -- (1v1);
\draw[->] (1v1) -- (2.7,1.5);
\draw[->] (0,-1.2) -- (0v0);
\draw[<-, double, very thick, red] (0v0) -- (0v1);
\draw[->] (0v1) -- (0,2.7);
\draw[<-] (1.5,-1.2) -- (1v0);
\draw[->] (1v0) -- (1v1);
\draw[<-] (1v1) -- (1.5,2.7);
\node at (0.75,-1.6) {\small $R$};
\end{tikzpicture}}

\newcommand{\diagL}{%
\begin{tikzpicture}[
scale=0.5,
vertex/.style={circle, draw, minimum size=2.5mm, inner sep=0pt},
fvertex/.style={circle, draw, fill=black, minimum size=2.5mm, inner sep=0pt},
gvertex/.style={circle, draw, fill=gray!60, minimum size=2.5mm, inner sep=0pt},
>=Stealth, thick]
\node[gvertex] (0v0) at (0,0) {};
\node[gvertex] (1v0) at (1.5,0) {};
\node[vertex]  (0v1) at (0,1.5) {};
\node[gvertex] (1v1) at (1.5,1.5) {};
\draw[gray] (-0.75,-0.75) -- (2.25,-0.75);
\draw[gray] (-0.75, 2.25) -- (2.25, 2.25);
\draw[gray] (-0.75,-0.75) -- (-0.75, 2.25);
\draw[gray] ( 2.25,-0.75) -- ( 2.25, 2.25);
\draw[<-] (-1.2,0) -- (0v0);
\draw[->] (0v0) -- (1v0);
\draw[<-] (1v0) -- (2.7,0);
\draw[->] (-1.2,1.5) -- (0v1);
\draw[<-] (0v1) -- (1v1);
\draw[->] (1v1) -- (2.7,1.5);
\draw[->] (0,-1.2) -- (0v0);
\draw[<-] (0v0) -- (0v1);
\draw[->, double, very thick, red] (0v1) -- (0,2.7);
\draw[<-] (1.5,-1.2) -- (1v0);
\draw[->] (1v0) -- (1v1);
\draw[<-] (1v1) -- (1.5,2.7);
\node at (0.75,-1.6) {\small $L$};
\end{tikzpicture}}

\newcommand{\diagRR}{%
\begin{tikzpicture}[
scale=0.5,
vertex/.style={circle, draw, minimum size=2.5mm, inner sep=0pt},
fvertex/.style={circle, draw, fill=black, minimum size=2.5mm, inner sep=0pt},
gvertex/.style={circle, draw, fill=gray!60, minimum size=2.5mm, inner sep=0pt},
>=Stealth, thick]
\node[fvertex] (0v0) at (0,0) {};
\node[gvertex] (1v0) at (1.5,0) {};
\node[fvertex] (0v1) at (0,1.5) {};
\node[gvertex] (1v1) at (1.5,1.5) {};
\draw[gray] (-0.75,-0.75) -- (2.25,-0.75);
\draw[gray] (-0.75, 2.25) -- (2.25, 2.25);
\draw[gray] (-0.75,-0.75) -- (-0.75, 2.25);
\draw[gray] ( 2.25,-0.75) -- ( 2.25, 2.25);
\draw[<-, double, very thick, red] (-1.2,0) -- (0v0);
\draw[->] (0v0) -- (1v0);
\draw[<-] (1v0) -- (2.7,0);
\draw[->] (-1.2,1.5) -- (0v1);
\draw[<-] (0v1) -- (1v1);
\draw[->] (1v1) -- (2.7,1.5);
\draw[->] (0,-1.2) -- (0v0);
\draw[<-] (0v0) -- (0v1);
\draw[->] (0v1) -- (0,2.7);
\draw[<-] (1.5,-1.2) -- (1v0);
\draw[->] (1v0) -- (1v1);
\draw[<-] (1v1) -- (1.5,2.7);
\node at (0.75,-1.6) {\small $R.R$};
\end{tikzpicture}}

\newcommand{\diagRL}{%
\begin{tikzpicture}[
scale=0.5,
vertex/.style={circle, draw, minimum size=2.5mm, inner sep=0pt},
fvertex/.style={circle, draw, fill=black, minimum size=2.5mm, inner sep=0pt},
gvertex/.style={circle, draw, fill=gray!60, minimum size=2.5mm, inner sep=0pt},
>=Stealth, thick]
\node[vertex]  (0v0) at (0,0) {};
\node[gvertex] (1v0) at (1.5,0) {};
\node[fvertex] (0v1) at (0,1.5) {};
\node[gvertex] (1v1) at (1.5,1.5) {};
\draw[gray] (-0.75,-0.75) -- (2.25,-0.75);
\draw[gray] (-0.75, 2.25) -- (2.25, 2.25);
\draw[gray] (-0.75,-0.75) -- (-0.75, 2.25);
\draw[gray] ( 2.25,-0.75) -- ( 2.25, 2.25);
\draw[<-] (-1.2,0) -- (0v0);
\draw[->, double, very thick, red] (0v0) -- (1v0);
\draw[<-] (1v0) -- (2.7,0);
\draw[->] (-1.2,1.5) -- (0v1);
\draw[<-] (0v1) -- (1v1);
\draw[->] (1v1) -- (2.7,1.5);
\draw[->] (0,-1.2) -- (0v0);
\draw[<-] (0v0) -- (0v1);
\draw[->] (0v1) -- (0,2.7);
\draw[<-] (1.5,-1.2) -- (1v0);
\draw[->] (1v0) -- (1v1);
\draw[<-] (1v1) -- (1.5,2.7);
\node at (0.75,-1.6) {\small $R.L$};
\end{tikzpicture}}

\newcommand{\diagRLR}{%
\begin{tikzpicture}[
scale=0.5,
vertex/.style={circle, draw, minimum size=2.5mm, inner sep=0pt},
fvertex/.style={circle, draw, fill=black, minimum size=2.5mm, inner sep=0pt},
gvertex/.style={circle, draw, fill=gray!60, minimum size=2.5mm, inner sep=0pt},
>=Stealth, thick]
\node[vertex]  (0v0) at (0,0) {};
\node[fvertex] (1v0) at (1.5,0) {};
\node[fvertex] (0v1) at (0,1.5) {};
\node[gvertex] (1v1) at (1.5,1.5) {};
\draw[gray] (-0.75,-0.75) -- (2.25,-0.75);
\draw[gray] (-0.75, 2.25) -- (2.25, 2.25);
\draw[gray] (-0.75,-0.75) -- (-0.75, 2.25);
\draw[gray] ( 2.25,-0.75) -- ( 2.25, 2.25);
\draw[<-] (-1.2,0) -- (0v0);
\draw[->] (0v0) -- (1v0);
\draw[<-] (1v0) -- (2.7,0);
\draw[->] (-1.2,1.5) -- (0v1);
\draw[<-] (0v1) -- (1v1);
\draw[->] (1v1) -- (2.7,1.5);
\draw[->] (0,-1.2) -- (0v0);
\draw[<-] (0v0) -- (0v1);
\draw[->] (0v1) -- (0,2.7);
\draw[<-, double, very thick, red] (1.5,-1.2) -- (1v0);
\draw[->] (1v0) -- (1v1);
\draw[<-] (1v1) -- (1.5,2.7);
\node at (0.75,-1.6) {\small $R.L.R$};
\end{tikzpicture}}

\newcommand{\diagRLL}{%
\begin{tikzpicture}[
scale=0.5,
vertex/.style={circle, draw, minimum size=2.5mm, inner sep=0pt},
fvertex/.style={circle, draw, fill=black, minimum size=2.5mm, inner sep=0pt},
gvertex/.style={circle, draw, fill=gray!60, minimum size=2.5mm, inner sep=0pt},
>=Stealth, thick]
\node[vertex]  (0v0) at (0,0) {};
\node[vertex]  (1v0) at (1.5,0) {};
\node[fvertex] (0v1) at (0,1.5) {};
\node[gvertex] (1v1) at (1.5,1.5) {};
\draw[gray] (-0.75,-0.75) -- (2.25,-0.75);
\draw[gray] (-0.75, 2.25) -- (2.25, 2.25);
\draw[gray] (-0.75,-0.75) -- (-0.75, 2.25);
\draw[gray] ( 2.25,-0.75) -- ( 2.25, 2.25);
\draw[<-] (-1.2,0) -- (0v0);
\draw[->] (0v0) -- (1v0);
\draw[<-] (1v0) -- (2.7,0);
\draw[->] (-1.2,1.5) -- (0v1);
\draw[<-] (0v1) -- (1v1);
\draw[->] (1v1) -- (2.7,1.5);
\draw[->] (0,-1.2) -- (0v0);
\draw[<-] (0v0) -- (0v1);
\draw[->] (0v1) -- (0,2.7);
\draw[<-] (1.5,-1.2) -- (1v0);
\draw[->, double, very thick, red] (1v0) -- (1v1);
\draw[<-] (1v1) -- (1.5,2.7);
\node at (0.75,-1.6) {\small $R.L.L$};
\end{tikzpicture}}

\newcommand{\diagRLLR}{%
\begin{tikzpicture}[
scale=0.5,
vertex/.style={circle, draw, minimum size=2.5mm, inner sep=0pt},
fvertex/.style={circle, draw, fill=black, minimum size=2.5mm, inner sep=0pt},
gvertex/.style={circle, draw, fill=gray!60, minimum size=2.5mm, inner sep=0pt},
>=Stealth, thick]
\node[vertex]  (0v0) at (0,0) {};
\node[vertex]  (1v0) at (1.5,0) {};
\node[fvertex] (0v1) at (0,1.5) {};
\node[fvertex] (1v1) at (1.5,1.5) {};
\draw[gray] (-0.75,-0.75) -- (2.25,-0.75);
\draw[gray] (-0.75, 2.25) -- (2.25, 2.25);
\draw[gray] (-0.75,-0.75) -- (-0.75, 2.25);
\draw[gray] ( 2.25,-0.75) -- ( 2.25, 2.25);
\draw[<-] (-1.2,0) -- (0v0);
\draw[->] (0v0) -- (1v0);
\draw[<-] (1v0) -- (2.7,0);
\draw[->] (-1.2,1.5) -- (0v1);
\draw[<-] (0v1) -- (1v1);
\draw[->, double, very thick, red] (1v1) -- (2.7,1.5);
\draw[->] (0,-1.2) -- (0v0);
\draw[<-] (0v0) -- (0v1);
\draw[->] (0v1) -- (0,2.7);
\draw[<-] (1.5,-1.2) -- (1v0);
\draw[->] (1v0) -- (1v1);
\draw[<-] (1v1) -- (1.5,2.7);
\node at (0.75,-1.6) {\small $R.L.L.R$};
\end{tikzpicture}}

\newcommand{\diagRLLL}{%
\begin{tikzpicture}[
scale=0.5,
vertex/.style={circle, draw, minimum size=2.5mm, inner sep=0pt},
fvertex/.style={circle, draw, fill=black, minimum size=2.5mm, inner sep=0pt},
gvertex/.style={circle, draw, fill=gray!60, minimum size=2.5mm, inner sep=0pt},
>=Stealth, thick]
\node[vertex]  (0v0) at (0,0) {};
\node[vertex]  (1v0) at (1.5,0) {};
\node[fvertex] (0v1) at (0,1.5) {};
\node[vertex]  (1v1) at (1.5,1.5) {};
\draw[gray] (-0.75,-0.75) -- (2.25,-0.75);
\draw[gray] (-0.75, 2.25) -- (2.25, 2.25);
\draw[gray] (-0.75,-0.75) -- (-0.75, 2.25);
\draw[gray] ( 2.25,-0.75) -- ( 2.25, 2.25);
\draw[<-] (-1.2,0) -- (0v0);
\draw[->] (0v0) -- (1v0);
\draw[<-] (1v0) -- (2.7,0);
\draw[->] (-1.2,1.5) -- (0v1);
\draw[<-, double, very thick, red] (0v1) -- (1v1);
\draw[->] (1v1) -- (2.7,1.5);
\draw[->] (0,-1.2) -- (0v0);
\draw[<-] (0v0) -- (0v1);
\draw[->] (0v1) -- (0,2.7);
\draw[<-] (1.5,-1.2) -- (1v0);
\draw[->] (1v0) -- (1v1);
\draw[<-] (1v1) -- (1.5,2.7);
\node at (0.75,-1.6) {\small $R.L.L.L$};
\end{tikzpicture}}

\newcommand{\diagRLLLtwo}{%
\begin{tikzpicture}[
scale=0.5,
vertex/.style={circle, draw, minimum size=2.5mm, inner sep=0pt},
fvertex/.style={circle, draw, fill=black, minimum size=2.5mm, inner sep=0pt},
gvertex/.style={circle, draw, fill=gray!60, minimum size=2.5mm, inner sep=0pt},
>=Stealth, thick]
\node[vertex]  (0v0) at (0,0) {};
\node[vertex]  (1v0) at (1.5,0) {};
\node[vertex]  (0v1) at (0,1.5) {};
\node[vertex]  (1v1) at (1.5,1.5) {};
\draw[gray] (-0.75,-0.75) -- (2.25,-0.75);
\draw[gray] (-0.75, 2.25) -- (2.25, 2.25);
\draw[gray] (-0.75,-0.75) -- (-0.75, 2.25);
\draw[gray] ( 2.25,-0.75) -- ( 2.25, 2.25);
\draw[<-] (-1.2,0) -- (0v0);
\draw[->] (0v0) -- (1v0);
\draw[<-] (1v0) -- (2.7,0);
\draw[->] (-1.2,1.5) -- (0v1);
\draw[<-] (0v1) -- (1v1);
\draw[->] (1v1) -- (2.7,1.5);
\draw[->] (0,-1.2) -- (0v0);
\draw[<-, double, very thick, red] (0v0) -- (0v1);
\draw[->] (0v1) -- (0,2.7);
\draw[<-] (1.5,-1.2) -- (1v0);
\draw[->] (1v0) -- (1v1);
\draw[<-] (1v1) -- (1.5,2.7);
\node at (0.75,-1.6) {\small $R.L.L.L2$};
\end{tikzpicture}}

\newcommand{\diagRLLLthree}{%
\begin{tikzpicture}[
scale=0.5,
vertex/.style={circle, draw, minimum size=2.5mm, inner sep=0pt},
fvertex/.style={circle, draw, fill=black, minimum size=2.5mm, inner sep=0pt},
gvertex/.style={circle, draw, fill=gray!60, minimum size=2.5mm, inner sep=0pt},
>=Stealth, thick]
\node[fvertex] (0v0) at (0,0) {};
\node[vertex]  (1v0) at (1.5,0) {};
\node[vertex]  (0v1) at (0,1.5) {};
\node[vertex]  (1v1) at (1.5,1.5) {};
\draw[gray] (-0.75,-0.75) -- (2.25,-0.75);
\draw[gray] (-0.75, 2.25) -- (2.25, 2.25);
\draw[gray] (-0.75,-0.75) -- (-0.75, 2.25);
\draw[gray] ( 2.25,-0.75) -- ( 2.25, 2.25);
\draw[<-, double, very thick, red] (-1.2,0) -- (0v0);
\draw[->] (0v0) -- (1v0);
\draw[<-] (1v0) -- (2.7,0);
\draw[->] (-1.2,1.5) -- (0v1);
\draw[<-] (0v1) -- (1v1);
\draw[->] (1v1) -- (2.7,1.5);
\draw[->] (0,-1.2) -- (0v0);
\draw[<-] (0v0) -- (0v1);
\draw[->] (0v1) -- (0,2.7);
\draw[<-] (1.5,-1.2) -- (1v0);
\draw[->] (1v0) -- (1v1);
\draw[<-] (1v1) -- (1.5,2.7);
\node at (0.75,-1.6) {\small $R.L.L.L3$};
\end{tikzpicture}}

%


\setlength{\tabcolsep}{6pt}
\renewcommand{\arraystretch}{2}

\begin{tikzpicture}[remember picture]

\node (matrix) {
\begin{tabular}{ccccc}

\tikz[remember picture]\node (S) {\diagS}; &
\tikz[remember picture]\node (R) {\diagR}; &
\tikz[remember picture]\node (RR) {\diagRR}; &
\tikz[remember picture]\node (RLR) {\diagRLR}; &
\tikz[remember picture]\node (RLLR) {\diagRLLR}; \\

&
\tikz[remember picture]\node (L) {\diagL}; &
\tikz[remember picture]\node (RL) {\diagRL}; &
\tikz[remember picture]\node (RLL) {\diagRLL}; &
\tikz[remember picture]\node (RLLL) {\diagRLLL}; \\

&
&
&
\tikz[remember picture]\node (RLLLthree) {\diagRLLLthree}; &
\tikz[remember picture]\node (RLLLtwo) {\diagRLLLtwo}; \\

\end{tabular}
};


\draw[->,thick, shorten >=-2mm, shorten <=-2mm, blue]
(S.east) -- (R.west);

\draw[->,thick, shorten >=-4mm, shorten <=-4mm, blue]
(S.south east) to[bend right=20] (L.north west);

\draw[->,thick, shorten >=-2mm, shorten <=-2mm, blue]
(R.east) -- (RR.west);

\draw[->,thick, shorten >=-4mm, shorten <=-4mm, blue]
(R.south east) to[bend right=20] (RL.north west);

\draw[->,thick, shorten >=-2mm, shorten <=-2mm, blue]
(RL.east) -- (RLL.west);

\draw[->,thick, shorten >=-4mm, shorten <=-4mm, blue]
(RL.north east) to[bend right=20] (RLR.south west);

\draw[->,thick, shorten >=-2mm, shorten <=-2mm, blue]
(RLL.east) -- (RLLL.west);

\draw[->,thick, shorten >=-4mm, shorten <=-4mm, blue]
(RLL.north east) to[bend right=20] (RLLR.south west);

\draw[->,thick, shorten >=-2mm, shorten <=-2mm, blue]
(RLLL.south) -- (RLLLtwo.north);

\draw[->,thick, shorten >=-2mm, shorten <=-2mm, blue]
(RLLLtwo.west) -- (RLLLthree.east);

\end{tikzpicture}

\caption{Illustration of the branching exploration on $G_{2,2}$. Blue arrows indicate transitions in the exploration tree. Some transitions correspond to branching decisions when the ant first visits a previously unvisited cell, while others represent deterministic continuations of the exploration. The double red arc highlights the current position of the ant. Black and white vertices represent cells in states $L$ and $R$, respectively, while gray vertices correspond to cells whose initial state has not yet been determined. Labels such as R, L, R.L, R.L.L, etc., record the sequence of state assignments made during branching, starting from the root (S) of the exploration tree.}
\label{fig:branchingexplo}
\end{figure}

\textbf{Symmetry reduction.}  
The number of starting positions can also be reduced using geometric symmetries of the grid. Since the ant’s dynamics are invariant under reflections and rotations of the rectangular domain, many in-boundary arcs lead to equivalent trajectories.
It therefore suffices to consider only the in-boundary arcs associated with $\lceil k/2 \rceil$ consecutive cells of a boundary column (for horizontal starting positions), beginning at a corner cell, together with those associated with $\lceil n/2 \rceil$ consecutive cells of a boundary row (for vertical starting positions), also beginning at a corner cell.
When $k = n$, the additional symmetry between rows and columns makes these two families equivalent, so that it is enough to consider a single boundary side.
Moreover, instead of launching a separate branching exploration for each of these representative starting positions of a side, we aggregate them into a single exploration. For the horizontal case, this is achieved by extending $G_{k,n}$ upward with $\left\lceil \frac{k}{2} \right\rceil - 1$ additional rows, yielding a working grid of height $k + \left\lceil \frac{k}{2} \right\rceil - 1$. The exploration is then initiated from the in-boundary arc of the cell that originally lay on the top row of $G_{k,n}$.
During the exploration, each branch is followed until the vertical span of the visited cells exceeds $k$. This condition precisely captures the escape of the ant for one of the original starting positions represented in the merged construction.
A symmetric construction is used for vertical starting positions when $k \neq n$, and the overall escape time is obtained by taking the maximum over both orientations.

\begin{figure}[H]
\centering

\begin{minipage}{0.48\textwidth}
\centering

\begin{tikzpicture}[
scale=0.6,
vertex/.style={circle, draw, minimum size=2.5mm, inner sep=0pt},
fvertex/.style={circle, draw, fill=black, minimum size=2.5mm, inner sep=0pt},
gvertex/.style={circle, draw, fill=gray!60, minimum size=2.5mm, inner sep=0pt},
>=Stealth,
thick
]

\def\k{5}
\def\n{5}
\def\spacing{1.5}
\def\externalarrow{1.0}

\pgfmathsetmacro{\gridleft}{-\spacing/2}
\pgfmathsetmacro{\gridright}{(\n-1)*\spacing+\spacing/2}
\pgfmathsetmacro{\gridbottom}{-\spacing/2}
\pgfmathsetmacro{\gridtop}{(\k-1)*\spacing+\spacing/2}

\foreach \i in {0,...,\n} {
    \pgfmathsetmacro{\xline}{(\i-0.5)*\spacing}
    \draw[gray] (\xline,\gridbottom) -- (\xline,\gridtop);
}

\foreach \j in {0,...,\k} {
    \pgfmathsetmacro{\yline}{(\j-0.5)*\spacing}
    \draw[gray] (\gridleft,\yline) -- (\gridright,\yline);
}

\foreach \i in {0,...,\numexpr\n-1} {
    \foreach \j in {0,...,\numexpr\k-1} {
        \pgfmathsetmacro{\x}{\i*\spacing}
        \pgfmathsetmacro{\y}{\j*\spacing}
        \node[gvertex] (\i v\j) at (\x,\y) {};
    }
}

\foreach \j in {0,...,\numexpr\k-1} {
    \pgfmathsetmacro{\y}{\j*\spacing}
    \draw (-\externalarrow,\y) -- (0v\j);
    \foreach \i in {0,...,\numexpr\n-2} {
        \draw (\i v\j) -- (\the\numexpr\i+1\relax v\j);
    }
    \pgfmathsetmacro{\xright}{(\n-1)*\spacing+\externalarrow}
    \draw (\the\numexpr\n-1\relax v\j) -- (\xright,\y);
}

\foreach \i in {0,...,\numexpr\n-1} {
    \pgfmathsetmacro{\x}{\i*\spacing}
    \draw (\x,-\externalarrow) -- (\i v0);
    \foreach \j in {0,...,\numexpr\k-2} {
        \draw (\i v\j) -- (\i v\the\numexpr\j+1\relax);
    }
    \pgfmathsetmacro{\ytop}{(\k-1)*\spacing+\externalarrow}
    \draw (\i v\the\numexpr\k-1\relax) -- (\x,\ytop);
}

\draw[->, double, ultra thick, blue]
(-\externalarrow -0.5, 2*\spacing) -- (0v2);

\draw[->, double, ultra thick, red]
(-\externalarrow -0.5, 3*\spacing) -- (0v3);

\draw[->, double, ultra thick, green!60!black]
(-\externalarrow -0.5, 4*\spacing) -- (0v4);

\end{tikzpicture}

\end{minipage}
\hfill
\begin{minipage}{0.48\textwidth}
\centering

\begin{tikzpicture}[
scale=0.6,
vertex/.style={circle, draw, minimum size=2.5mm, inner sep=0pt},
fvertex/.style={circle, draw, fill=black, minimum size=2.5mm, inner sep=0pt},
gvertex/.style={circle, draw, fill=gray!60, minimum size=2.5mm, inner sep=0pt},
>=Stealth,
thick
]

\def\k{7}
\def\n{5}
\def\spacing{1.5}
\def\externalarrow{1.0}

\pgfmathsetmacro{\gridleft}{-\spacing/2}
\pgfmathsetmacro{\gridright}{(\n-1)*\spacing+\spacing/2}
\pgfmathsetmacro{\gridbottom}{-\spacing/2}
\pgfmathsetmacro{\gridtop}{(\k-1)*\spacing+\spacing/2}

\foreach \i in {0,...,\n} {
    \pgfmathsetmacro{\xline}{(\i-0.5)*\spacing}
    \draw[gray] (\xline,\gridbottom) -- (\xline,\gridtop);
}

\foreach \j in {0,...,\k} {
    \pgfmathsetmacro{\yline}{(\j-0.5)*\spacing}
    \draw[gray] (\gridleft,\yline) -- (\gridright,\yline);
}

\foreach \i in {0,...,\numexpr\n-1} {
    \foreach \j in {0,...,\numexpr\k-1} {
        \pgfmathsetmacro{\x}{\i*\spacing}
        \pgfmathsetmacro{\y}{\j*\spacing}
        \node[gvertex] (\i v\j) at (\x,\y) {};
    }
}

\foreach \j in {0,...,\numexpr\k-1} {
    \pgfmathsetmacro{\y}{\j*\spacing}
    \draw (-\externalarrow,\y) -- (0v\j);
    \foreach \i in {0,...,\numexpr\n-2} {
        \draw (\i v\j) -- (\the\numexpr\i+1\relax v\j);
    }
    \pgfmathsetmacro{\xright}{(\n-1)*\spacing+\externalarrow}
    \draw (\the\numexpr\n-1\relax v\j) -- (\xright,\y);
}

\foreach \i in {0,...,\numexpr\n-1} {
    \pgfmathsetmacro{\x}{\i*\spacing}
    \draw (\x,-\externalarrow) -- (\i v0);
    \foreach \j in {0,...,\numexpr\k-2} {
        \draw (\i v\j) -- (\i v\the\numexpr\j+1\relax);
    }
    \pgfmathsetmacro{\ytop}{(\k-1)*\spacing+\externalarrow}
    \draw (\i v\the\numexpr\k-1\relax) -- (\x,\ytop);
}

\draw[very thick, green!60!black]
(-0.8,-0.8) rectangle (5*\spacing -0.7,5*\spacing -0.7);

\draw[very thick, red]
(-0.9,0.6) rectangle (5*\spacing -0.6,5*\spacing +0.9);

\draw[very thick, blue]
(-1.05,2.1) rectangle (5*\spacing -0.45,5*\spacing +2.4);

\draw[->, double, ultra thick]
(-\externalarrow -0.5, 4*\spacing) -- (0v4);

\end{tikzpicture}

\end{minipage}

\caption{Illustration of the symmetry reduction for $G_{5,5}$. Left: by symmetry, it suffices to consider the three representative starting positions shown by the colored arrows. Right: these starting positions are merged into a single branching exploration by extending the grid upward. The exploration is initiated from the unique starting position indicated by the double black arrow. The colored rectangles represent the original  grid translated to match the three representative starting positions. A branch is terminated as soon as the vertical span of its visited cells exceeds, which corresponds to the ant escaping from the original grid for one of the represented starting positions.}

\label{fig:symreduction}

\end{figure}
\medskip
The branching exploration is implemented via a depth-first backtracking algorithm. During the forward phase, the ant is simulated until it escapes the admissible region. Each first visit to a cell is recorded as a branching point together with the current height and vertical bounds. When a branch terminates, the algorithm retraces its steps backward, restoring the previous state and identifying the most recent cell where an unexplored alternative remains. The exploration then resumes from this point with the alternate state. This systematic traversal guarantees that every feasible branch (and thus every configuration consistent with the symmetry reduction) is explored.

\subsection{Upper Bound of the Escaping time of Square Domains}\label{sec:square}
In this section, we consider the  square finite connected domain $G_n = G_{n,n}$ with $n \in \mathbb{N}^*$ and investigate the growth of $S_n = S_{n,n}$. Note that $G_n$ has $2 n^2 \cdot 2^{n^2} $ configurations and by Corollary~\ref{pr:exit_bound}, $S_n \le 2^{n^2} $.  In this section, we derive an improved upper bound using an inductive approach, expressing $S_n$ in terms of $S_{n-2}$.

\begin{theorem}
For each positive integer $n $, we have $S_n \le (n+1)!$.
\end{theorem}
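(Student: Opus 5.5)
We argue by strong induction on $n$, with a recurrence of the form $S_n \le (n+1)\,n\,S_{n-2}$ (or something at least as good). Together with base cases $n=1,2$ this gives $S_n\le (n+1)!$. The base cases come from Table~\ref{tab:S99}: $S_1=1\le 2!$ and $S_2=6\le 3!$.

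Write $G_n$ as the union of its boundary layer $B_n$, the $4(n-1)$ cells of the outer ring, and its interior, which is a copy of $G_{n-2}$. We may take the ant to start on an in-boundary arc (Remark~\ref{rm:inb_max}). Its trajectory inside $G_n$ then alternates between maximal \emph{excursions} into the interior $G_{n-2}$ and maximal stretches spent on cells of $B_n$.

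\textbf{Step 1: excursions.} Each excursion into the interior is a motion of the ant within $G_{n-2}$ that enters through an in-boundary arc of $G_{n-2}$ and stops when it leaves through an out-boundary arc. The color configuration of the interior at the moment of entry is arbitrary, so each excursion lasts at most $S_{n-2}+1$ steps, counting the exit step.

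\textbf{Step 2: boundary-cell visits.} Each step that is not inside an excursion exits a cell of $B_n$. Between two consecutive excursions the ant is on boundary cells, and every excursion is entered from a boundary cell. So the total time is at most
\[
(\#\text{excursions})\cdot(S_{n-2}+1)+(\#\text{boundary-cell exits}),
\]
and both counts are controlled by the number of exits from boundary cells. The key quantitative claim is therefore that each cell of $B_n$ is exited only a bounded number of times, say at most twice or some small constant. With $4(n-1)$ boundary cells this bounds the number of excursions by $O(n)$, and then $S_n\le c\,n\,S_{n-2}+O(n)$. I then need to fine-tune the constant, using the exact small values, so that $c\,n\,(S_{n-2}+1)+O(n)\le (n+1)n\,(n-1)!=(n+1)!$. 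There is a lot of slack, since $(n+1)n$ against $cn$ leaves a spare factor of roughly $n$.

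\textbf{Main obstacle: bounding visits to boundary cells.} For corner cells of $G_n$ this should follow from the structure used in the proof of Proposition~\ref{pr:no_repeated_coloring}. A corner cell has one in-boundary and one out-boundary arc. Once the ant enters through the outside in-arc it cannot return through that arc, and alternation of the cell's state forces the ant out after at most two visits through the interior in-arc.

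For a general boundary cell I would try the following. Such a cell has one arc to the outside. By the HV-partition that arc is either incoming or outgoing. If it is outgoing, the cell's color alternates, so after at most two entries through the two interior in-arcs the ant leaves through the outgoing boundary arc. Hence the cell is exited at most about $2$–$3$ times. If the outside arc is incoming, the ant can enter from outside only at the start, so later visits come only from the single interior in-arc. This case is the delicate one. Here I would argue inductively along the ring, in the spirit of Troubetzkoy's corner argument: a cell visited many times forces its ring-neighbours to be visited many times, and this propagates to a corner, contradicting the corner bound. If only a weaker, $n$-dependent bound per cell can be obtained, I would instead bound the number of excursions via Proposition~\ref{pr:no_repeated_coloring} applied to the boundary colorings between excursions. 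That route would feed a bound like $(n+1)$ re-entries per side into the recurrence, which still fits the factorial target.
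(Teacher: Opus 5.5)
Your overall structure is the paper's: the ring $B_n$ plus the interior $G_{n-2}$, with excursions into the interior of length at most $S_{n-2}$ plus a few steps. However, there is a genuine gap exactly at the step you call the main obstacle, and the slack you rely on does not exist.

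The claim that every cell of $B_n$ is exited only $O(1)$ times is not established, and your argument for the case of an outgoing outside arc is wrong. Take a non-corner cell of the top row whose outside arc is outgoing. It is an H-cell: it is entered only horizontally, from its two row-neighbours, and leaves either upward (escaping $G_n$) or downward into $G_{n-2}$. Entering from the left it goes down iff its state is $R$; entering from the right it goes down iff its state is $L$. Since every visit flips the state, it sends the ant down \emph{every} time, as long as the entries alternate between left and right. Color alternation therefore only forces the left and right entry counts to differ by at most one. Likewise the V-cells of the row, fed from below, alternate their exits left and right without limit.

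Propagating these constraints to the corners (your Troubetzkoy-style idea) shows that the number of bounces through a given horizontal arc of a side is at most its distance to the nearer corner. That bound is linear, not constant. It gives at most $n^2/4$ bounces per side and hence $X_n\le n^2$ re-entries into $G_{n-2}$, i.e.\ $\Theta(n^2)$ rather than $O(n)$ excursions.

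Your fallback through Proposition~\ref{pr:no_repeated_coloring} does not repair this. That proposition concerns the full coloring of a domain the ant does not leave. Here the ant leaves $B_n$ at every excursion, and a boundary coloring can recur with a different interior. Even if it applied, it would only bound the re-entries exponentially, not by $n+1$ per side.

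With $X_n\le n^2$ the recurrence reads $S_n\le (n^2+1)S_{n-2}+3n^2+6$, to be compared with the target factor $(n+1)n=n^2+n$. There is no spare factor of $n$. The leading constant $1$ in $X_n\le n^2$ is essential: a bound like $2n^2$ would cost an extra factor of about $2^{n/2}$ and break $(n+1)!$. The additive terms must also be absorbed carefully. The paper does this as follows:
\begin{itemize}
\item it sets $T_n=S_n+3$;
\item it derives $T_n\le (n^2+1)T_{n-2}+6\le (n+1)n\,T_{n-2}$, using $6<(n-1)T_{n-2}$;
\item it starts the induction from the simulated values $T_3=20<4!$ and $T_4=49<5!$.
\end{itemize}
Your base cases $n=1,2$ alone do not suffice, since the recurrence from $S_1=1$ only gives $S_3\le 43>4!$.
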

\begin{proof}
By simulation:  $S_1=1$ and $(1+1)!=2$, $S_2=6$ and $(2+1)!=6$ thus the inequality is true for $n\in \{1,2\}$. 

For $n \ge 3$, observe that $G_n$ can be decomposed as the union of $G_{n-2}$ and $B_n$, where $B_n$ denotes the connected domain on the boundary cells of $G_n$ (see Figure~\ref{fig:sqgrid_decomposition}). This decomposition provides the basis for our inductive argument.

\begin{figure}[htbp]
\centering
\begin{tikzpicture}[
scale=0.5,
vertex/.style={circle, draw, minimum size=2.5mm, inner sep=0pt},
fvertex/.style={circle, draw, fill=black, minimum size=2.5mm, inner sep=0pt}, 
gvertex/.style={circle, draw, fill=gray!60, minimum size=2.5mm, inner sep=0pt}, 
>=Stealth,
thick
]
\def\k{6} 
\def\n{6} 
\def\spacing{1.5}
\def\externalarrow{1.0}

\pgfmathsetmacro{\gridleft}{-\spacing/2}
\pgfmathsetmacro{\gridright}{(\n-1)*\spacing+\spacing/2}
\pgfmathsetmacro{\gridbottom}{-\spacing/2}
\pgfmathsetmacro{\gridtop}{(\k-1)*\spacing+\spacing/2}

\fill[gray!50] (\gridleft, \gridbottom) rectangle (\gridright, \gridtop);
\fill[white] (0.5*\spacing, 0.5*\spacing) rectangle ({(\n-2)*\spacing+0.5*\spacing}, {(\k-2)*\spacing+0.5*\spacing});

\foreach \i in {0,...,\numexpr\n-1} {
    \foreach \j in {0,...,\numexpr\k-1} {
        \pgfmathsetmacro{\x}{\i*\spacing}
        \pgfmathsetmacro{\y}{\j*\spacing}
        \node[gvertex] (\i v\j) at (\x,\y) {};
    }
}

\foreach \j in {0,...,\numexpr\k-1} {
    \pgfmathsetmacro{\y}{\j*\spacing}
    \pgfmathsetmacro{\dir}{int(mod(\j,2)==0?1:0)}
    \ifnum\dir=1
        \draw[->] (-\externalarrow,\y) -- (0v\j);
    \else
        \draw[<-] (-\externalarrow,\y) -- (0v\j);
    \fi
    \foreach \i in {0,...,\numexpr\n-2} {
        \pgfmathsetmacro{\arrowdir}{int(mod(\i+\j,2)==0?1:0)}
        \ifnum\arrowdir=1
            \draw[<-] (\i v\j) -- (\the\numexpr\i+1\relax v\j);
        \else
            \draw[->] (\i v\j) -- (\the\numexpr\i+1\relax v\j);
        \fi
    }
    \pgfmathsetmacro{\xright}{(\n-1)*\spacing+\externalarrow}
    \pgfmathsetmacro{\lastdir}{int(mod(\n-1+\j,2)==0?1:0)}
    \ifnum\lastdir=1
        \draw[<-] (\the\numexpr\n-1\relax v\j) -- (\xright,\y);
    \else
        \draw[->] (\the\numexpr\n-1\relax v\j) -- (\xright,\y);
    \fi
}

\foreach \i in {0,...,\numexpr\n-1} {
    \pgfmathsetmacro{\x}{\i*\spacing}
    \pgfmathsetmacro{\dir}{int(mod(\i,2)==0?1:0)}
    \ifnum\dir=1
        \draw[<-] (\x,-\externalarrow) -- (\i v0);
    \else
        \draw[->] (\x,-\externalarrow) -- (\i v0);
    \fi
    \foreach \j in {0,...,\numexpr\k-2} {
        \pgfmathsetmacro{\arrowdir}{int(mod(\i+\j,2)==0?1:0)}
        \ifnum\arrowdir=1
            \draw[->] (\i v\j) -- (\i v\the\numexpr\j+1\relax);
        \else
            \draw[<-] (\i v\j) -- (\i v\the\numexpr\j+1\relax);
        \fi
    }
    \pgfmathsetmacro{\ytop}{(\k-1)*\spacing+\externalarrow}
    \pgfmathsetmacro{\lastdir}{int(mod(\i+\k-1,2)==0?1:0)}
    \ifnum\lastdir=1
        \draw[->] (\i v\the\numexpr\k-1\relax) -- (\x,\ytop);
    \else
        \draw[<-] (\i v\the\numexpr\k-1\relax) -- (\x,\ytop);
    \fi
}
\end{tikzpicture}
\caption{Decomposition of $G_6$ as the union of boundary cells $B_6$ (gray shaded region)  and interior grid $G_4$ (white region)}
\label{fig:sqgrid_decomposition}
\end{figure}

To  maximize the number of steps on $G_n$, the ant must start at an in-boundary arc of $G_n$ (Remark \ref{rm:inb_max}). From this starting position, it reaches an in-boundary arc of $G_{n-2}$ within at most two or three steps when a corner cell is involved. Then it performs some number of steps within $G_{n-2}$ before reaching an out-boundary arc of $G_{n-2}$ thereby returning to $B_n$.
Once in $B_n$, the ant undergoes a  bouncing phase before re-entering $G_{n-2}$. Each such bouncing phase requires at most two steps or three steps when a corner cell is involved. The process then repeats: the ant alternates between phases inside $G_{n-2}$ and bouncing phases within $B_n$ until it finally exits $G_{n-2}$ for the last time and reaches an out-boundary arc of $G_{n}$ in at most three  additional steps.  Suppose that the ant returns to $G_{n-2}$ at most $X_n$ times after its initial entry, then we obtain:

$$S_n \le 3 + (X_n+1) \cdot S_{n-2} + 3X_n + 3,$$
where the four terms account, respectively, for: 
the initial steps to enter $G_{n-2}$ from an in-boundary arc of $G_n$; 
the $(X_n+1)$ visits to $G_{n-2}$ (initial entry plus $X_n$ returns); 
the $X_n$ bouncing phases on $B_n$; and the final steps to escape $G_{n-2}$ to an out-boundary arc of $G_n$. 
This simplifies to:
$$S_n \le (X_n+1) \cdot S_{n-2} + 3X_n + 6.$$

It remains to bound $X_n$. 
To this end, we analyze the behavior of the ant on the top row of $B_n$ and determine the maximal number of times it can bounce there while still being able to return to $G_{n-2}$. During a bouncing phase, the ant moves from $G_{n-2}$, traverses boundary cells, and returns to $G_{n-2}$, without occupying any in-boundary or out-boundary arc of $G_n$. 

For example, Figure~\ref{fig:bouncingbound} illustrates all possible bounces in the top row of $B_{10}$ along with the maximum number of times each can be performed.

\begin{figure}[htbp]
\centering
\begin{tikzpicture}[
scale=0.6,
vertex/.style={circle, draw, minimum size=2.5mm, inner sep=0pt},
fvertex/.style={circle, draw, fill=black, minimum size=2.5mm, inner sep=0pt}, 
gvertex/.style={circle, draw, fill=gray!60, minimum size=2.5mm, inner sep=0pt}, 
>=Stealth,
thick
]
\def\n{10} 
\def\spacing{1.5}
\def\externalarrow{1.0}
\def\margin{0.5}
\pgfmathsetmacro{\gridleft}{-\spacing/2}
\pgfmathsetmacro{\gridright}{(\n-1)*\spacing+\spacing/2}
\pgfmathsetmacro{\gridbottom}{-\spacing/2}
\pgfmathsetmacro{\gridtop}{\spacing/2}

\foreach \i in {0,...,\numexpr\n-1} {
    \pgfmathsetmacro{\x}{\i*\spacing}
    \node[gvertex] (\i v0) at (\x,0) {};
}
\foreach \i in {0,...,\numexpr\n-2} {
    \pgfmathsetmacro{\arrowdir}{int(mod(\i+5,2)==0?1:0)}
    \ifnum\arrowdir=1
        \draw[dashed,<-] (\i v0) -- (\the\numexpr\i+1\relax v0);
    \else
        \draw[dashed,->] (\i v0) -- (\the\numexpr\i+1\relax v0);
    \fi
}
\foreach \i in {0,...,\numexpr\n-1} {
    \pgfmathsetmacro{\x}{\i*\spacing}
    \pgfmathsetmacro{\dir}{int(mod(\i,2)==0?1:0)}
    \ifnum\dir=1
        \draw[dashed,->] (\x,-\externalarrow) -- (\i v0);
    \else
        \draw[dashed,<-] (\x,-\externalarrow) -- (\i v0);
    \fi
}

\foreach \i in {0,...,\numexpr\n-2} {
    \pgfmathsetmacro{\xstart}{\i*\spacing}
    \pgfmathsetmacro{\xmid}{(\i+0.5)*\spacing}
    \pgfmathsetmacro{\xend}{(\i+1)*\spacing}
    \pgfmathsetmacro{\kval}{int(min(\i+1, \n-1-\i))}
    \pgfmathsetmacro{\dir}{int(mod(\i+1,2)==0?1:0)}
    \ifnum\dir=1
        \draw[very thick, <-] 
            (\xstart+0.2,-\externalarrow-0.2) to[out=90, in=180, looseness=1.5] (\xmid,-0.2) to[out=0, in=90, looseness=1.5] (\xend-0.2,-\externalarrow);
            \node at (\xmid, -\externalarrow-0.4) {$\kval$};
    \else
        \draw[very thick, ->] 
        (\xstart+0.2,-\externalarrow) to[out=90, in=180, looseness=1.5] (\xmid,-0.2) 
        to[out=0, in=90, looseness=1.5] (\xend-0.2,-\externalarrow-0.2);
        \node at (\xmid, -\externalarrow-0.4) {$\kval$};
    \fi
}

\end{tikzpicture}
\caption{Top row of $B_{10}$ with curved solid lines indicating possible bouncing motions and the corresponding maximal number of times each bounce can occur.}
\label{fig:bouncingbound}
\end{figure}

Each of the two corner cells of the top row can be used to perform only one type of bounce, which can occur at most once. Indeed, after a corner cell is used for a bounce, its color is updated, and the ant's next visit to that cell immediately leads to an out-boundary arc of $G_n$ in a single step.

In general, each non-corner cell in the top row can be used to perform two types of bouncing motions - left and right - depending on which side of the cell is located the horizontal arc that is involved in the bounce. Moreover, for each such cell, the counts of left and right bounces can differ by at most one, since consecutive visits to a given cell must alternate between the two directions due to the color-alternation property.

Combining these two observations, the number of possible bounces on the top row of $B_n$ is bounded by:
\[
\begin{cases}
1+2+...+(\frac{n}{2}-1) +\frac{n}{2}+ (\frac{n}{2} -1) +...+2+1= \frac{n^2}{4} & \text{if n is even}\\

1+2+...+(\frac{n-1}{2} -1) +\frac{n-1}{2}+\frac{n-1}{2}+ (\frac{n-1}{2} -1) +...+2+1= \frac{n^2-1}{4} & \text{if n is odd}
\end{cases}
\]
By symmetry across the four sides of $B_n$, we obtain  $X_n \le 4 \cdot \frac{n^2}{4}=n^2$. 
Substituting in $ S_n\le (X_n+1) \cdot S_{n-2} + 3X_n+6$, we obtain:
$ S_n \le \left(n^2+1\right) S_{n-2} + 3n^2 + 6.$
Let us define $$T_n = S_n + 3.$$ Then the previous inequality  can be written as
$ (S_n +3) \le \left(n^2+1\right)\left(S_{n-2}  +3 \right) + 6$ which yields
$ T_n \le \left(n^2+1\right)T_{n-2} + 6$.

We claim that $6<(n-1)T_{n-2}$ for all $n\ge 3$. Indeed, 
for $n = 3$,  we have, $S_1= 1$, and  thus $(n-1)T_{n-2} = 2\times (1+3) = 8 > 6$.
Since $(n-1)T_{n-2}$ is increasing, it follows that $6<(n-1)T_{n-2}$ for all $n\ge 3$. 
Therefore
$$ T_n \le \left(n^2+1\right)T_{n-2} + 6 \le \left(n^2+1\right) T_{n-2} + (n-1)T_{n-2}  = (n+1) \cdot n \cdot T_{n-2}$$ and by induction on the inequality $T_n\le (n+1) \cdot n \cdot T_{n-2}$, we have

\[
T_{n} \le 
\begin{cases}
(n+1) \cdot n \cdot (n-1)\cdot (n-2)\cdot  ...\cdot 6 \cdot T_4 & \text{if } n \text{ is even} \\
(n+1) \cdot n \cdot (n-1)\cdot (n-2) \cdot ... \cdot 5 \cdot T_3 & \text{if } n \text{ is odd} 
\end{cases}
\]
which in both cases yield $S_{n} \le T_{n} \le (n+1)!$ since $T_3= S_3+3=20 < 4!$ and $T_4=S_4+3 =49 < 5!$.
\end{proof}

\subsection{Linear bound of the Escaping Time of Some Rectangular Domains}

In this section, we focus on the escaping time of rectangular finite connected domains consisting of two or three rows and an arbitrary number of columns.
Our approach to determining the escape time of these domains consists in bounding the number of steps the ant can perform within each column of the domain before the termination of its motion within the entire domain. These columns —  horizontal or vertical $G_{2,1}$ or
$G_{3,1}$ — can be interpreted as gadgets consisting of entry arcs (the in-boundary arcs), exit arcs (the out-boundary arcs), and an internal state determined by the color configuration. Given its internal state and the specific entry arc through which the ant enters, the gadget updates its internal state and directs the ant toward one of its exit arcs.
There are four possible types of motion that the ant can perform on every such column:
\begin{itemize}
    \item \emph{Bouncing}: the ant enters and leaves through the same side of the column;
    \item \emph{Crossing}: the ant enters through one side of the column and exits through the opposite side;
    \item \emph{Initial}: the ant enters the grid through an in-boundary arc at the top or bottom of the column;
    \item \emph{Out}: the ant exits the grid through an out-boundary arc at the top or bottom of the column.
\end{itemize}
Initial and out motions are performed in one step,  bouncing and crossing motions in two steps.

In all figures of this section, black, white, and gray cells still represent cells whose states are $L$, $R$, or unspecified, respectively, and curved solid arrows depict the motion of the ant.

\subsubsection{Two-row Grids}\label{sec:tworow}

When viewed column by column, a two-row grid (either vertical or horizontal) can be seen as an alternating
succession of horizontal and vertical $G_{2,1}$ ( See Figure$~\ref{fig:2rowgrid}$).

\begin{figure}[H]
\centering

\begin{subfigure}{0.5\textwidth}
\centering
\begin{tikzpicture}[
scale=0.5,
vertex/.style={circle, draw, minimum size=2.5mm, inner sep=0pt},
fvertex/.style={circle, draw, fill=black, minimum size=2.5mm, inner sep=0pt},
gvertex/.style={circle, draw, fill=gray!60, minimum size=2.5mm, inner sep=0pt},
>=Stealth,
thick
]
\def\k{2} 
\def\n{7} 
\def\spacing{1.5} 
\def\externalarrow{1.2} 
\def\margin{0.5} 
\foreach \i in {0,...,\numexpr\n-1} {
 \foreach \j in {0,...,\numexpr\k-1} {
 \pgfmathsetmacro{\x}{\i*\spacing}
 \pgfmathsetmacro{\y}{\j*\spacing}
 \node[gvertex] (\i v\j) at (\x,\y) {};
 }
}
\foreach \i/\j in {} {
 \pgfmathsetmacro{\x}{\i*\spacing}
 \pgfmathsetmacro{\y}{\j*\spacing}
 \node[fvertex] (\i v\j) at (\x,\y) {};
}
\pgfmathsetmacro{\upbound}{(\k-1)*\spacing + \margin}
\pgfmathsetmacro{\bottombound}{-\margin}
\draw[gray] (-\spacing, \upbound) -- ({(\n-1)*\spacing + \spacing }, \upbound);
\draw[gray] (-\spacing, \bottombound) -- ({(\n-1)*\spacing + \spacing}, \bottombound);
\foreach \i in {0,...,\n} {
 \pgfmathsetmacro{\x}{\i*\spacing - \spacing/2}
 \draw[gray] (\x, \bottombound) -- (\x, \upbound);
}
\foreach \j in {0,...,\numexpr\k-1} {
 \pgfmathsetmacro{\y}{\j*\spacing}
 \pgfmathsetmacro{\dir}{int(mod(\j,2)==0?1:0)}
 \ifnum\dir=1
 \draw[->] (-\externalarrow,\y) -- (0v\j);
 \else
 \draw[<-] (-\externalarrow,\y) -- (0v\j);
 \fi
 \foreach \i in {0,...,\numexpr\n-2} {
 \pgfmathsetmacro{\arrowdir}{int(mod(\i+\j,2)==0?1:0)}
 \ifnum\arrowdir=1
 \draw[<-] (\i v\j) -- (\the\numexpr\i+1\relax v\j);
 \else
 \draw[->] (\i v\j) -- (\the\numexpr\i+1\relax v\j);
 \fi
 }
 \pgfmathsetmacro{\xright}{(\n-1)*\spacing + \externalarrow}
 \pgfmathsetmacro{\lastdir}{int(mod(\n-1+\j,2)==0?1:0)}
 \ifnum\lastdir=1
 \draw[<-] (\the\numexpr\n-1\relax v\j) -- (\xright,\y);
 \else
 \draw[->] (\the\numexpr\n-1\relax v\j) -- (\xright,\y);
 \fi
}
\foreach \i in {0,...,\numexpr\n-1} {
 \pgfmathsetmacro{\x}{\i*\spacing}
 \pgfmathsetmacro{\dir}{int(mod(\i,2)==0?1:0)}
 \ifnum\dir=1
 \draw[<-] (\x,-\externalarrow) -- (\i v0) node[at start, below] {H};
 \else
 \draw[->] (\x,-\externalarrow) -- (\i v0) node[at start, below] {V};
 \fi
 \foreach \j in {0,...,\numexpr\k-2} {
 \pgfmathsetmacro{\arrowdir}{int(mod(\i+\j,2)==0?1:0)}
 \ifnum\arrowdir=1
 \draw[->] (\i v\j) -- (\i v\the\numexpr\j+1\relax);
 \else
 \draw[<-] (\i v\j) -- (\i v\the\numexpr\j+1\relax);
 \fi
 }
 \pgfmathsetmacro{\yup}{(\k-1)*\spacing + \externalarrow}
 \pgfmathsetmacro{\lastdir}{int(mod(\i+\k-1,2)==0?1:0)}
 \ifnum\lastdir=1
 \draw[->] (\i v\the\numexpr\k-1\relax) -- (\x,\yup);
 \else
 \draw[<-] (\i v\the\numexpr\k-1\relax) -- (\x,\yup);
 \fi
}
\end{tikzpicture}
\caption{Portion of two-row grid}
\label{fig:grid27c}
\end{subfigure}
\hfill
\begin{subfigure}{0.25\textwidth}
\centering
\begin{tikzpicture}[
scale=0.5,
vertex/.style={circle, draw, minimum size=2.5mm, inner sep=0pt},
fvertex/.style={circle, draw, fill=black, minimum size=2.5mm, inner sep=0pt},
gvertex/.style={circle, draw, fill=gray!60, minimum size=2.5mm, inner sep=0pt},
>=Stealth,
thick
]
\def\spacing{1.5}
\def\externalarrow{1.2}
\def\margin{0.5}
\node[gvertex] (0v0) at (0,0) {};
\node[gvertex] (0v1) at (0,1.5) {};
\draw[gray] (-\spacing/2, -\margin) -- (\spacing/2, -\margin);
\draw[gray] (-\spacing/2, 2) -- (\spacing/2, 2);
\draw[gray] (-\spacing/2, -\margin) -- (-\spacing/2, 2);
\draw[gray] (\spacing/2, -\margin) -- (\spacing/2, 2);
\draw[->] (-\externalarrow, 0) -- (0v0);
\draw[<-] (0v0) -- (\externalarrow, 0);
\draw[<-] (-\externalarrow, 1.5) -- (0v1);
\draw[->] (0v1) -- (\externalarrow, 1.5);
\draw[<-] (0, -\externalarrow) -- (0v0)node[at start, below] {H};
\draw[->] (0v0) -- (0v1);
\draw[<-] (0v1) -- (0, 2.7);
\end{tikzpicture}
\caption{Horizontal $G_{2,1}$}
\label{fig:grid21h}
\end{subfigure}
\hfill
\begin{subfigure}{0.2\textwidth}
\centering
\begin{tikzpicture}[
scale=0.5,
vertex/.style={circle, draw, minimum size=2.5mm, inner sep=0pt},
fvertex/.style={circle, draw, fill=black, minimum size=2.5mm, inner sep=0pt},
gvertex/.style={circle, draw, fill=gray!60, minimum size=2.5mm, inner sep=0pt},
>=Stealth,
thick
]
\def\spacing{1.5}
\def\externalarrow{1.2}
\def\margin{0.5}
\node[gvertex] (0v0) at (0,0) {};
\node[gvertex] (0v1) at (0,1.5) {};
\draw[gray] (-\spacing/2, -\margin) -- (\spacing/2, -\margin);
\draw[gray] (-\spacing/2, 2) -- (\spacing/2, 2);
\draw[gray] (-\spacing/2, -\margin) -- (-\spacing/2, 2);
\draw[gray] (\spacing/2, -\margin) -- (\spacing/2, 2);
\draw[<-] (-\externalarrow, 0) -- (0v0);
\draw[->] (0v0) -- (\externalarrow, 0);
\draw[->] (-\externalarrow, 1.5) -- (0v1);
\draw[<-] (0v1) -- (\externalarrow, 1.5);
\draw[->] (0, -\externalarrow) -- (0v0)node[at start, below] {V};
\draw[<-] (0v0) -- (0v1);
\draw[->] (0v1) -- (0, 2.7);
\end{tikzpicture}
\caption{Vertical $G_{2,1}$}
\label{fig:grid21v}
\end{subfigure}

\caption{Two-row grid viewed by columns}
\label{fig:2rowgrid}
\end{figure}

Since horizontal and vertical $G_{2,1}$ are symmetric by a half turn rotation, it suffices to restrict our analysis to the horizontal $G_{2,1}$. 
We therefore examine the possible behaviors arising when the ant enters
and exits a horizontal $G_{2,1}$ during its motion within the whole two-row
grid.

Let's label the three in-boundary arcs by $l_{in},r_{in}$ and $t_{in}$
and the three out-boundary arcs by $l_{out},r_{out}$ and $b_{out}$. The
letters $l,r,t$ and $b$ respectively denote the left, right, top,
and bottom sides of $G_{2,1}$, according to the side of the gadget
on which the corresponding arc is located, see Figure$~\ref{fig:grid21_io}$
for an illustration.

\begin{figure}[htbp]
\centering
\begin{tikzpicture}[
scale=0.5,
vertex/.style={circle, draw, minimum size=2.5mm, inner sep=0pt},
fvertex/.style={circle, draw, fill=black, minimum size=2.5mm, inner sep=0pt},
gvertex/.style={circle, draw, fill=gray!60, minimum size=2.5mm, inner sep=0pt},
>=Stealth,
thick
]
\def\spacing{1.5}
\def\externalarrow{1.2}
\def\margin{0.5}
\node[gvertex] (0v0) at (0,0) {};
\node[gvertex] (0v1) at (0,1.5) {};
\draw[gray] (-\spacing/2, -\margin) -- (\spacing/2, -\margin);
\draw[gray] (-\spacing/2, 2) -- (\spacing/2, 2);
\draw[gray] (-\spacing/2, -\margin) -- (-\spacing/2, 2);
\draw[gray] (\spacing/2, -\margin) -- (\spacing/2, 2);
\draw[->] (-\externalarrow, 0) -- (0v0) node[at start, left] {$l_{in}$};
\draw[<-] (0v0) -- (\externalarrow, 0) node[at end, right] {$r_{in}$};
\draw[<-] (-\externalarrow, 1.5) -- (0v1) node[at start, left] {$l_{out}$};
\draw[->] (0v1) -- (\externalarrow, 1.5) node[at end, right] {$r_{out}$};
\draw[<-] (0, -\externalarrow) -- (0v0) node[at start, below] {$b_{out}$};
\draw[->] (0v0) -- (0v1);
\draw[<-] (0v1) -- (0, 2.7) node[at end, above] {$t_{in}$};
\end{tikzpicture}
\caption{Horizontal $G_{2,1}$ with labeled in-boundary and out-boundary arcs}
\label{fig:grid21_io}
\end{figure}

When the ant enters through $l_{in}$ or $r_{in}$, it can exit through
any of the three exit arcs. In contrast, when the ant enters through
$t_{in}$, it can exit only through $l_{out}$ or $r_{out}$. We associate with each admissible entry--exit pair a label to describe the motion of the ant, as summarized in Table$~\ref{tab:motiontable2}$ where the letters $I,B,O$ and $C$ respectively denote Initial, Bouncing, Out and Crossing motions.

\begin{table}[H]
\centering
\begin{tabular}{|c||c|c|c|c|c|c|c|c|}
\hline
 Entry & $t_{in}$ & $t_{in}$ & $l_{in}$ & $l_{in}$ & $l_{in}$ & $r_{in}$ & $r_{in}$  & $r_{in}$  \\
\hline
 Exit & $l_{out}$ & $r_{out}$ & $l_{out}$ & $r_{out}$ & $b_{out}$ & $l_{out}$ & $r_{out}$& $b_{out}$ \\
 \hline
 Motion & $R$ & $L$ & $LL$   & $LR$ & $R$ & $RR$   & $RL$ & $L$  \\
\hline
 Label & $I_l$ & $I_r$ & $B_l$   & $C_l$ & $O_l$ & $B_r$   & $C_r$ & $O_r$  \\
\hline
\end{tabular}
\caption{Motions and labels associated to admissible entry-exit pairs. Initial and out motions are performed in one step,  bouncing and crossing motions in two steps.}
\label{tab:motiontable2}
\end{table}

Note that, during its motion within any two-row grid, the ant can use $t_{in}$ as an entry into a horizontal $G_{2,1}$ only once, namely when it enters the whole two-row grid. Moreover, if the ant exits through $b_{out}$, it escapes the two-row grid entirely, thereby terminating its motion. 
Also note that if the ant exits a horizontal $G_{2,1}$ through $l_{out}$ (respectively $r_{out}$), it can subsequently re-enter the column only through $l_{in}$ (respectively $r_{in}$).

The following result analyzes, for each type of motion (Initial, Bouncing, Out or Crossing) performed by the ant upon its first visit to a horizontal $G_{2,1}$, the maximum number of steps it can perform within this horizontal $G_{2,1}$  during its motion within the entire two-row grid. This is done by counting the steps of the initial motion together with those of any subsequent motions upon return. The next theorem is also true for vertical $G_{2,1}$.

\begin{lemma}\label{pr:2Rbound}
During its motion within a two-row grid, if the motion performed by the ant upon its first visit to a column that is a horizontal $G_{2,1}$ is: 
\begin{enumerate}
    \item an out motion, then the ant performs at most one step within this column before escaping the two-row grid; 
    \item a bouncing motion, then the ant performs at most three steps within this column before escaping the two-row grid. Moreover, these three steps consist of the bouncing motion and an out motion;
    \item an initial motion, then the ant performs at most four steps within this column before escaping the two-row grid. Moreover, these four steps consist of the initial motion, a bouncing motion and an out motion;
    \item a crossing motion, then the ant performs at most six steps within this column before escaping the two-row grid. Moreover, these six steps consist of three crossing motions, two of which occur in the same direction;
\end{enumerate}
\end{lemma}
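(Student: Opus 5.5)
The plan is a finite-state analysis of the horizontal $G_{2,1}$, tracking the two cell states and the HV-forced directions.

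Call the two cells $c_b$ (bottom, an H-cell) and $c_t$ (top, a V-cell). $c_b$ is entered horizontally, through $l_{in}$ or $r_{in}$, and leaves vertically, either upward to $c_t$ or down through $b_{out}$. $c_t$ is entered vertically, from $c_b$ or through $t_{in}$, and leaves horizontally through $l_{out}$ or $r_{out}$.

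Table~\ref{tab:motiontable2} gives the rules:
\begin{itemize}
\item From $l_{in}$ the ant goes up iff $c_b=L$; from $r_{in}$ it goes up iff $c_b=R$.
\item Every visit flips the cell visited.
\item The exits of $c_t$ therefore alternate between $l_{out}$ and $r_{out}$. This includes the initial entry from $t_{in}$, since moving downward the turn convention matches.
\end{itemize}
Combined with the re-entry rule stated before the lemma (after $l_{out}$ the next entry is through $l_{in}$, after $r_{out}$ through $r_{in}$), the whole history of the column is a word over the labels of Table~\ref{tab:motiontable2}.

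Items 1--3 follow by short case analysis.
\begin{itemize}
\item An out motion ($O_l$ or $O_r$) leaves the grid, so the count is one step.
\item After $B_l$, $c_b$ has flipped from $L$ to $R$, and the only possible re-entry is through $l_{in}$. With $c_b=R$ the ant takes $b_{out}$, giving $2+1=3$ steps. $B_r$ is symmetric.
\item After an initial motion $I_l$ the ant must return through $l_{in}$ with $c_b$ still unvisited. If it exits downward we are done. Otherwise $c_t$, now flipped, sends it to $r_{out}$; that is a crossing, not a bounce, so I will check the statement's exact claim here. I expect the intended accounting to be that the first move out of $c_b$ is a bounce back through $l_{out}$ using the flipped $c_t$, after which item 2's argument forces $b_{out}$. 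Total: $1+2+1=4$ steps.
\end{itemize}

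Item 4 is the main obstacle. Purely locally, crossings can chain indefinitely. After $C_l$, the ant must re-enter through $r_{in}$ with $c_b=R$, so it goes up, and $c_t$ now sends it to $l_{out}$: this is $C_r$. Then $C_l$ follows, and so on; local states alone never force $b_{out}$. So the bound of three crossings must come from the neighbouring columns.

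My approach: split the grid at this column into a left part $P_\ell$ and a right part $P_r$. Each return after $C_l$ is an excursion of the ant inside $P_r$. It enters through the fixed arc $r_{out}$ and comes back through the fixed arc $r_{in}$, and $P_r$ is untouched between its excursions; the same holds for $P_\ell$ after $C_r$.

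Suppose four crossings $C_l C_r C_l C_r$ occurred. Then each cell of the column has been flipped an even number of times. The portion of the trajectory between suitable occurrences would then return the column to its colouring while each side part performs a complete round trip. I would try to apply the argument of Proposition~\ref{pr:no_repeated_coloring}, namely Claims~1 and~2 on corner cells, to the union of the column with the cells exited in these excursions. The goal is to show that some corner cell of that domain would have to be visited an even positive number of times without being a starting or ending cell, which is a contradiction.

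For the remaining count, two further crossings plus the first give at most three crossings, that is, $6$ steps. Two of them ($C_l$ twice, or $C_r$ twice) are necessarily in the same direction by the alternation. A fourth return must then be a bounce or an out motion, and I would check that this contradicts the excursion argument. I expect the corner-cell adaptation to be the delicate part.
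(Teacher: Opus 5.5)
Your items 1 and 2 are fine, but your local rule for the top cell is wrong, and this derails item 3. A given colour of $c_t$ sends the ant to opposite sides depending on whether it arrives heading north or heading south. So the exits of $c_t$ do \emph{not} alternate across the entry from $t_{in}$. $I_l$ needs $c_t=R$, which then becomes $L$, and an ant later coming up from $c_b$ turns left and leaves through $l_{out}$ again. The motion after $I_l$ is therefore $B_l$ (if $c_b=L$) or $O_l$ (if $c_b=R$), never $C_l$. After $B_l$ the column is ${R \brack R}$, so the next entry through $l_{in}$ gives $O_l$, for a total of $1+2+1$. You noticed the clash with the statement, but you resolved it by assuming the intended answer rather than correcting the rule.

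The real gap is item 4. The corner-cell argument of Proposition~\ref{pr:no_repeated_coloring} needs every cell of the region to be exited an even positive number of times, i.e.\ a colouring that genuinely repeats. Excursions into the side parts do not restore them: a column that bounces goes from ${L \brack L}$ to ${R \brack R}$. So Claim~1 has nothing to work with and no contradiction comes out. Your fallback is also false: after $C_lC_rC_l$ the column is ${L \brack R}$, and any re-entry through $r_{in}$ is forced to be $C_r$. Nothing local to $OC$ stops a fourth crossing, so you must show the ant never returns.

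Your framing, with $P_r$ frozen between excursions and fixed interface arcs, is the right one. What is missing is the shape of an excursion in a two-row grid. $P_r$ is unvisited at the first $C_l$, so the first right excursion consists of:
\begin{itemize}
\item first-visit crossings of the intermediate columns;
\item a bounce in some column $RC$;
\item crossings back, which are forced: a once-crossed column re-entered from the other side must cross back, by the same computation as $C_l\to C_r$, and this restores its colouring.
\end{itemize}
After the second $C_l$, the ant therefore recrosses the same restored columns and reaches $RC$ from the same side. By item 2 it then performs an out motion and escapes. Hence $OC$ sees at most $C_l,C_r,C_l$: six steps, two of the crossings in the same direction.
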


\begin{proof}
In all figures cited in this proof, black cells are in the $L$ state and white cells are in the $R$ state. Gray cells represent cells whose state may be either $L$ or $R$ and is not yet fixed. Curved solid arrows depict the motion of the ant within the column.

Let $OC$ denote the horizontal $G_{2,1}$ under consideration ($OC$ for Original Column). We denote by ${C_{1} \brack C_{2} }$ the color configuration of $OC$, where $C_{1}, C_{2} \in \{ L, R \}$ correspond respectively to the states of the top and bottom cells.
If the motion performed by the ant upon its first visit to $OC$ is: 

\medskip

 \textbf{1. an out motion ( $O_{l}$ or $O_{r}$): one step.} The ant performs $O_{l}$ or $O_{r}$ in one step escaping the entire two-row grid.
\medskip

 \textbf{2. a bouncing motion ( $B_{l}$ or $B_{r}$): three steps at most.} For the ant to perform $B_{l}$ for example, the color configuration must necessarily be ${L \brack L}$. It becomes ${R \brack R}$ after $B_l$ and upon a subsequent return, the ant then performs $O_{l}$ reaching $b_{out}$ in one step. See Figure \ref{fig:grid21_bo}. The symmetric behavior occurs for $B_{r}$.

\begin{figure}[H]
\centering
\begin{tikzpicture}[
scale=0.6,
vertex/.style={circle, draw, minimum size=2.5mm, inner sep=0pt},
fvertex/.style={circle, draw, fill=black, minimum size=2.5mm, inner sep=0pt},
>=Stealth,
thick
]
\def\spacing{1.5}
\def\externalarrow{1.2}
\def\margin{0.5}

\begin{scope}
\node[fvertex] (0v0) at (0,0) {};
\node[fvertex] (0v1) at (0,1.5) {};
\draw[gray] (-\spacing/2, -\margin) -- (\spacing/2, -\margin);
\draw[gray] (-\spacing/2, 2) -- (\spacing/2, 2);
\draw[gray] (-\spacing/2, -\margin) -- (-\spacing/2, 2);
\draw[gray] (\spacing/2, -\margin) -- (\spacing/2, 2);
\draw[->, dashed] (-\externalarrow, 0) -- (0v0) ;
\draw[<-, dashed] (0v0) -- (\externalarrow, 0);
\draw[<-, dashed] (-\externalarrow, 1.5) -- (0v1) ;
\draw[->, dashed] (0v1) -- (\externalarrow, 1.5) ;
\draw[<-, dashed] (0, -\externalarrow) -- (0v0);
\draw[->, dashed] (0v0) -- (0v1);
\draw[<-, dashed] (0v1) -- (0, 2.7) ;
\draw[ultra thick, rounded corners=8pt, ->] 
    (-\externalarrow -0.2, 0.2) -- (-0.2, +0.2) -- (-0.2, 1.5 - 0.2) -- (-\externalarrow-0.2, 1.5 - 0.2);

\end{scope}

\draw[->, ultra thick, dotted] (2, 0.75) -- node[above] {$~$} (3.5, 0.75);

\begin{scope}[xshift=5.5cm]
\node[vertex] (0v0) at (0,0) {};
\node[vertex] (0v1) at (0,1.5) {};
\draw[gray] (-\spacing/2, -\margin) -- (\spacing/2, -\margin);
\draw[gray] (-\spacing/2, 2) -- (\spacing/2, 2);
\draw[gray] (-\spacing/2, -\margin) -- (-\spacing/2, 2);
\draw[gray] (\spacing/2, -\margin) -- (\spacing/2, 2);
\draw[->, dashed] (-\externalarrow, 0) -- (0v0) ;
\draw[<-, dashed] (0v0) -- (\externalarrow, 0) ;
\draw[<-, dashed] (-\externalarrow, 1.5) -- (0v1) ;
\draw[->, dashed] (0v1) -- (\externalarrow, 1.5) ;
\draw[<-, dashed] (0, -\externalarrow) -- (0v0);
\draw[->, dashed] (0v0) -- (0v1);
\draw[<-, dashed] (0v1) -- (0, 2.7)  ;
\draw[ultra thick, rounded corners=8pt, ->] 
    (-\externalarrow-0.2, -0.2) -- (-0.2, -0.2)  -- (-0.2, -\externalarrow-0.2);

\end{scope}

\end{tikzpicture}
\caption{$B_l$ in two steps then $O_l$ in one step.}
\label{fig:grid21_bo}
\end{figure}

\textbf{3. an initial motion ($I_{l}$ or $I_{r}$): four steps at most.} For the ant to perform $I_{l}$ for example, the color configuration is
necessarily either \textbf{(A)} ${R \brack L}$ or \textbf{(B)} ${R \brack R}$. After the execution of $I_{l}$, the ant may subsequently re-enter the column only from the left.

If the initial configuration is ${R \brack L}$, then after performing
$I_{l}$ in one step it becomes ${L \brack L}$. From this it can perform additional three steps at most, since the ant can only eventually return from left to perform $B_l$ and then $O_l$. See \textbf{A} in Figure~\ref{fig:I}. 

If the initial configuration is ${R \brack R}$ , then after performing
$I_{l}$ in one step it becomes ${L \brack R}$. Upon the next return from the
left, the ant then performs $O_{l}$ reaching $b_{out}$ in one step. See \textbf{B} in Figure~\ref{fig:I}.

The symmetric behavior occurs for $I_{r}$.

\begin{figure}[H]
\centering
\begin{tikzpicture}[
baseline, >=Stealth, thick,
]

\node (start) at (0,0) {
\begin{tikzpicture}[scale=0.6, baseline,
vertex/.style={circle, draw, minimum size=2.5mm, inner sep=0pt},
fvertex/.style={circle, draw, fill=black, minimum size=2.5mm, inner sep=0pt},
gvertex/.style={circle, draw, fill=gray!60, minimum size=2.5mm, inner sep=0pt},
>=Stealth, thick]

\def\spacing{1.5}
\def\externalarrow{1.2}
\def\margin{0.5}

\node[gvertex] (0v0) at (0,0) {};
\node[vertex] (0v1) at (0,1.5) {};

\draw[gray] (-\spacing/2, -\margin) rectangle (\spacing/2, 2);

\draw[->, dashed] (-\externalarrow, 0) -- (0v0);
\draw[<-, dashed] (0v0) -- (\externalarrow, 0);
\draw[<-, dashed] (-\externalarrow, 1.5) -- (0v1);
\draw[->, dashed] (0v1) -- (\externalarrow, 1.5);

\draw[<-, dashed] (0, -\externalarrow) -- (0v0);
\draw[->, dashed] (0v0) -- (0v1);
\draw[<-, dashed] (0v1) -- (0, 2.7);

\draw[ultra thick, rounded corners=8pt, ->]
(-0.2, 2.7) -- (-0.2, 1.7) -- (-\externalarrow, 1.7);

\end{tikzpicture}
};

\node (A) at (5,0) {
\begin{tikzpicture}[scale=0.6, baseline,
vertex/.style={circle, draw, minimum size=2.5mm, inner sep=0pt},
fvertex/.style={circle, draw, fill=black, minimum size=2.5mm, inner sep=0pt},
>=Stealth, thick]

\def\spacing{1.5}
\def\externalarrow{1.2}
\def\margin{0.5}

\node[fvertex] (0v0) at (0,0) {};
\node[fvertex] (0v1) at (0,1.5) {};

\draw[gray] (-\spacing/2, -\margin) rectangle (\spacing/2, 2);

\draw[->, dashed] (-\externalarrow, 0) -- (0v0);
\draw[<-, dashed] (0v0) -- (\externalarrow, 0);
\draw[<-, dashed] (-\externalarrow, 1.5) -- (0v1);
\draw[->, dashed] (0v1) -- (\externalarrow, 1.5);

\draw[<-, dashed] (0, -\externalarrow) -- (0v0) ;
\draw[->, dashed] (0v0) -- (0v1);
\draw[<-, dashed] (0v1) -- (0, 2.7);

\draw[ultra thick, rounded corners=8pt, ->]
(-\externalarrow -0.2, 0.2) -- (-0.2, 0.2)
-- (-0.2, 1.3) -- (-\externalarrow-0.2, 1.3);

\end{tikzpicture}
};

\node (A2) at (10,0) {
\begin{tikzpicture}[scale=0.6, baseline,
vertex/.style={circle, draw, minimum size=2.5mm, inner sep=0pt},
>=Stealth, thick]

\def\spacing{1.5}
\def\externalarrow{1.2}
\def\margin{0.5}

\node[vertex] (0v0) at (0,0) {};
\node[vertex] (0v1) at (0,1.5) {};

\draw[gray] (-\spacing/2, -\margin) rectangle (\spacing/2, 2);

\draw[->, dashed] (-\externalarrow, 0) -- (0v0);
\draw[<-, dashed] (0v0) -- (\externalarrow, 0);
\draw[<-, dashed] (-\externalarrow, 1.5) -- (0v1);
\draw[->, dashed] (0v1) -- (\externalarrow, 1.5);

\draw[<-, dashed] (0, -\externalarrow) -- (0v0);
\draw[->, dashed] (0v0) -- (0v1);
\draw[<-, dashed] (0v1) -- (0, 2.7);

\draw[ultra thick, rounded corners=8pt, ->]
(-\externalarrow-0.2, -0.2) -- (-0.2, -0.2)
-- (-0.2, -\externalarrow-0.2);

\end{tikzpicture}
};

\node (B) at (5,-2.5) {
\begin{tikzpicture}[scale=0.6, baseline,
vertex/.style={circle, draw, minimum size=2.5mm, inner sep=0pt},
fvertex/.style={circle, draw, fill=black, minimum size=2.5mm, inner sep=0pt},
>=Stealth, thick]

\def\spacing{1.5}
\def\externalarrow{1.2}
\def\margin{0.5}

\node[vertex] (0v0) at (0,0) {};
\node[fvertex] (0v1) at (0,1.5) {};

\draw[gray] (-\spacing/2, -\margin) rectangle (\spacing/2, 2);

\draw[->, dashed] (-\externalarrow, 0) -- (0v0);
\draw[<-, dashed] (0v0) -- (\externalarrow, 0);
\draw[<-, dashed] (-\externalarrow, 1.5) -- (0v1);
\draw[->, dashed] (0v1) -- (\externalarrow, 1.5);

\draw[<-, dashed] (0, -\externalarrow) -- (0v0) ;
\draw[->, dashed] (0v0) -- (0v1);
\draw[<-, dashed] (0v1) -- (0, 2.7);

\draw[ultra thick, rounded corners=8pt, ->]
(-\externalarrow -0.2, -0.2) -- (-0.2, -0.2)
-- (-0.2, -\externalarrow);

\end{tikzpicture}
};

\draw[line width=2pt, ->, dotted] (start) -- (A)
node[midway, above] {\textbf{A}};

\draw[line width=2pt, ->, dotted] (A) -- (A2);

\draw[line width=2pt, ->, dotted] (start) -- (B)
node[midway, ,left, below] {\textbf{B}};


\end{tikzpicture}

\caption{When $I_l$ is the first motion performed}
\label{fig:I}
\end{figure}

\medskip

\textbf{4. a crossing motion ($C_{l}$ or $C_{r}$): six steps at most.}  For the motion $C_{l}$ to occur for example, the color configuration must be
${R \brack L}$. In this case, the ant traverses $OC$ from left to right in two steps, thereby updating the configuration to ${L \brack R}$.
Upon a subsequent return from the right, the ant traverses $OC$ from right to left in two steps, restoring the configuration to ${R \brack L}$.
In principle, this alternating behavior could continue indefinitely, provided that the ant repeatedly crosses the column. See Figure~\ref{fig:2Cinfty}.

\begin{figure}[H]
\centering
\begin{tikzpicture}[
scale=0.6,
vertex/.style={circle, draw, minimum size=2.5mm, inner sep=0pt},
fvertex/.style={circle, draw, fill=black, minimum size=2.5mm, inner sep=0pt},
gvertex/.style={circle, draw, fill=gray!60, minimum size=2.5mm, inner sep=0pt},
>=Stealth,
thick
]
\def\spacing{1.5}
\def\externalarrow{1.2}
\def\margin{0.5}

\begin{scope}
\node[fvertex] (0v0) at (0,0) {};
\node[vertex] (0v1) at (0,1.5) {};
\draw[gray] (-\spacing/2, -\margin) -- (\spacing/2, -\margin);
\draw[gray] (-\spacing/2, 2) -- (\spacing/2, 2);
\draw[gray] (-\spacing/2, -\margin) -- (-\spacing/2, 2);
\draw[gray] (\spacing/2, -\margin) -- (\spacing/2, 2);
\draw[->, dashed] (-\externalarrow, 0) -- (0v0) ;
\draw[<-, dashed] (0v0) -- (\externalarrow, 0);
\draw[<-, dashed] (-\externalarrow, 1.5) -- (0v1) ;
\draw[->, dashed] (0v1) -- (\externalarrow, 1.5) ;
\draw[<-, dashed] (0, -\externalarrow) -- (0v0);
\draw[->, dashed] (0v0) -- (0v1);
\draw[<-, dashed] (0v1) -- (0, 2.7) ;
\draw[ultra thick, rounded corners=8pt, ->] 
    (-\externalarrow -0.2, 0.2) -- (-0.2, +0.2) -- (-0.2, 1.5 - 0.2) -- (\externalarrow, 1.5 - 0.2);
\end{scope}

\draw[<->, ultra thick, dotted] (2, 0.75) -- node[above] {$~$} (3.5, 0.75);

\begin{scope}[xshift=5.5cm]
\node[vertex] (0v0) at (0,0) {};
\node[fvertex] (0v1) at (0,1.5) {};
\draw[gray] (-\spacing/2, -\margin) -- (\spacing/2, -\margin);
\draw[gray] (-\spacing/2, 2) -- (\spacing/2, 2);
\draw[gray] (-\spacing/2, -\margin) -- (-\spacing/2, 2);
\draw[gray] (\spacing/2, -\margin) -- (\spacing/2, 2);
\draw[->, dashed] (-\externalarrow, 0) -- (0v0) ;
\draw[<-, dashed] (0v0) -- (\externalarrow, 0) ;
\draw[<-, dashed] (-\externalarrow, 1.5) -- (0v1) ;
\draw[->, dashed] (0v1) -- (\externalarrow, 1.5) ;
\draw[<-, dashed] (0, -\externalarrow) -- (0v0);
\draw[->, dashed] (0v0) -- (0v1);
\draw[<-, dashed] (0v1) -- (0, 2.7)  ;
\draw[ultra thick, rounded corners=8pt, ->] 
    (\externalarrow +0.2, 0.2) -- (+0.2, +0.2) -- (+0.2, 1.5 - 0.2) -- (-\externalarrow, 1.5 - 0.2);
\end{scope}

\end{tikzpicture}
\caption{$C_l \leftrightarrow C_r$}
\label{fig:2Cinfty}
\end{figure}

However, such indefinite repetition cannot occur in a two-row grid. After traversing $OC$ from left to right ($C_l$), the ant must reach a column $RC$ (as Right Column) to the right of $OC$ in order to return. To do so, it traverses all intermediate columns and performs a bouncing motion within $RC$. This sends the ant back from right to left across the same columns until it returns to $OC$, which it now traverses performing  $C_r$. Upon returning to $OC$ again, the ant performs a second $C_l$ and moves once more toward $RC$. This time, however, the visit to $RC$ results in an outgoing motion, as it already executed a bouncing motion there previously. Thereby the ant terminates its motion within the entire two-row grid. See Figure~\ref{fig:2C} for an illustration of all this.

The symmetric behavior occurs when the first crossing motion within $OC$ is $C_{r}$.

\begin{figure}[H]
\centering

\begin{tikzpicture}[
scale=0.6,
vertex/.style={circle, draw, minimum size=2.5mm, inner sep=0pt},
fvertex/.style={circle, draw, fill=black, minimum size=2.5mm, inner sep=0pt},
gvertex/.style={circle, draw, fill=gray!60, minimum size=2.5mm, inner sep=0pt},
>=Stealth,
thick
]

\node(t1) at (-3,0.75){1st $C_l$};
\node[fvertex] (0v0) at (0,0) {};
\node[fvertex] (1v0) at (1.5,0) {};
\node (2v0) at (3,0) {};
\node (3v0) at (4.5,0) {};
\node[fvertex] (4v0) at (6,0) {};
\node[vertex] (5v0) at (7.5,0) {};
\node(6v0) at (9,0) {};

\node[vertex] (0v1) at (0,1.5) {};
\node[vertex] (1v1) at (1.5,1.5) {};
\node (2v1) at (3,1.5) {};
\node (3v1) at (4.5,1.5) {};
\node[vertex] (4v1) at (6,1.5) {};
\node[vertex] (5v1) at (7.5,1.5) {};
\node(6v1) at (9,1.5) {};


\draw[gray] (-1.5, -0.5) -- (2.25, -0.5);
\draw[gray] (-1.5, 2) -- (2.25, 2);

\draw[gray, dotted](2.25, -0.5)--(5.25, -0.5);
\draw[dotted](2.75, 0.75)--(4.75, 0.75);
\draw[gray, dotted](2.25, 2)--(5.25, 2);

\draw[gray] (5.25, -0.5) -- (9, -0.5);
\draw[gray] (5.25, 2) -- (9, 2);

\draw[gray] (-0.75, -0.5) -- (-0.75, 2);
\draw[gray] (0.75, -0.5) -- (0.75, 2);
\draw[gray, dotted] (2.25, -0.5) -- (2.25, 2);
\draw[gray, dotted] (5.25, -0.5) -- (5.25, 2);
\draw[gray] (6.75, -0.5) -- (6.75, 2);
\draw[gray] (8.25, -0.5) -- (8.25, 2);


\draw[->,dashed] (-1.2,0) -- (0v0);
\draw[<-,dashed] (0v0) -- (1v0);
\draw[->,dashed] (1v0) -- (2v0);
\draw[->,dashed] (3v0) -- (4v0);
\draw[<-,dashed] (4v0) -- (5v0);
\draw[->,dashed] (5v0) -- (6v0);

\draw[<-,dashed] (-1.2,1.5) -- (0v1);
\draw[->,dashed] (0v1) -- (1v1);
\draw[<-,dashed] (1v1) -- (2v1);
\draw[<-,dashed] (3v1) -- (4v1);
\draw[->,dashed] (4v1) -- (5v1);
\draw[<-,dashed] (5v1) -- (6v1);


\draw[<-,dashed] (0,-1.2) -- (0v0);
\draw[->,dashed] (0v0) -- (0v1);
\draw[<-,dashed] (0v1) -- (0,2.7) node[at end, above] {$OC$};

\draw[->,dashed] (1.5,-1.2) -- (1v0);
\draw[<-,dashed] (1v0) -- (1v1);
\draw[->,dashed] (1v1) -- (1.5,2.7);

\draw[<-,dashed] (6,-1.2) -- (4v0);
\draw[->,dashed] (4v0) -- (4v1);
\draw[<-,dashed] (4v1) -- (6,2.7);

\draw[->,dashed] (7.5,-1.2) -- (5v0);
\draw[<-,dashed] (5v0) -- (5v1);
\draw[->,dashed] (5v1) -- (7.5,2.7) node[at end, above] {$RC$};

\draw[ultra thick, rounded corners=8pt, ->] 
    (-1.3, 0.2) -- (-0.2, 0.2) -- (-0.2, 1.3) -- 
    (1.3, 1.3) -- (1.3,0.2) -- (2.8,0.2) ;
    
\draw[ultra thick, rounded corners=8pt, ->] 
    (-1.3 + 6, 0.2) -- (-0.2 +6, 0.2) -- (-0.2+6, 1.3) -- 
    (1.3+6, 1.3) -- (1.3+6,0.2) -- ( 6+0.2, 0.2) ;    

\end{tikzpicture}

\vspace{0.5em}

\begin{tikzpicture}[
scale=0.6,
vertex/.style={circle, draw, minimum size=2.5mm, inner sep=0pt},
fvertex/.style={circle, draw, fill=black, minimum size=2.5mm, inner sep=0pt},
gvertex/.style={circle, draw, fill=gray!60, minimum size=2.5mm, inner sep=0pt},
>=Stealth,
thick
]

\def\spacing{1.5}
\def\externalarrow{1.2}
\def\margin{0.5}


\node(t2) at (-3,0.75){1st $C_r$};
\node[vertex] (0v0) at (0,0) {};
\node[vertex] (1v0) at (1.5,0) {};
\node (2v0) at (3,0) {};
\node (3v0) at (4.5,0) {};
\node[vertex] (4v0) at (6,0) {};
\node[fvertex] (5v0) at (7.5,0) {};
\node(6v0) at (9,0) {};

\node[fvertex] (0v1) at (0,1.5) {};
\node[fvertex] (1v1) at (1.5,1.5) {};
\node (2v1) at (3,1.5) {};
\node (3v1) at (4.5,1.5) {};
\node[fvertex] (4v1) at (6,1.5) {};
\node[fvertex] (5v1) at (7.5,1.5) {};
\node(6v1) at (9,1.5) {};


\draw[gray] (-1.5, -0.5) -- (2.25, -0.5);
\draw[gray] (-1.5, 2) -- (2.25, 2);

\draw[gray, dotted](2.25, -0.5)--(5.25, -0.5);
\draw[dotted](2.75, 0.75)--(4.75, 0.75);
\draw[gray, dotted](2.25, 2)--(5.25, 2);

\draw[gray] (5.25, -0.5) -- (9, -0.5);
\draw[gray] (5.25, 2) -- (9, 2);

\draw[gray] (-0.75, -0.5) -- (-0.75, 2);
\draw[gray] (0.75, -0.5) -- (0.75, 2);
\draw[gray, dotted] (2.25, -0.5) -- (2.25, 2);

\draw[gray, dotted] (5.25, -0.5) -- (5.25, 2);
\draw[gray] (6.75, -0.5) -- (6.75, 2);
\draw[gray] (8.25, -0.5) -- (8.25, 2);


\draw[->,dashed] (-1.2,0) -- (0v0);
\draw[<-,dashed] (0v0) -- (1v0);
\draw[->,dashed] (1v0) -- (2v0);
\draw[->,dashed] (3v0) -- (4v0);
\draw[<-,dashed] (4v0) -- (5v0);
\draw[->,dashed] (5v0) -- (6v0);

\draw[<-,dashed] (-1.2,1.5) -- (0v1);
\draw[->,dashed] (0v1) -- (1v1);
\draw[<-,dashed] (1v1) -- (2v1);
\draw[<-,dashed] (3v1) -- (4v1);
\draw[->,dashed] (4v1) -- (5v1);
\draw[<-,dashed] (5v1) -- (6v1);


\draw[<-,dashed] (0,-1.2) -- (0v0);
\draw[->,dashed] (0v0) -- (0v1);
\draw[<-,dashed] (0v1) -- (0,2.7);

\draw[->,dashed] (1.5,-1.2) -- (1v0);
\draw[<-,dashed] (1v0) -- (1v1);
\draw[->,dashed] (1v1) -- (1.5,2.7);

\draw[<-,dashed] (6,-1.2) -- (4v0);
\draw[->,dashed] (4v0) -- (4v1);
\draw[<-,dashed] (4v1) -- (6,2.7);

\draw[->,dashed] (7.5,-1.2) -- (5v0);
\draw[<-,dashed] (5v0) -- (5v1);
\draw[->,dashed] (5v1) -- (7.5,2.7);

\draw[ultra thick, rounded corners=8pt, <-] 
    (-1.3, 1.3) -- (-0.2, 1.3) -- (-0.2, 0.2) -- 
    (1.3, 0.2) -- (1.3,1.3) -- (2.8,1.3) ;
    
\draw[ultra thick, rounded corners=8pt, <-] 
    (-1.3 + 6, 1.3) -- (-0.2 +6, 1.3) -- (-0.2+6, 0.2) -- 
    (1.3+6, 0.2) ;    

\end{tikzpicture}

\vspace{0.5em}

\begin{tikzpicture}[
scale=0.6,
vertex/.style={circle, draw, minimum size=2.5mm, inner sep=0pt},
fvertex/.style={circle, draw, fill=black, minimum size=2.5mm, inner sep=0pt},
>=Stealth,
thick
]

\node(t3) at (-3,0.75){2nd $C_l$};
\node[fvertex] (0v0) at (0,0) {};
\node[fvertex] (1v0) at (1.5,0) {};
\node (2v0) at (3,0) {};
\node (3v0) at (4.5,0) {};
\node[fvertex] (4v0) at (6,0) {};
\node[fvertex] (5v0) at (7.5,0) {};
\node(6v0) at (9,0) {};

\node[vertex] (0v1) at (0,1.5) {};
\node[vertex] (1v1) at (1.5,1.5) {};
\node (2v1) at (3,1.5) {};
\node (3v1) at (4.5,1.5) {};
\node[vertex] (4v1) at (6,1.5) {};
\node[fvertex] (5v1) at (7.5,1.5) {};
\node(6v1) at (9,1.5) {};


\draw[gray] (-1.5, -0.5) -- (2.25, -0.5);
\draw[gray] (-1.5, 2) -- (2.25, 2);

\draw[gray, dotted](2.25, -0.5)--(5.25, -0.5);
\draw[dotted](2.75, 0.75)--(4.75, 0.75);
\draw[gray, dotted](2.25, 2)--(5.25, 2);

\draw[gray] (5.25, -0.5) -- (9, -0.5);
\draw[gray] (5.25, 2) -- (9, 2);

\draw[gray] (-0.75, -0.5) -- (-0.75, 2);
\draw[gray] (0.75, -0.5) -- (0.75, 2);
\draw[gray, dotted] (2.25, -0.5) -- (2.25, 2);
\draw[gray, dotted] (5.25, -0.5) -- (5.25, 2);
\draw[gray] (6.75, -0.5) -- (6.75, 2);
\draw[gray] (8.25, -0.5) -- (8.25, 2);


\draw[->,dashed] (-1.2,0) -- (0v0);
\draw[<-,dashed] (0v0) -- (1v0);
\draw[->,dashed] (1v0) -- (2v0);
\draw[->,dashed] (3v0) -- (4v0);
\draw[<-,dashed] (4v0) -- (5v0);
\draw[->,dashed] (5v0) -- (6v0);

\draw[<-,dashed] (-1.2,1.5) -- (0v1);
\draw[->,dashed] (0v1) -- (1v1);
\draw[<-,dashed] (1v1) -- (2v1);
\draw[<-,dashed] (3v1) -- (4v1);
\draw[->,dashed] (4v1) -- (5v1);
\draw[<-,dashed] (5v1) -- (6v1);


\draw[<-,dashed] (0,-1.2) -- (0v0);
\draw[->,dashed] (0v0) -- (0v1);
\draw[<-,dashed] (0v1) -- (0,2.7);

\draw[->,dashed] (1.5,-1.2) -- (1v0);
\draw[<-,dashed] (1v0) -- (1v1);
\draw[->,dashed] (1v1) -- (1.5,2.7);

\draw[<-,dashed] (6,-1.2) -- (4v0);
\draw[->,dashed] (4v0) -- (4v1);
\draw[<-,dashed] (4v1) -- (6,2.7);

\draw[->,dashed] (7.5,-1.2) -- (5v0);
\draw[<-,dashed] (5v0) -- (5v1);
\draw[->,dashed] (5v1) -- (7.5,2.7);

\draw[ultra thick, rounded corners=8pt, ->] 
    (-1.3, 0.2) -- (-0.2, 0.2) -- (-0.2, 1.3) -- 
    (1.3, 1.3) -- (1.3,0.2) -- (2.8,0.2) ;
    
\draw[ultra thick, rounded corners=8pt, ->] 
    (-1.3 + 6, 0.2) -- (-0.2 +6, 0.2) -- (-0.2+6, 1.3) -- 
    (1.3+6, 1.3) --  ( 1.3+6, 2.5) ;    

\end{tikzpicture}
\caption{When $C_l$ is the first motion performed.  At most three possible crossings on $OC$ yielding six steps.}
\label{fig:2C}
\end{figure}

\end{proof}

\begin{theorem}\label{pr:theotworow}
For every integer $n\ge2$, one has $S_{2,n}\le 6(n-1)$.
\end{theorem}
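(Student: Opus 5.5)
We prove the bound by summing, over the $n$ columns of $G_{2,n}$, the number of steps the ant performs inside each column, and bounding each column's contribution with Lemma~\ref{pr:2Rbound}. Each step of the ant takes place inside exactly one column: the step whose position points to a cell of that column. So $S_{2,n}$ is at most the sum of the per-column step counts.

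Fix a starting configuration. By Remark~\ref{rm:inb_max} we may take the starting position to be an in-boundary arc. Every column is a horizontal or vertical $G_{2,1}$, and Lemma~\ref{pr:2Rbound} holds for both. Hence a column whose first visit is a crossing contributes at most $6$ steps, an initial motion at most $4$, a bounce at most $3$, and an out motion at most $1$. Columns never visited contribute $0$.

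A naive sum gives only $6n$, so we need to save $6$ steps. Classify the starting arc.

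\textbf{Start on a top or bottom arc.} The first column visited, say column $j$, performs an initial motion and contributes at most $4$ instead of $6$.
\begin{itemize}
\item If $j$ is an end column, consider the other end column $n$. The ant can first reach it only from the inside, and its first motion there cannot be a crossing, since crossing would leave the grid. So it contributes at most $3$ (a bounce followed by an out motion). The total is at most $4+3+6(n-2)=6n-5 \le 6(n-1)$.
\item If $j$ is interior, both end columns are first entered from the inside. Each then contributes at most $3$, giving at most $4+3+3+6(n-3)=6n-8$.
\end{itemize}

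\textbf{Start on a left or right arc.} By symmetry the ant enters column $1$ through its outer side. We use the structure of Lemma~\ref{pr:2Rbound}(4): a column contributes $6$ only through three crossings, two of them in the same direction.
\begin{itemize}
\item If the first motion in column $1$ is a crossing, the ant must eventually come back to column $1$ from the right. A second crossing from the right would send it out of the grid through the left side. So column $1$ contributes at most its first crossing plus one bounce or out motion, i.e.\ at most $2+3$ steps (a tight count should give $4$).
\item The far end column $n$ can never be crossed, so it contributes at most $3$.
\end{itemize}
Together these two end columns already save the required $6$ steps relative to $6n$.

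Finally, we check that these savings add up to $6(n-1)$ in every case, since $4+3$ and $5+3$ are both below $12$, leaving room. The case $n=2$ is also confirmed by the simulation value $S_{2,2}=6$.

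\textbf{Main obstacle.} The delicate point is the end-column accounting. We must argue carefully, using the entry constraints noted before Lemma~\ref{pr:2Rbound} ($t_{in}$ is usable only at the very start, and a $b_{out}$ exit ends the motion), that an end column never achieves the full $6$. We must also verify that the first-visit classification of each column is exhaustive.
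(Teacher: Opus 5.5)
Your per-column bounds are fine, but adding them up does not give $6(n-1)$. When the initial motion is in an end column you get $4+3+6(n-2)=6n-5$ and then claim $6n-5\le 6(n-1)$, which is false since $6(n-1)=6n-6$.

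The side-start case has the same defect. With your bounds $5$ (or the tight $4$) for column $1$ and $3$ for column $n$, the two end columns contribute $8$ (resp.\ $7$) rather than the $6$ you need. So you only reach $6n-4$ (resp.\ $6n-5$). Your closing remark that $4+3$ and $5+3$ are below $12$ shows only that the total is below $6n$.

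The missing idea is the one behind the $-1$ in the paper's proof. The out motion is the final step of the whole trajectory, so it occurs exactly once. Yet the bound $4$ for the initial column (initial, bounce, out) and the bound $3$ for a bouncing end column (bounce, out) each already include an out step. Both cannot be realized, so the two end columns contribute at most $4+3-1=6$, and the total is at most $6+6(n-2)=6(n-1)$. Charging the unique out step to an end column is legitimate: a column whose first motion is a crossing never performs an out motion, since its motions are forced to alternate $C_l,C_r,C_l$. If the out occurs in a column whose first motion is not a crossing, that column contributes at most $4<6$, so the bound only improves.

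The same idea repairs the side start, provided you use the tight value instead of $2+3$. After the entering crossing, the only motion column $1$ allows when the ant returns from the right is the reverse crossing (the alternation in Lemma~\ref{pr:2Rbound}(4)). That crossing leaves through the outer side and ends the motion. There are two cases:
\begin{itemize}
\item Column $1$ contributes $2$ and column $n$ at most $3$, for a total of at most $6n-7$.
\item Column $1$ contributes $4$ and the escape happens there. Then column $n$ can only have bounced, so it contributes at most $2$, for a total of at most $6n-6$.
\end{itemize}
Treating side starts at all is a worthwhile addition: the paper's argument tacitly assumes each boundary column is first entered by an initial or a bouncing motion.

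A minor point: an end column can be crossed from the inside; this simply ends the motion after two steps. So your claims that it ``cannot be'' crossed are inaccurate, although your bound $3$ survives.
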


\begin{proof}
For $S_2=6$ by simulation. For $n>2$, the grid consists of two boundary columns and $n-2$ interior columns. 
To maximize the number of steps, the ant must perform a crossing motion within each interior column upon its first visit, and a bouncing or initial motion within the boundary columns. Since the initial motion can occur only once, the optimal strategy is to perform an initial motion within one boundary column, a bouncing motion within the other, and crossing motions on all interior columns. Thus by Lemma \ref{pr:2Rbound} $$S_{2,n}\le 4 +3 -1 + 6(n-2)  = 6(n-1). $$ The term $-1$ accounts for the fact that the out motion occurs only once but would otherwise be counted twice in the contribution of the boundary columns.
\end{proof}

This bound is tight. Figure~\ref{fig:2max} exhibits explicit configurations
achieving $S_{2,n} = 6(n-1)$ for all $n \ge 2$, with separate constructions
for even and odd $n$.

\begin{figure}[H]
\centering

\begin{tikzpicture}[
scale=0.6,
vertex/.style={circle, draw, minimum size=2.5mm, inner sep=0pt},
fvertex/.style={circle, draw, fill=black, minimum size=2.5mm, inner sep=0pt},
gvertex/.style={circle, draw, fill=gray!60, minimum size=2.5mm, inner sep=0pt},
>=Stealth,
thick
]

\node(t1) at (-3,0.75){$n$ even};
\node[vertex] (0v0) at (0,0) {};
\node[fvertex] (1v0) at (1.5,0) {};
\node (2v0) at (3,0) {};
\node (3v0) at (4.5,0) {};
\node[fvertex] (4v0) at (6,0) {};
\node[vertex] (5v0) at (7.5,0) {};
\node(6v0) at (9,0) {};

\node[fvertex] (0v1) at (0,1.5) {};
\node[vertex] (1v1) at (1.5,1.5) {};
\node (2v1) at (3,1.5) {};
\node (3v1) at (4.5,1.5) {};
\node[vertex] (4v1) at (6,1.5) {};
\node[vertex] (5v1) at (7.5,1.5) {};
\node(6v1) at (9,1.5) {};


\draw[gray] (-0.75, -0.5) -- (2.25, -0.5);
\draw[gray] (-0.75, 2) -- (2.25, 2);

\draw[gray, dotted](2.25, -0.5)--(5.25, -0.5);
\draw[dotted](2.75, 0.75)--(4.75, 0.75);
\draw[gray, dotted](2.25, 2)--(5.25, 2);

\draw[gray] (5.25, -0.5) -- (8.25, -0.5);
\draw[gray] (5.25, 2) -- (8.25, 2);

\draw[gray] (-0.75, -0.5) -- (-0.75, 2);
\draw[gray] (0.75, -0.5) -- (0.75, 2);
\draw[gray, dotted] (2.25, -0.5) -- (2.25, 2);
\draw[gray, dotted] (5.25, -0.5) -- (5.25, 2);
\draw[gray] (6.75, -0.5) -- (6.75, 2);
\draw[gray] (8.25, -0.5) -- (8.25, 2);


\draw[->] (-1.2,0) -- (0v0);
\draw[<-] (0v0) -- (1v0);
\draw[->] (1v0) -- (2v0);
\draw[->] (3v0) -- (4v0);
\draw[<-] (4v0) -- (5v0);
\draw[->] (5v0) -- (6v0);

\draw[<-] (-1.2,1.5) -- (0v1);
\draw[->] (0v1) -- (1v1);
\draw[<-] (1v1) -- (2v1);
\draw[<-] (3v1) -- (4v1);
\draw[->] (4v1) -- (5v1);
\draw[<-] (5v1) -- (6v1);


\draw[<-] (0,-1.2) -- (0v0);
\draw[->] (0v0) -- (0v1);
\draw[double, ultra thick, <-] (0v1) -- (0,2.7) node[at end, above] {$BC_1$};

\draw[->] (1.5,-1.2) -- (1v0);
\draw[<-] (1v0) -- (1v1);
\draw[->] (1v1) -- (1.5,2.7);

\draw[<-] (6,-1.2) -- (4v0);
\draw[->] (4v0) -- (4v1);
\draw[<-] (4v1) -- (6,2.7);

\draw[->] (7.5,-1.2) -- (5v0);
\draw[<-] (5v0) -- (5v1);
\draw[->] (5v1) -- (7.5,2.7) node[at end, above] {$BC_2$};
\end{tikzpicture}

\vspace{0.5em}

\begin{tikzpicture}[
scale=0.6,
vertex/.style={circle, draw, minimum size=2.5mm, inner sep=0pt},
fvertex/.style={circle, draw, fill=black, minimum size=2.5mm, inner sep=0pt},
gvertex/.style={circle, draw, fill=gray!60, minimum size=2.5mm, inner sep=0pt},
>=Stealth,
thick
]

\node(t1) at (-3,0.75){$n$ odd};
\node[vertex] (0v0) at (0,0) {};
\node[fvertex] (1v0) at (1.5,0) {};
\node (2v0) at (3,0) {};
\node (3v0) at (4.5,0) {};
\node[fvertex] (4v0) at (6,0) {};
\node[fvertex] (5v0) at (7.5,0) {};
\node(6v0) at (9,0) {};

\node[fvertex] (0v1) at (0,1.5) {};
\node[vertex] (1v1) at (1.5,1.5) {};
\node (2v1) at (3,1.5) {};
\node (3v1) at (4.5,1.5) {};
\node[vertex] (4v1) at (6,1.5) {};
\node[fvertex] (5v1) at (7.5,1.5) {};
\node(6v1) at (9,1.5) {};


\draw[gray] (-0.75, -0.5) -- (2.25, -0.5);
\draw[gray] (-0.75, 2) -- (2.25, 2);

\draw[gray, dotted](2.25, -0.5)--(5.25, -0.5);
\draw[dotted](2.75, 0.75)--(4.75, 0.75);
\draw[gray, dotted](2.25, 2)--(5.25, 2);

\draw[gray] (5.25, -0.5) -- (8.25, -0.5);
\draw[gray] (5.25, 2) -- (8.25, 2);

\draw[gray] (-0.75, -0.5) -- (-0.75, 2);
\draw[gray] (0.75, -0.5) -- (0.75, 2);
\draw[gray, dotted] (2.25, -0.5) -- (2.25, 2);
\draw[gray, dotted] (5.25, -0.5) -- (5.25, 2);
\draw[gray] (6.75, -0.5) -- (6.75, 2);
\draw[gray] (8.25, -0.5) -- (8.25, 2);


\draw[->] (-1.2,0) -- (0v0);
\draw[<-] (0v0) -- (1v0);
\draw[->] (1v0) -- (2v0);
\draw[<-] (3v0) -- (4v0);
\draw[->] (4v0) -- (5v0);
\draw[<-] (5v0) -- (6v0);

\draw[<-] (-1.2,1.5) -- (0v1);
\draw[->] (0v1) -- (1v1);
\draw[<-] (1v1) -- (2v1);
\draw[->] (3v1) -- (4v1);
\draw[<-] (4v1) -- (5v1);
\draw[->] (5v1) -- (6v1);


\draw[<-] (0,-1.2) -- (0v0);
\draw[->] (0v0) -- (0v1);
\draw[double, ultra thick, <-] (0v1) -- (0,2.7) node[at end, above] {$BC_1$};

\draw[->] (1.5,-1.2) -- (1v0);
\draw[<-] (1v0) -- (1v1);
\draw[->] (1v1) -- (1.5,2.7);

\draw[->] (6,-1.2) -- (4v0);
\draw[<-] (4v0) -- (4v1);
\draw[->] (4v1) -- (6,2.7);

\draw[<-] (7.5,-1.2) -- (5v0);
\draw[->] (5v0) -- (5v1);
\draw[<-] (5v1) -- (7.5,2.7) node[at end, above] {$BC_2$};

\end{tikzpicture}
\caption{Configurations achieving $S_{2,n}=6(n-1)$. The initial position of the ant is indicated by the double arrow. All intermediate columns have configuration ${R \brack L}$. The ant starts with an initial motion within $BC_1$, traverses the intermediate columns to $BC_2$ where it bounces, returns to $BC_1$ for a second bounce, and finally reaches $BC_2$ again to perform an out motion.}
\label{fig:2max}
\end{figure}

\subsubsection{Three-row Grids}\label{sec:threerow}

When viewed column by column, a three-row grid (either vertical or horizontal) can be seen as an alternating
succession of horizontal and vertical $G_{3,1}$ ( see Figure$~\ref{fig:3rowgrid}$).
There are four possible types of motion that the ant can perform on a column of a three-row grid: Initial, Bouncing, Crossing and Out motions. However, no Out motion can occur on a vertical $G_{3,1}$, and no Initial motion can occur on a horizontal $G_{3,1}$.

\begin{figure}[H]
\centering

\begin{subfigure}{0.45\textwidth}
\centering
\begin{tikzpicture}[
scale=0.5,
vertex/.style={circle, draw, minimum size=2.5mm, inner sep=0pt},
fvertex/.style={circle, draw, fill=black, minimum size=2.5mm, inner sep=0pt},
gvertex/.style={circle, draw, fill=gray!60, minimum size=2.5mm, inner sep=0pt},
>=Stealth,
thick
]
\def\k{3} 
\def\n{7} 
\def\spacing{1.5} 
\def\externalarrow{1.2} 
\def\margin{0.5} 
\foreach \i in {0,...,\numexpr\n-1} {
 \foreach \j in {0,...,\numexpr\k-1} {
 \pgfmathsetmacro{\x}{\i*\spacing}
 \pgfmathsetmacro{\y}{\j*\spacing}
 \node[gvertex] (\i v\j) at (\x,\y) {};
 }
}
\foreach \i/\j in {} {
 \pgfmathsetmacro{\x}{\i*\spacing}
 \pgfmathsetmacro{\y}{\j*\spacing}
 \node[fvertex] (\i v\j) at (\x,\y) {};
}
\pgfmathsetmacro{\upbound}{(\k-1)*\spacing + \margin}
\pgfmathsetmacro{\bottombound}{-\margin}
\draw[gray] (-\spacing, \upbound) -- ({(\n-1)*\spacing + \spacing}, \upbound);
\draw[gray] (-\spacing, \bottombound) -- ({(\n-1)*\spacing + \spacing}, \bottombound);
\foreach \i in {0,...,\n} {
 \pgfmathsetmacro{\x}{\i*\spacing - \spacing/2}
 \draw[gray] (\x, \bottombound) -- (\x, \upbound);
}
\foreach \j in {0,...,\numexpr\k-1} {
 \pgfmathsetmacro{\y}{\j*\spacing}
 \pgfmathsetmacro{\dir}{int(mod(\j,2)==0?1:0)}
 \ifnum\dir=1
 \draw[->] (-\externalarrow,\y) -- (0v\j);
 \else
 \draw[<-] (-\externalarrow,\y) -- (0v\j);
 \fi
 \foreach \i in {0,...,\numexpr\n-2} {
 \pgfmathsetmacro{\arrowdir}{int(mod(\i+\j,2)==0?1:0)}
 \ifnum\arrowdir=1
 \draw[<-] (\i v\j) -- (\the\numexpr\i+1\relax v\j);
 \else
 \draw[->] (\i v\j) -- (\the\numexpr\i+1\relax v\j);
 \fi
 }
 \pgfmathsetmacro{\xright}{(\n-1)*\spacing + \externalarrow}
 \pgfmathsetmacro{\lastdir}{int(mod(\n-1+\j,2)==0?1:0)}
 \ifnum\lastdir=1
 \draw[<-] (\the\numexpr\n-1\relax v\j) -- (\xright,\y);
 \else
 \draw[->] (\the\numexpr\n-1\relax v\j) -- (\xright,\y);
 \fi
}
\foreach \i in {0,...,\numexpr\n-1} {
 \pgfmathsetmacro{\x}{\i*\spacing}
 \pgfmathsetmacro{\dir}{int(mod(\i,2)==0?1:0)}
 \ifnum\dir=1
 \draw[<-] (\x,-\externalarrow) -- (\i v0)node[at start, below] {H};
 \else
 \draw[->] (\x,-\externalarrow) -- (\i v0)node[at start, below] {V};
 \fi
 \foreach \j in {0,...,\numexpr\k-2} {
 \pgfmathsetmacro{\arrowdir}{int(mod(\i+\j,2)==0?1:0)}
 \ifnum\arrowdir=1
 \draw[->] (\i v\j) -- (\i v\the\numexpr\j+1\relax);
 \else
 \draw[<-] (\i v\j) -- (\i v\the\numexpr\j+1\relax);
 \fi
 }
 \pgfmathsetmacro{\yup}{(\k-1)*\spacing + \externalarrow}
 \pgfmathsetmacro{\lastdir}{int(mod(\i+\k-1,2)==0?1:0)}
 \ifnum\lastdir=1
 \draw[->] (\i v\the\numexpr\k-1\relax) -- (\x,\yup);
 \else
 \draw[<-] (\i v\the\numexpr\k-1\relax) -- (\x,\yup);
 \fi
}

\end{tikzpicture}
\caption{Portion of three-row grid }
\label{fig:grid37c}
\end{subfigure}
\hfill
\begin{subfigure}{0.25\textwidth}
\centering
\begin{tikzpicture}[
scale=0.5,
vertex/.style={circle, draw, minimum size=2.5mm, inner sep=0pt},
fvertex/.style={circle, draw, fill=black, minimum size=2.5mm, inner sep=0pt},
gvertex/.style={circle, draw, fill=gray!60, minimum size=2.5mm, inner sep=0pt},
>=Stealth,
thick
]
\def\spacing{1.5}
\def\externalarrow{1.2}
\def\margin{0.5}
\node[gvertex] (0v0) at (0,0) {};
\node[gvertex] (0v1) at (0,1.5) {};
\node[gvertex] (0v2) at (0,3) {};
\draw[gray] (-\spacing/2, -\margin) -- (\spacing/2, -\margin);
\draw[gray] (-\spacing/2, 3+ \margin) -- (\spacing/2, 3 + \margin);
\draw[gray] (-\spacing/2, -\margin) -- (-\spacing/2, 3+ \margin);
\draw[gray] (\spacing/2, -\margin) -- (\spacing/2, 3+ \margin);

\draw[->] (-\externalarrow, 0) -- (0v0);
\draw[<-] (0v0) -- (\externalarrow, 0);
\draw[<-] (-\externalarrow, 1.5) -- (0v1);
\draw[->] (0v1) -- (\externalarrow, 1.5);
\draw[->] (-\externalarrow, 3) -- (0v2);
\draw[<-] (0v2) -- (\externalarrow, 3);

\draw[<-] (0, -\externalarrow) -- (0v0)node[at start, below] {H};
\draw[->] (0v0) -- (0v1);
\draw[<-] (0v1) -- (0v2);
\draw[<-] (0,4.2) -- (0v2);
\end{tikzpicture}
\caption{Horizontal $G_{3,1}$}
\label{fig:grid31h}
\end{subfigure}
\hfill
\begin{subfigure}{0.2\textwidth}
\centering
\begin{tikzpicture}[
scale=0.5,
vertex/.style={circle, draw, minimum size=2.5mm, inner sep=0pt},
fvertex/.style={circle, draw, fill=black, minimum size=2.5mm, inner sep=0pt},
gvertex/.style={circle, draw, fill=gray!60, minimum size=2.5mm, inner sep=0pt},
>=Stealth,
thick
]
\def\spacing{1.5}
\def\externalarrow{1.2}
\def\margin{0.5}
\node[gvertex] (0v0) at (0,0) {};
\node[gvertex] (0v1) at (0,1.5) {};
\node[gvertex] (0v2) at (0,3) {};
\draw[gray] (-\spacing/2, -\margin) -- (\spacing/2, -\margin);
\draw[gray] (-\spacing/2, 3+ \margin) -- (\spacing/2, 3 + \margin);
\draw[gray] (-\spacing/2, -\margin) -- (-\spacing/2, 3+ \margin);
\draw[gray] (\spacing/2, -\margin) -- (\spacing/2, 3+ \margin);
\draw[<-] (-\externalarrow, 0) -- (0v0);
\draw[->] (0v0) -- (\externalarrow, 0);
\draw[->] (-\externalarrow, 1.5) -- (0v1);
\draw[<-] (0v1) -- (\externalarrow, 1.5);
\draw[<-] (-\externalarrow, 3) -- (0v2);
\draw[->] (0v2) -- (\externalarrow, 3);
\draw[->] (0, -\externalarrow) -- (0v0)node[at start, below] {V};
\draw[<-] (0v0) -- (0v1);
\draw[->] (0v1) -- (0v2);
\draw[->] (0,4.2) -- (0v2);
\end{tikzpicture}
\caption{Vertical $G_{3,1}$}
\label{fig:grid31v}
\end{subfigure}

\caption{Three-row grid viewed by columns.}
\label{fig:3rowgrid}
\end{figure}

The main result of this section is:
\begin{theorem}\label{pr:theothreerow}
For every integer $n\ge2$, one has $S_{3,n} \le 10n-4$. 
\end{theorem}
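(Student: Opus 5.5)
We follow the column-by-column strategy used for two-row grids in Theorem~\ref{pr:theotworow}. The first step is a per-column lemma analogous to Lemma~\ref{pr:2Rbound}, now for the gadgets horizontal $G_{3,1}$ and vertical $G_{3,1}$. We label their in- and out-boundary arcs by side: three horizontal arcs on each of the left and right sides, plus one vertical arc at the top and one at the bottom. Then we tabulate every admissible entry--exit pair together with its step count and label it as an Initial, Bouncing, Crossing or Out motion.

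Unlike the two-row case, a motion inside a three-cell column may last more than two steps, for instance a crossing that uses the middle cell. We therefore tabulate the step counts rather than assume them. We also use the observation stated before the theorem: horizontal $G_{3,1}$ admits no Initial motion, and vertical $G_{3,1}$ admits no Out motion. This asymmetry matters, because the two boundary columns will be of different types or orientations depending on the parity of $n$.

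Next, for each gadget we bound the total number of steps performed inside it during the whole motion, classified by the type of the first motion performed there. The argument goes by enumerating the reachable color configurations of the column (eight per gadget), as in Figures~\ref{fig:grid21_bo}--\ref{fig:2C}. The bounds come from two facts:
\begin{itemize}
\item every entry from one side leaves the column in a state from which a later entry from the same side behaves deterministically;
\item each cell alternates its color at every visit.
\end{itemize}
The crucial and hardest case is the crossing motion. In the two-row proof the ant oscillates between an original column $OC$ and a column $RC$ further along, and each bounce consumes a finite budget at $RC$. The analogous global argument here must show that an interior column can be crossed only a bounded number of times. The target is a total of at most $10$ steps per interior column, for example five two-step crossings, or fewer crossings if some of them take three steps.

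I would first try to prove that the boundary columns can absorb only a bounded number of bounces, say two or three, before forcing an Out motion, exactly as $RC$ did in Lemma~\ref{pr:2Rbound}. I would then propagate this by induction outward from the boundary: the number of direction reversals of the ant at column $j$ is controlled by the reversals available at the columns beyond it. This propagation step is the main obstacle. In a three-row column the ant can also bounce inside an interior column, and that can create extra reversals, so ruling out unbounded back-and-forth may require a potential argument on the colors of the middle row.

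Finally, we assemble the bound as in Theorem~\ref{pr:theotworow}. The total is at most $10(n-2)$ for the interior columns plus the contributions of the two boundary columns: an Initial-first column, a Bouncing-first column, and one shared Out motion. We aim for these boundary contributions to sum to $16$, giving
\[
S_{3,n}\le 10(n-2)+16=10n-4 .
\]
The case $n=2$ is checked directly against Table~\ref{tab:S99}, which gives $S_{3,2}=12\le 16$. The constant $16$ comes from the worst-case boundary totals in the per-gadget lemma, so we optimise over both orientations, horizontal and vertical $G_{3,n}$.
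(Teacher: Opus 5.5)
\paragraph{Main gap: why a column can be re-entered only finitely often.}
Your outline has the right shape (column gadgets, case analysis on the first motion, summation), but, as you acknowledge, the step that makes the bound finite is missing.

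The two local facts you list (determinism, colour alternation) cannot supply it. In isolation a column can be crossed back and forth forever, by alternating top--middle crossings as in the $C_l\leftrightarrow C_r$ loop of the two-row case. So the bound must come from a global argument about where the ant turns around.

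The two-row mechanism of a bounce budget followed by a forced Out motion does not transfer either. A vertical $G_{3,1}$ can absorb two bounces and has no Out motion at all.

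The paper closes this gap with three ingredients that your proposal lacks:
\begin{itemize}
\item The blocking property of horizontal columns (Lemma~\ref{pr:Hblocking}). After a crossing, a re-entry through a cell not used by that crossing leaves the column uncrossable.
\item After the ant crosses a vertical column, it can cross it at most twice more (Lemma~\ref{pr:Hcrossing}). The proof follows the excursion to the column $RC$ where the ant turns back. If the excursion stays in two rows, either $RC$ is horizontal and the next visit there is an Out, or the horizontal column just left of $RC$ is re-entered through its unused bottom cell. If the excursion uses three rows, the last horizontal column where the row pair changes is re-entered through its unused cell. In every case the third return is cut off, by an Out at $RC$ or by a blocking column.
\item The resulting cap of $10$ steps for a bounce-first vertical column (Lemma~\ref{pr:Hbouncing}).
\end{itemize}
An induction outward from the boundary, supported only by a speculative potential on the middle row, does not replace this.

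\paragraph{Accounting and smaller points.}
You aim for a uniform cap of $10$ steps on every interior column, horizontal ones included. The paper never bounds an individual horizontal column, and such a cap is not evident. A horizontal column is fed by both vertical neighbours, and each can send the ant into it up to three times.

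The paper instead bounds only vertical columns: $6$, $10$ and $11$ steps for crossing-first, bouncing-first and initial-first columns (Corollary~\ref{pr:Vcolbound}). It then uses the fact that consecutive two-step motions alternate between horizontal and vertical columns, which gives $SH_{3,n}\le SV_{3,n}+2$. A parity case analysis of the vertical boundary columns yields $SV_{3,n}\le 5n-3$, hence $S_{3,n}\le 10n-4$.

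Your boundary constant $16$ is read off from the target, via $10n-4=10(n-2)+16$, rather than derived.

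Finally, there are no three-step motions. Every bounce or crossing in a $G_{3,1}$ takes exactly two steps, because lateral passage always goes through the middle cell: top or bottom then middle in a horizontal column, middle then top or bottom in a vertical one.
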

The proof of Theorem \ref{pr:theothreerow} relies on Corollary~\ref{pr:Vcolbound}, which bounds the 
number of steps the ant can perform within each vertical column 
depending on the type of its first motion.  This corollary itself 
follows from two lemmas - Lemma~\ref{pr:Hcrossing} and 
Lemma~\ref{pr:Hbouncing} - bounding the additional steps after a 
crossing and a bouncing motion respectively, both of which rely on 
the blocking behavior of horizontal columns established in 
Lemma~\ref{pr:Hblocking}. We now state and prove these results.

We start with a result concerning horizontal $G_{3,1}$.
Let's label its four in-boundary arcs  by $lt_{in},lb_{in},rt_{in}$
and $rb_{in}$ and its four out-boundary arcs by  $l_{out},r_{out},t_{out}$ and
$b_{out}$ as illustrated on Figure$~\ref{fig:Hgrid31_io}$. 

\begin{figure}[H]
\centering

\centering
\begin{tikzpicture}[
scale=0.5,
vertex/.style={circle, draw, minimum size=2.5mm, inner sep=0pt},
fvertex/.style={circle, draw, fill=black, minimum size=2.5mm, inner sep=0pt},
gvertex/.style={circle, draw, fill=gray!60, minimum size=2.5mm, inner sep=0pt},
>=Stealth,
thick
]

\def\spacing{1.5}
\def\externalarrow{1.2}
\def\margin{0.5}
\node[gvertex] (0v0) at (0,0) {};
\node[gvertex] (0v1) at (0,1.5) {};
\node[gvertex] (0v2) at (0,3) {};

\draw[gray] (-\spacing/2, -\margin) -- (\spacing/2, -\margin);
\draw[gray] (-\spacing/2, 3+ \margin) -- (\spacing/2, 3 + \margin);
\draw[gray] (-\spacing/2, -\margin) -- (-\spacing/2, 3+ \margin);
\draw[gray] (\spacing/2, -\margin) -- (\spacing/2, 3+ \margin);
\draw[->] (-\externalarrow, 0) -- (0v0) node[at start, left] {$lb_{in}$};
\draw[<-] (0v0) -- (\externalarrow, 0) node[at end, right] {$rb_{in}$};
\draw[<-] (-\externalarrow, 1.5) -- (0v1) node[at start, left] {$l_{out}$};
\draw[->] (0v1) -- (\externalarrow, 1.5) node[at end, right] {$r_{out}$};
\draw[->] (-\externalarrow, 3) -- (0v2)node[at start, left] {$lt_{in}$};
\draw[<-] (0v2) -- (\externalarrow, 3)node[at end, right] {$rt_{in}$};

\draw[<-] (0, -\externalarrow) -- (0v0) node[at start, below] {$b_{out}$};
\draw[->] (0v0) -- (0v1);
\draw[<-] (0v1) -- (0v2) ;
\draw[<-] (0,4.2) -- (0v2)node[at start, above] {$t_{out}$};
\end{tikzpicture}
\caption{Horizontal $G_{3,1}$ with labeled in-boundary and out-boundary arcs.}
\label{fig:Hgrid31_io}
\end{figure}

\begin{lemma}\label{pr:Hblocking}
During the motion of the ant on a three-row grid, suppose the ant performs 
a crossing motion within a column that is a horizontal $G_{3,1}$ and subsequently 
re-enters it via an arc incident to a cell not exited during that 
crossing. Then the ant cannot perform any further crossing motion through 
this column before escaping the three-row grid.
\end{lemma}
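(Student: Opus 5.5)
The plan is a finite case analysis on the three cells of the horizontal $G_{3,1}$: top $c_t$, middle $c_m$, bottom $c_b$. First I would record the local geometry from Figure~\ref{fig:Hgrid31_io}. The arcs $lb_{in}, rb_{in}$ enter $c_b$ horizontally, and $c_b$ sends the ant either up to $c_m$ or out through $b_{out}$. Symmetrically, $lt_{in}, rt_{in}$ enter $c_t$, which sends the ant either down to $c_m$ or out through $t_{out}$. Finally, $c_m$ is entered only vertically, from $c_b$ or $c_t$, and leaves only through $l_{out}$ or $r_{out}$.

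Hence every crossing or bouncing motion exits exactly two cells, namely $c_m$ together with one of $c_b, c_t$. In particular, a crossing uses $c_m$ and one "end cell", and the cell not exited is the other end cell. The side to which $c_m$ sends the ant is determined by two things: the state of $c_m$, and whether the ant arrives heading up (from $c_b$) or heading down (from $c_t$). An $R$ state sends the ant east when it heads up and west when it heads down, and an $L$ state does the reverse.

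Second, I would use the same topological observation as in the proof of Lemma~\ref{pr:2Rbound}. In a three-row grid, the only arcs joining the left part to the right part pass through this column. So after a crossing from left to right, the ant can re-enter the column only from the right (via $rt_{in}$ or $rb_{in}$), and symmetrically for the other direction. Using the left--right reflection (which exchanges $L$ and $R$) and the top--bottom reflection (which fixes the horizontal $G_{3,1}$ up to relabelling), I would reduce to a single case: the first crossing goes left to right through $c_b$ then $c_m$. In that case $c_b$ was $L$ and $c_m$ was $R$, and after the crossing $c_b=R$ and $c_m=L$, while $c_t$ keeps an arbitrary state $x$.

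Third, I would trace the re-entry through $rt_{in}$ into $c_t$ (the non-exited cell). If $x=R$, the ant turns right onto $t_{out}$ and escapes. If $x=L$, it goes down to $c_m$, which is $L$, so it is sent east: this is a bounce, not a crossing. After this bounce $c_t=R$ and $c_m=R$. From then on, every subsequent re-entry again comes from the right, and I would enumerate the finitely many reachable states of $(c_t,c_m,c_b)$ together with the entry arcs ($rt_{in}$ or $rb_{in}$). The aim is to show that each transition is one of three things: an escape through $t_{out}$ or $b_{out}$, a bounce back to the right, or a move into a state already shown to lead only to bounces or escapes. Concretely, after the bounce, entering via $rb_{in}$ (where $c_b=R$) goes up into $c_m=R$ heading up, which sends the ant east (another bounce), while entering via $rt_{in}$ escapes.

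The main obstacle is this closure step. I must make sure that no sequence of right-side bounces, alternately flipping $c_m$ and the end cells, can ever restore a configuration where $c_m$ sends the ant west. To handle it, I would look for an invariant on the triple $(c_t,c_m,c_b)$ that is preserved by every right-side bounce and incompatible with a westward exit. If no clean invariant appears, I would fall back on the finite transition graph over the $2^3$ states and the two right-entry arcs, verifying by exhaustion that from the post-crossing states no westward exit is reachable before an escape.
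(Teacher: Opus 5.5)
Your proposal is correct and is essentially the paper's argument. The paper treats the mirror-image case (a top--middle crossing from left to right, then re-entry through $rb_{in}$) and runs the same finite case trace you describe; its states are the images of yours under the top--bottom reflection, which, like any reflection, also swaps $L$ and $R$. The closure step you flag as the main obstacle is immediate, so no invariant is needed: the bounce from $(R,R,R)$ via $rb_{in}$ produces $(R,L,L)$, and from that state entry via $rt_{in}$ leads straight to $t_{out}$ while entry via $rb_{in}$ leads straight to $b_{out}$.
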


\begin{proof}
Let $OC$ denote the horizontal $G_{3,1}$ under consideration. We denote by 
\scalebox{0.6}{$\left[\begin{array}{c}C_{1}\\C_{2}\\C_{3}\end{array}\right]$} the color configuration of $OC$, where $C_{1},C_{2},C_{3}\in\{L,R\}$ correspond respectively to the state of the top, middle, and bottom cells.
Assume without loss of generality that the ant crosses $OC$ from left to right using cells of the top and the middle rows (enter through $lt_{in}$ and exit through $r_{out}$). The argument for the other types of crossings follows by symmetry. Figure~\ref{fig:3Blocking} illustrates the arguments that follow.

Before the crossing, the color configuration of the column is either  \scalebox{0.6}{$\left[\begin{array}{c}R\\L\\L\end{array}\right]$} or  \scalebox{0.6}{$\left[\begin{array}{c}R\\L\\R\end{array}\right]$}. 
After the crossing, the ant can subsequently return to the column only from the right and in order to re-enter it through an arc incident to a cell  different from the top and the middle cells, it must enter through $rb_{in}$. 

\textbf{Case A.} Suppose the initial color configuration  is \scalebox{0.6}{$\left[\begin{array}{c}R\\L\\L\end{array}\right]$}. 
 After the crossing, it becomes \scalebox{0.6}{$\left[\begin{array}{c}L\\R\\L\end{array}\right]$}. When the ant returns  through $rb_{in}$, it must perform an out motion reaching $b_{out}$, thereby escaping the three-row grid.
 
 \textbf{Case B.} Suppose the initial color configuration is
 \scalebox{0.6}{$\left[\begin{array}{c}R\\L\\R\end{array}\right]$}.
 After the crossing, it becomes \scalebox{0.6}{$\left[\begin{array}{c}L\\R\\R\end{array}\right]$}. When the ant returns through $rb_{in}$, it performs a bouncing motion updating the color configuration to 
 \scalebox{0.6}{$\left[\begin{array}{c}L\\L\\L\end{array}\right]$}. The ant can subsequently return to the column only from the right through $rb_{in}$ (\textbf{B.A}) or $rt_{in}$ (\textbf{B.B}). 
 
 \textbf{Case B.A.} If the ant again returns through $rb_{in}$, it must perform an out motion reaching $b_{out}$.
 
 \textbf{Case B.B.} If instead, the ant returns through $rt_{in}$, it performs a bouncing motion, yielding the color configuration \scalebox{0.6}{$\left[\begin{array}{c}R\\R\\L\end{array}\right]$}. 
 With this configuration, any subsequent return through either $rb_{in}$ (\textbf{B.B.A}) or $rt_{in}$ (\textbf{B.B.B}) necessarily leads to $b_{out}$ or $t_{out}$, respectively.
 
In all cases, the ant escapes the three-row grid without performing any further crossing motion through $OC$.

\begin{figure}[H]
\centering

\begin{tikzpicture}[
scale=0.6,
vertex/.style={circle, draw, minimum size=2.5mm, inner sep=0pt},
fvertex/.style={circle, draw, fill=black, minimum size=2.5mm, inner sep=0pt},
gvertex/.style={circle, draw, fill=gray!60, minimum size=2.5mm, inner sep=0pt},
>=Stealth,
thick
]

\begin{scope}[shift={(0, 0)}]
\node[gvertex] (start_v0) at (0,0) {};
\node[fvertex] (start_v1) at (0,1.5) {};
\node[vertex] (start_v2) at (0,3) {};

\draw[gray] (-0.75, -0.5) -- (0.75, -0.5);
\draw[gray] (-0.75, 3.5) -- (0.75, 3.5);
\draw[gray] (-0.75, -0.5) -- (-0.75, 3.5);
\draw[gray] (0.75, -0.5) -- (0.75, 3.5);

\draw[->, dashed] (-1.2, 0) -- (start_v0);
\draw[<-, dashed] (start_v0) -- (1.2, 0);
\draw[<-, dashed] (-1.2, 1.5) -- (start_v1);
\draw[->, dashed] (start_v1) -- (1.2, 1.5);
\draw[->, dashed] (-1.2, 3) -- (start_v2);
\draw[<-, dashed] (start_v2) -- (1.2, 3);

\draw[<-, dashed] (0, -1.2) -- (start_v0);
\draw[->, dashed] (start_v0) -- (start_v1);
\draw[<-, dashed] (start_v1) -- (start_v2);
\draw[<-, dashed] (0, 4.2) -- (start_v2);

\draw[ultra thick, rounded corners=8pt, ->] 
    (-1.2, 2.8) -- (-0.2, 2.8) -- (-0.2, 1.7) -- (1.3, 1.7);

\end{scope}

\begin{scope}[shift={(6, -3)}]
\node[fvertex] (caseA_v0) at (0,0) {};
\node[vertex] (caseA_v1) at (0,1.5) {};
\node[fvertex] (caseA_v2) at (0,3) {};

\draw[gray] (-0.75, -0.5) -- (0.75, -0.5);
\draw[gray] (-0.75, 3.5) -- (0.75, 3.5);
\draw[gray] (-0.75, -0.5) -- (-0.75, 3.5);
\draw[gray] (0.75, -0.5) -- (0.75, 3.5);

\draw[->, dashed] (-1.2, 0) -- (caseA_v0);
\draw[<-, dashed] (caseA_v0) -- (1.2, 0);
\draw[<-, dashed] (-1.2, 1.5) -- (caseA_v1);
\draw[->, dashed] (caseA_v1) -- (1.2, 1.5);
\draw[->, dashed] (-1.2, 3) -- (caseA_v2);
\draw[<-, dashed] (caseA_v2) -- (1.2, 3);

\draw[<-, dashed] (0, -1.2) -- (caseA_v0);
\draw[->, dashed] (caseA_v0) -- (caseA_v1);
\draw[<-, dashed] (caseA_v1) -- (caseA_v2);
\draw[<-, dashed] (0, 4.2) -- (caseA_v2);

\draw[ultra thick, rounded corners=8pt, ->] 
    (1.3, -0.2) -- (0.2, -0.2) -- (0.2, -1.3);

\end{scope}

\begin{scope}[shift={(6, 2.5)}]
\node[vertex] (caseB_v0) at (0,0) {};
\node[vertex] (caseB_v1) at (0,1.5) {};
\node[fvertex] (caseB_v2) at (0,3) {};

\draw[gray] (-0.75, -0.5) -- (0.75, -0.5);
\draw[gray] (-0.75, 3.5) -- (0.75, 3.5);
\draw[gray] (-0.75, -0.5) -- (-0.75, 3.5);
\draw[gray] (0.75, -0.5) -- (0.75, 3.5);

\draw[->, dashed] (-1.2, 0) -- (caseB_v0);
\draw[<-, dashed] (caseB_v0) -- (1.2, 0);
\draw[<-, dashed] (-1.2, 1.5) -- (caseB_v1);
\draw[->, dashed] (caseB_v1) -- (1.2, 1.5);
\draw[->, dashed] (-1.2, 3) -- (caseB_v2);
\draw[<-, dashed] (caseB_v2) -- (1.2, 3);

\draw[<-, dashed] (0, -1.2) -- (caseB_v0);
\draw[->, dashed] (caseB_v0) -- (caseB_v1);
\draw[<-, dashed] (caseB_v1) -- (caseB_v2);
\draw[<-, dashed] (0, 4.2) -- (caseB_v2);

\draw[ultra thick, rounded corners=8pt, ->] 
    (1.3, 0.2) -- (0.2, 0.2) -- (0.2, 1.3) -- (1.3, 1.3);

\end{scope}

\begin{scope}[shift={(12, -3)}]
\node[fvertex] (caseBA_v0) at (0,0) {};
\node[fvertex] (caseBA_v1) at (0,1.5) {};
\node[fvertex] (caseBA_v2) at (0,3) {};

\draw[gray] (-0.75, -0.5) -- (0.75, -0.5);
\draw[gray] (-0.75, 3.5) -- (0.75, 3.5);
\draw[gray] (-0.75, -0.5) -- (-0.75, 3.5);
\draw[gray] (0.75, -0.5) -- (0.75, 3.5);

\draw[->, dashed] (-1.2, 0) -- (caseBA_v0);
\draw[<-, dashed] (caseBA_v0) -- (1.2, 0);
\draw[<-, dashed] (-1.2, 1.5) -- (caseBA_v1);
\draw[->, dashed] (caseBA_v1) -- (1.2, 1.5);
\draw[->, dashed] (-1.2, 3) -- (caseBA_v2);
\draw[<-, dashed] (caseBA_v2) -- (1.2, 3);

\draw[<-, dashed] (0, -1.2) -- (caseBA_v0);
\draw[->, dashed] (caseBA_v0) -- (caseBA_v1);
\draw[<-, dashed] (caseBA_v1) -- (caseBA_v2);
\draw[<-, dashed] (0, 4.2) -- (caseBA_v2);

\draw[ultra thick, rounded corners=8pt, ->] 
    (1.3, -0.2) -- (0.2, -0.2) -- (0.2, -1.3);

\end{scope}

\begin{scope}[shift={(12, 2.5)}]
\node[fvertex] (caseBB_v0) at (0,0) {};
\node[fvertex] (caseBB_v1) at (0,1.5) {};
\node[fvertex] (caseBB_v2) at (0,3) {};

\draw[gray] (-0.75, -0.5) -- (0.75, -0.5);
\draw[gray] (-0.75, 3.5) -- (0.75, 3.5);
\draw[gray] (-0.75, -0.5) -- (-0.75, 3.5);
\draw[gray] (0.75, -0.5) -- (0.75, 3.5);

\draw[->, dashed] (-1.2, 0) -- (caseBB_v0);
\draw[<-, dashed] (caseBB_v0) -- (1.2, 0);
\draw[<-, dashed] (-1.2, 1.5) -- (caseBB_v1);
\draw[->, dashed] (caseBB_v1) -- (1.2, 1.5);
\draw[->, dashed] (-1.2, 3) -- (caseBB_v2);
\draw[<-, dashed] (caseBB_v2) -- (1.2, 3);

\draw[<-, dashed] (0, -1.2) -- (caseBB_v0);
\draw[->, dashed] (caseBB_v0) -- (caseBB_v1);
\draw[<-, dashed] (caseBB_v1) -- (caseBB_v2);
\draw[<-, dashed] (0, 4.2) -- (caseBB_v2);

\draw[ultra thick, rounded corners=8pt, ->] 
    (1.3, 2.8) -- (0.2, 2.8) -- (0.2, 1.7) -- (1.3, 1.7);

\end{scope}

\begin{scope}[shift={(18, -3)}]
\node[fvertex] (caseBBA_v0) at (0,0) {};
\node[vertex] (caseBBA_v1) at (0,1.5) {};
\node[vertex] (caseBBA_v2) at (0,3) {};

\draw[gray] (-0.75, -0.5) -- (0.75, -0.5);
\draw[gray] (-0.75, 3.5) -- (0.75, 3.5);
\draw[gray] (-0.75, -0.5) -- (-0.75, 3.5);
\draw[gray] (0.75, -0.5) -- (0.75, 3.5);

\draw[->, dashed] (-1.2, 0) -- (caseBBA_v0);
\draw[<-, dashed] (caseBBA_v0) -- (1.2, 0);
\draw[<-, dashed] (-1.2, 1.5) -- (caseBBA_v1);
\draw[->, dashed] (caseBBA_v1) -- (1.2, 1.5);
\draw[->, dashed] (-1.2, 3) -- (caseBBA_v2);
\draw[<-, dashed] (caseBBA_v2) -- (1.2, 3);

\draw[<-, dashed] (0, -1.2) -- (caseBBA_v0);
\draw[->, dashed] (caseBBA_v0) -- (caseBBA_v1);
\draw[<-, dashed] (caseBBA_v1) -- (caseBBA_v2);
\draw[<-, dashed] (0, 4.2) -- (caseBBA_v2);

\draw[ultra thick, rounded corners=8pt, ->] 
    (1.3, -0.2) -- (0.2, -0.2) -- (0.2, -1.3);

\end{scope}

\begin{scope}[shift={(18, 2.5)}]
\node[fvertex] (caseBBB_v0) at (0,0) {};
\node[vertex] (caseBBB_v1) at (0,1.5) {};
\node[vertex] (caseBBB_v2) at (0,3) {};

\draw[gray] (-0.75, -0.5) -- (0.75, -0.5);
\draw[gray] (-0.75, 3.5) -- (0.75, 3.5);
\draw[gray] (-0.75, -0.5) -- (-0.75, 3.5);
\draw[gray] (0.75, -0.5) -- (0.75, 3.5);

\draw[->, dashed] (-1.2, 0) -- (caseBBB_v0);
\draw[<-, dashed] (caseBBB_v0) -- (1.2, 0);
\draw[<-, dashed] (-1.2, 1.5) -- (caseBBB_v1);
\draw[->, dashed] (caseBBB_v1) -- (1.2, 1.5);
\draw[->, dashed] (-1.2, 3) -- (caseBBB_v2);
\draw[<-, dashed] (caseBBB_v2) -- (1.2, 3);

\draw[<-, dashed] (0, -1.2) -- (caseBBB_v0);
\draw[->, dashed] (caseBBB_v0) -- (caseBBB_v1);
\draw[<-, dashed] (caseBBB_v1) -- (caseBBB_v2);
\draw[<-, dashed] (0, 4.2) -- (caseBBB_v2);

\draw[ultra thick, rounded corners=8pt, ->] 
    (1.3, 3.2) -- (0.2, 3.2) -- (0.2, 4.2);

\end{scope}

\draw[line width=2pt, ->, dotted] (2.2, 1.5) -- (4, 0)
node[midway, below] {\textbf{A}};

\draw[line width=2pt, ->, dotted] (2.2, 1.5) -- (4, 2.5)
node[midway, above] {\textbf{B}};

\draw[line width=2pt, ->, dotted] (7.75, 3.8) -- (10.25, 0)
node[pos=0.8, left]  {\textbf{B.A}};

\draw[line width=2pt, ->, dotted] (7.75, 3.8) -- (10.25, 3.8)
node[midway, above]{\textbf{B.B}};

\draw[line width=2pt, ->, dotted] (13.75, 3.8) -- (16.25, 0)
node[pos=0.8, left] {\textbf{B.B.A}};

\draw[line width=2pt, ->, dotted] (13.75, 3.8) -- (16.25, 3.8)
node[midway, above] {\textbf{B.B.B}};

\end{tikzpicture}

\caption{Illustration of proof of Lemma \ref{pr:Hblocking}. }
\label{fig:3Blocking}
\end{figure}

\end{proof}

For simplicity, throughout the remainder of this section, we say that a horizontal column is \emph{blocking} if it is in a color configuration such that the ant cannot perform any further crossing motion through it before escaping the entire three-row grid.

We now turn to vertical columns. Label the in-boundary and out-boundary 
arcs of a vertical $G_{3,1}$ by $l_{in},r_{in},t_{in}$ and
$b_{in}$ and its four out-boundary arcs by $lt_{out},lb_{out},rt_{out}$
and $rb_{out}$ as illustrated on Figure$~\ref{fig:Vgrid31_io}$. 

\begin{figure}[H]

\centering
\begin{tikzpicture}[
scale=0.5,
vertex/.style={circle, draw, minimum size=2.5mm, inner sep=0pt},
fvertex/.style={circle, draw, fill=black, minimum size=2.5mm, inner sep=0pt},
gvertex/.style={circle, draw, fill=gray!60, minimum size=2.5mm, inner sep=0pt},
>=Stealth,
thick
]

\def\spacing{1.5}
\def\externalarrow{1.2}
\def\margin{0.5}
\node[gvertex] (0v0) at (0,0) {};
\node[gvertex] (0v1) at (0,1.5) {};
\node[gvertex] (0v2) at (0,3) {};

\draw[gray] (-\spacing/2, -\margin) -- (\spacing/2, -\margin);
\draw[gray] (-\spacing/2, 3+ \margin) -- (\spacing/2, 3 + \margin);
\draw[gray] (-\spacing/2, -\margin) -- (-\spacing/2, 3+ \margin);
\draw[gray] (\spacing/2, -\margin) -- (\spacing/2, 3+ \margin);
\draw[<-] (-\externalarrow, 0) -- (0v0) node[at start, left] {$lb_{out}$};
\draw[->] (0v0) -- (\externalarrow, 0) node[at end, right] {$rb_{out}$};
\draw[->] (-\externalarrow, 1.5) -- (0v1) node[at start, left] {$l_{in}$};
\draw[<-] (0v1) -- (\externalarrow, 1.5) node[at end, right] {$r_{in}$};
\draw[<-] (-\externalarrow, 3) -- (0v2)node[at start, left] {$lt_{out}$};
\draw[->] (0v2) -- (\externalarrow, 3)node[at end, right] {$rt_{out}$};

\draw[->] (0, -\externalarrow) -- (0v0) node[at start, below] {$b_{in}$};
\draw[<-] (0v0) -- (0v1);
\draw[->] (0v1) -- (0v2) ;
\draw[->] (0,4.2) -- (0v2)node[at start, above] {$t_{in}$};
\end{tikzpicture}

\caption{Vertical $G_{3,1}$ with labeled in-boundary and out-boundary arcs}
\label{fig:Vgrid31_io}
\end{figure}

When the ant enters through $l_{in}$ or $r_{in}$, it can exit through
any of the four exit arcs. In contrast, when the ant enters through
$t_{in}$ it can exit only through $lt_{out}$ or $rt_{out}$ and
when the ant enters through $b_{in}$ it can exit only through $lb_{out}$
or $rb_{out}$. We associate with each admissible entry--exit pair
a label to describe the motion of the ant, as summarized in Table$~\ref{tab:motiontable3}$.

\begin{table}[htbp]
\centering
\scalebox{0.9}{
\begin{tabular}{|c||c|c|c|c|c|c|c|c|c|c|c|c|}
\hline
 Entry & $t_{in}$ & $t_{in}$ & $b_{in}$ & $b_{in}$ & 
$l_{in}$ & $l_{in}$ & $l_{in}$  & $l_{in}$ & 
$r_{in}$ & $r_{in}$ & $r_{in}$  & $r_{in}$  \\
\hline
 Exit & $lt_{out}$ & $rt_{out}$ & $lb_{out}$ & $rb_{out}$ &
$lt_{out}$ & $lb_{out}$ & $rt_{out}$ & $rb_{out}$ & 
$rt_{out}$ & $rb_{out}$ & $lt_{out}$ & $lb_{out}$  \\
\hline
 Motion & $R$ & $L$ & $L$ & $R$ &
$LL$   & $RR$ & $LR$ & $RL$  & 
$RR$   & $LL$ & $RL$ & $LR$ \\
\hline
 Label & $I_{lt}$ & $I_{rt}$ & $I_{lb}$ & $I_{rb}$ &
$B_{lt}$   & $B_{lb}$ & $C_{lt}$ & $C_{lb}$  & 
$B_{rt}$   & $B_{rb}$ & $C_{rt}$ & $C_{rb}$ \\
\hline
\end{tabular}
}
\caption{Motions and labels associated to admissible entry-exit pairs. Initial motions are performed in one step, bouncing and crossing motions in two steps.}
\label{tab:motiontable3}
\end{table}

\begin{lemma}\label{pr:Hcrossing}
During its motion within a three-row grid, after performing a crossing motion within a column that is a vertical $G_{3,1}$, the ant can subsequently perform at most four additional steps on that column before escaping the three-row grid. Moreover, these additional four steps consist of two crossing motions in opposite directions.
\end{lemma}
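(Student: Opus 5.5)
By the left--right and top--bottom symmetries of the three-row grid, I will assume the first crossing performed within the vertical $G_{3,1}$ (call it $OC$) is $C_{lt}$. That is, the ant enters through $l_{in}$ and turns left on the middle cell, so the middle cell is $L$. It moves up to the top cell, which must be $R$, and leaves through $rt_{out}$. After this crossing the middle cell is $R$ and the top cell is $L$. The bottom cell is untouched.

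\textbf{Step 1: the dynamics inside $OC$.} The in-boundary arcs $t_{in}$ and $b_{in}$ of a vertical column are in-boundary arcs of the whole grid, so they can only be used as the starting position. Every later re-entry into $OC$ is therefore through $l_{in}$ or $r_{in}$. Since the ant left through $rt_{out}$, its next re-entry is through $r_{in}$. With the middle cell $R$ and the top cell $L$, the ant is forced to perform $C_{rt}$, which restores the configuration (middle $L$, top $R$). The next re-entry is then from the left through $l_{in}$ and forces $C_{lt}$ again. So, as in the two-row case (Figure~\ref{fig:2Cinfty}), the column alone allows an indefinite alternation $C_{lt},C_{rt},C_{lt},\dots$ using only the top and middle cells. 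The claim to prove is that, after the first $C_{lt}$, at most one $C_{rt}$ and one further $C_{lt}$ can occur; this gives the stated four steps.

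\textbf{Step 2: the right neighbour.} The ant leaves $OC$ into the top cell of the adjacent horizontal column $HC$ on the right, entering it through $lt_{in}$. To come back to $OC$ through $r_{in}$, it must eventually leave $HC$ through $l_{out}$, i.e.\ from the middle cell of $HC$. Entering $HC$ from $lt_{in}$, there are two possibilities.
\begin{itemize}
\item The ant bounces immediately in $HC$. This requires $HC$ to have top and middle cells ${R\brack R}$, and afterwards they are ${L\brack L}$. Any later entry through $lt_{in}$ then sends the ant to $t_{out}$, and it escapes.
\item The ant crosses $HC$ (top $R$, middle $L$, exit through $r_{out}$), travels further right, and later comes back into $HC$ from the right. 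It must then recross $HC$ towards $l_{out}$. After such a passage, Lemma~\ref{pr:Hblocking} applies whenever the ant uses the bottom cell, and otherwise the top and middle cells of $HC$ simply alternate, as for $OC$.
\end{itemize}

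\textbf{Step 3: the far-side argument.} I will then reproduce the two-row argument around Figure~\ref{fig:2C}. Follow the ant to the first column $RC$ to the right of $OC$ in which it does something other than a pure crossing. Every intermediate column is crossed back and forth in a rigid alternating pattern, so the first return relies on a bouncing motion in $RC$. By the $[R,R]\to[L,L]$ computation of Step 2 for horizontal columns (and its analogue for vertical columns), a second arrival at $RC$ along the same row-pair gives an out motion. It might instead be deflected to the bottom row, but by Lemma~\ref{pr:Hblocking} that renders the column blocking, so no further crossing can bring the ant back. The same argument on the left side bounds returns through $l_{in}$. Together these show that the sequence of crossings of $OC$ is at most $C_{lt},C_{rt},C_{lt}$, i.e.\ four additional steps made of two crossings in opposite directions.

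\textbf{Main obstacle.} The hard step is Step 3. Unlike the two-row case, the ant travelling right may switch between the top--middle and middle--bottom row-pairs inside intermediate columns, so the ``rigid traversal'' claim is not automatic. I would first try to prove, by induction on the distance from $OC$, that each intermediate column's relevant pair of cells is restored after each round trip. Any deviation from that pattern would be a bounce or a row switch, and Lemma~\ref{pr:Hblocking} should make the affected column blocking, cutting off any third return to $OC$.
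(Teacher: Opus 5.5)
Your Steps~1 and~2 are correct, and your plan follows the paper's: take $RC$ to be the first column right of $OC$ where the ant does not simply cross, use the two-row picture, and invoke Lemma~\ref{pr:Hblocking}. But there is a genuine gap exactly at your ``main obstacle'': the case where the forward motion from $OC$ to the bounce on $RC$ uses all three rows. There the ``rigid alternating pattern'' is false, and ``Lemma~\ref{pr:Hblocking} should make the affected column blocking'' is an intention, not an argument.

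The missing idea has two parts. First, a vertical column and the horizontal column that follows it always use the same pair of rows. A vertical crossing exits on the top or bottom row, and a horizontal column entered there sends the ant to its middle cell unless it escapes. So the row pair can change only when passing from a horizontal column into the next vertical one. Hence the last column $LHC$ before $RC$ that uses the row outside the pair used at $RC$ is horizontal, crossed say from $lb_{in}$ to $r_{out}$, and everything after it up to $RC$ uses only the top and middle rows. Second, after the bounce on $RC$ the return along the top and middle rows is forced, as in the two-row case. It re-enters $LHC$ through $rt_{in}$, a cell not exited during that crossing, so $LHC$ is blocking on the very first return. In this case the ant never gets back to $OC$ at all. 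Your proposed induction (``the relevant pair of cells is restored after each round trip'') cannot capture this, since no round trip past $LHC$ ever happens.

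There is also an error in the vertical-$RC$ case. A vertical $G_{3,1}$ has no out motion, so there is no ``analogue'' of the $[R,R]\to[L,L]$ escape. After $B_{lt}$ the middle and top cells of $RC$ are both $R$, so the second arrival through $l_{in}$ is forced down to the bottom cell, giving $B_{lb}$ or $C_{lb}$. Lemma~\ref{pr:Hblocking} concerns horizontal columns, so the column that becomes blocking is not $RC$. It is the horizontal column immediately to its left, which exists since $OC$ and $RC$ are both vertical, and whose last crossing was left-to-right through its top and middle cells. The ant enters it through $rb_{in}$ in both subcases. After $B_{lb}$ this happens at once. After $C_{lb}$ the ant first leaves to the right; on coming back it is forced into $C_{rb}$, which exits through $lb_{out}$ into $rb_{in}$. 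You need to treat both subcases.

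Finally, your ``same argument on the left side'' is unnecessary. Once no return from the right is possible after the second $C_{lt}$, the forced alternation at $OC$ already caps the sequence at $C_{lt},C_{rt},C_{lt}$.
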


\begin{proof}
Let $OC$ denote the vertical $G_{3,1}$ under consideration.  Without loss of generality, we may suppose that the ant performs $C_{lt}$ within $OC$. The argument for the other types of crossings follows by symmetry.

After performing $C_{lt}$, the ant may subsequently re-enter $OC$ only from the right, through $r_{in}$ and in order for the ant to return to $OC$, it must first perform a bouncing motion within some column $RC$ located to the right of $OC$. 
The forward motion, which includes the crossing of $OC$, the traversal of the eventual intermediate columns between $OC$ and $RC$, and the bounce on $RC$, may be performed in one of the following two ways: either \textbf{(A)} using only cells of the top and middle rows of the three-row grid or \textbf{(B)} using cells of all three rows.

\textbf{Case A.}
In the case the forward motion is performed exclusively using cells of the top and middle rows, the ant behaves as if it were moving on a two-row grid. Consequently, after bouncing on $RC$, it may return toward $OC$, perform $C_{rt}$ and eventually another $C_{lt}$ on $OC$, and return to $RC$.
The column $RC$ may be either \textbf{(A.A)} horizontal or \textbf{(A.B)} vertical. 

\textbf{Case A.A.} If $RC$ is horizontal, then upon returning to it the ant reaches $t_{out}$ in a single step. Hence, it escapes the entire grid and does not return to $OC$. See Figure~\ref{fig:A.A} for an illustration of the motion of the ant.

\begin{figure}[H]
\centering
\begin{tikzpicture}[
scale=0.6,
vertex/.style={circle, draw, minimum size=2.5mm, inner sep=0pt},
fvertex/.style={circle, draw, fill=black, minimum size=2.5mm, inner sep=0pt},
gvertex/.style={circle, draw, fill=gray!60, minimum size=2.5mm, inner sep=0pt},
>=Stealth,
thick
]


\node at (-2, 1.5){1};
\node[gvertex] (0v0) at (0,0) {};
\node[gvertex] (1v0) at (1.5,0) {};
\node (2v0) at (3,0) {};
\node (3v0) at (4.5,0) {};
\node[gvertex] (4v0) at (6,0) {};
\node[gvertex] (5v0) at (7.5,0) {};
\node[gvertex] (6v0) at (9,0) {};

\node[fvertex] (0v1) at (0,1.5) {};
\node[fvertex] (1v1) at (1.5,1.5) {};
\node (2v1) at (3,1.5) {};
\node (3v1) at (4.5,1.5) {};
\node[fvertex] (4v1) at (6,1.5) {};
\node[vertex] (5v1) at (7.5,1.5) {};
\node[gvertex] (6v1) at (9,1.5) {};

\node[vertex] (0v2) at (0,3) {};
\node[vertex] (1v2) at (1.5,3) {};
\node (2v2) at (3,3) {};
\node (3v2) at (4.5,3) {};
\node[vertex] (4v2) at (6,3) {};
\node[vertex] (5v2) at (7.5,3) {};
\node[gvertex] (6v2) at (9,3) {};


\draw[gray] (-1.5, -0.5) -- (2.25, -0.5);
\draw[gray] (-1.5, 3.5) -- (2.25, 3.5);

\draw[gray, dotted](2.25, -0.5)--(5.25, -0.5);
\draw[ultra thick, dotted](2.75, 2.25)--(4.75, 2.25);
\draw[gray, dotted](2.25, 3.5)--(5.25, 3.5);

\draw[gray] (5.25, -0.5) -- (10.25, -0.5);
\draw[gray] (5.25, 3.5) -- (10.25, 3.5);

\draw[gray] (-0.75,-0.5)--(-0.75,3.5);
\draw[gray] (0.75,-0.5)--(0.75,3.5);
\draw[gray, dotted] (2.25,-0.5)--(2.25,3.5);
\draw[gray, dotted] (5.25,-0.5)--(5.25,3.5);
\draw[gray] (6.75,-0.5)--(6.75,3.5);
\draw[gray] (8.25,-0.5)--(8.25,3.5);
\draw[gray] (9.75,-0.5)--(9.75,3.5);


\draw[<-,dashed] (-1.2,0)--(0v0);
\draw[->,dashed] (0v0)--(1v0);
\draw[<-,dashed] (1v0)--(2v0);

\draw[<-,dashed] (3v0)--(4v0);
\draw[->,dashed] (4v0)--(5v0);
\draw[<-,dashed] (5v0)--(6v0);
\draw[->,dashed] (6v0)--(10.2,0);

\draw[->,dashed] (-1.2,1.5)--(0v1);
\draw[<-,dashed] (0v1)--(1v1);
\draw[->,dashed] (1v1)--(2v1);

\draw[->,dashed] (3v1)--(4v1);
\draw[<-,dashed] (4v1)--(5v1);
\draw[->,dashed] (5v1)--(6v1);
\draw[<-,dashed] (6v1)--(10.2,1.5);

\draw[<-,dashed] (-1.2,3)--(0v2);
\draw[->,dashed] (0v2)--(1v2);
\draw[<-,dashed] (1v2)--(2v2);

\draw[<-,dashed] (3v2)--(4v2);
\draw[->,dashed] (4v2)--(5v2);
\draw[<-,dashed] (5v2)--(6v2);
\draw[->,dashed] (6v2)--(10.2,3);


\draw[->,dashed] (0,-1.2)--(0v0);
\draw[<-,dashed] (0v0)--(0v1);
\draw[->,dashed] (0v1)--(0v2);
\draw[<-,dashed] (0v2)--(0,4.2) node[at end, above] {$OC$} ;

\draw[<-,dashed] (1.5,-1.2)--(1v0);
\draw[->,dashed] (1v0)--(1v1);
\draw[<-,dashed] (1v1)--(1v2);
\draw[->,dashed] (1v2)--(1.5,4.2);

\draw[->,dashed] (6,-1.2)--(4v0);
\draw[<-,dashed] (4v0)--(4v1);
\draw[->,dashed] (4v1)--(4v2);
\draw[<-,dashed] (4v2)--(6,4.2);

\draw[<-,dashed] (7.5,-1.2)--(5v0);
\draw[->,dashed] (5v0)--(5v1);
\draw[<-,dashed] (5v1)--(5v2);
\draw[->,dashed] (5v2)--(7.5,4.2)node[at end, above] {$RC$};

\draw[->,dashed] (9,-1.2)--(6v0);
\draw[<-,dashed] (6v0)--(6v1);
\draw[->,dashed] (6v1)--(6v2);
\draw[<-,dashed] (6v2)--(9,4.2);

\draw[ultra thick, rounded corners=8pt, ->] 
    (-1.3, 0.2+1.5) -- (-0.2, 0.2+1.5) -- (-0.2, 1.3+1.5) -- 
    (1.3, 1.3+1.5) -- (1.3,0.2+1.5) -- (2.8,0.2+1.5) ;
    
\draw[ultra thick, rounded corners=8pt, ->] 
    (-1.3 + 6, 0.2+1.5) -- (-0.2 +6, 0.2+1.5) -- (-0.2+6, 1.3+1.5) -- 
    (1.3+6, 1.3+1.5) -- (1.3+6,0.2+1.5) -- ( 6+0.2, 0.2+1.5) ;    

\end{tikzpicture}

\vspace{0.5em}

\begin{tikzpicture}[
scale=0.6,
vertex/.style={circle, draw, minimum size=2.5mm, inner sep=0pt},
fvertex/.style={circle, draw, fill=black, minimum size=2.5mm, inner sep=0pt},
gvertex/.style={circle, draw, fill=gray!60, minimum size=2.5mm, inner sep=0pt},
>=Stealth,
thick
]

\node at (-2, 1.5){2};
\node[gvertex] (0v0) at (0,0) {};
\node[gvertex] (1v0) at (1.5,0) {};
\node (2v0) at (3,0) {};
\node (3v0) at (4.5,0) {};
\node[gvertex] (4v0) at (6,0) {};
\node[gvertex] (5v0) at (7.5,0) {};
\node[gvertex] (6v0) at (9,0) {};

\node[vertex] (0v1) at (0,1.5) {};
\node[vertex] (1v1) at (1.5,1.5) {};
\node (2v1) at (3,1.5) {};
\node (3v1) at (4.5,1.5) {};
\node[vertex] (4v1) at (6,1.5) {};
\node[fvertex] (5v1) at (7.5,1.5) {};
\node[gvertex] (6v1) at (9,1.5) {};

\node[fvertex] (0v2) at (0,3) {};
\node[fvertex] (1v2) at (1.5,3) {};
\node (2v2) at (3,3) {};
\node (3v2) at (4.5,3) {};
\node[fvertex] (4v2) at (6,3) {};
\node[fvertex] (5v2) at (7.5,3) {};
\node[gvertex] (6v2) at (9,3) {};


\draw[gray] (-1.5, -0.5) -- (2.25, -0.5);
\draw[gray] (-1.5, 3.5) -- (2.25, 3.5);

\draw[gray, dotted](2.25, -0.5)--(5.25, -0.5);
\draw[ultra thick, dotted](2.75, 2.25)--(4.75, 2.25);
\draw[gray, dotted](2.25, 3.5)--(5.25, 3.5);

\draw[gray] (5.25, -0.5) -- (10.25, -0.5);
\draw[gray] (5.25, 3.5) -- (10.25, 3.5);

\draw[gray] (-0.75,-0.5)--(-0.75,3.5);
\draw[gray] (0.75,-0.5)--(0.75,3.5);
\draw[gray, dotted] (2.25,-0.5)--(2.25,3.5);
\draw[gray, dotted] (5.25,-0.5)--(5.25,3.5);
\draw[gray] (6.75,-0.5)--(6.75,3.5);
\draw[gray] (8.25,-0.5)--(8.25,3.5);
\draw[gray] (9.75,-0.5)--(9.75,3.5);


\draw[<-,dashed] (-1.2,0)--(0v0);
\draw[->,dashed] (0v0)--(1v0);
\draw[<-,dashed] (1v0)--(2v0);
\draw[<-,dashed] (3v0)--(4v0);
\draw[->,dashed] (4v0)--(5v0);
\draw[<-,dashed] (5v0)--(6v0);
\draw[->,dashed] (6v0)--(10.2,0);

\draw[->,dashed] (-1.2,1.5)--(0v1);
\draw[<-,dashed] (0v1)--(1v1);
\draw[->,dashed] (1v1)--(2v1);
\draw[->,dashed] (3v1)--(4v1);
\draw[<-,dashed] (4v1)--(5v1);
\draw[->,dashed] (5v1)--(6v1);
\draw[<-,dashed] (6v1)--(10.2,1.5);

\draw[<-,dashed] (-1.2,3)--(0v2);
\draw[->,dashed] (0v2)--(1v2);
\draw[<-,dashed] (1v2)--(2v2);
\draw[<-,dashed] (3v2)--(4v2);
\draw[->,dashed] (4v2)--(5v2);
\draw[<-,dashed] (5v2)--(6v2);
\draw[->,dashed] (6v2)--(10.2,3);


\draw[->,dashed] (0,-1.2)--(0v0);
\draw[<-,dashed] (0v0)--(0v1);
\draw[->,dashed] (0v1)--(0v2);
\draw[<-,dashed] (0v2)--(0,4.2);

\draw[<-,dashed] (1.5,-1.2)--(1v0);
\draw[->,dashed] (1v0)--(1v1);
\draw[<-,dashed] (1v1)--(1v2);
\draw[->,dashed] (1v2)--(1.5,4.2);

\draw[->,dashed] (6,-1.2)--(4v0);
\draw[<-,dashed] (4v0)--(4v1);
\draw[->,dashed] (4v1)--(4v2);
\draw[<-,dashed] (4v2)--(6,4.2);

\draw[<-,dashed] (7.5,-1.2)--(5v0);
\draw[->,dashed] (5v0)--(5v1);
\draw[<-,dashed] (5v1)--(5v2);
\draw[->,dashed] (5v2)--(7.5,4.2);

\draw[->,dashed] (9,-1.2)--(6v0);
\draw[<-,dashed] (6v0)--(6v1);
\draw[->,dashed] (6v1)--(6v2);
\draw[<-,dashed] (6v2)--(9,4.2);

\draw[ultra thick, rounded corners=8pt, <-] 
    (-1.3, 1.3+1.5) -- (-0.2, 1.3+1.5) -- (-0.2, 0.2+1.5) -- 
    (1.3, 0.2+1.5) -- (1.3,1.3+1.5) -- (2.8,1.3+1.5) ;
    
\draw[ultra thick, rounded corners=8pt, <-] 
    (-1.3 + 6, 1.3+1.5) -- (-0.2 +6, 1.3+1.5) -- (-0.2+6, 0.2+1.5) -- 
    (1.3+6, 0.2+1.5) ;    

\end{tikzpicture}

\vspace{0.5em}

\begin{tikzpicture}[
scale=0.6,
vertex/.style={circle, draw, minimum size=2.5mm, inner sep=0pt},
fvertex/.style={circle, draw, fill=black, minimum size=2.5mm, inner sep=0pt},
gvertex/.style={circle, draw, fill=gray!60, minimum size=2.5mm, inner sep=0pt},
>=Stealth,
thick
]

\node at (-2, 1.5){3};
\node[gvertex] (0v0) at (0,0) {};
\node[gvertex] (1v0) at (1.5,0) {};
\node (2v0) at (3,0) {};
\node (3v0) at (4.5,0) {};
\node[gvertex] (4v0) at (6,0) {};
\node[gvertex] (5v0) at (7.5,0) {};
\node[gvertex] (6v0) at (9,0) {};

\node[fvertex] (0v1) at (0,1.5) {};
\node[fvertex] (1v1) at (1.5,1.5) {};
\node (2v1) at (3,1.5) {};
\node (3v1) at (4.5,1.5) {};
\node[fvertex] (4v1) at (6,1.5) {};
\node[fvertex] (5v1) at (7.5,1.5) {};
\node[gvertex] (6v1) at (9,1.5) {};

\node[vertex] (0v2) at (0,3) {};
\node[vertex] (1v2) at (1.5,3) {};
\node (2v2) at (3,3) {};
\node (3v2) at (4.5,3) {};
\node[vertex] (4v2) at (6,3) {};
\node[fvertex] (5v2) at (7.5,3) {};
\node[gvertex] (6v2) at (9,3) {};


\draw[gray] (-1.5, -0.5) -- (2.25, -0.5);
\draw[gray] (-1.5, 3.5) -- (2.25, 3.5);

\draw[gray, dotted](2.25, -0.5)--(5.25, -0.5);
\draw[ultra thick, dotted](2.75, 2.25)--(4.75, 2.25);
\draw[gray, dotted](2.25, 3.5)--(5.25, 3.5);

\draw[gray] (5.25, -0.5) -- (10.25, -0.5);
\draw[gray] (5.25, 3.5) -- (10.25, 3.5);

\draw[gray] (-0.75,-0.5)--(-0.75,3.5);
\draw[gray] (0.75,-0.5)--(0.75,3.5);
\draw[gray, dotted] (2.25,-0.5)--(2.25,3.5);
\draw[gray, dotted] (5.25,-0.5)--(5.25,3.5);
\draw[gray] (6.75,-0.5)--(6.75,3.5);
\draw[gray] (8.25,-0.5)--(8.25,3.5);
\draw[gray] (9.75,-0.5)--(9.75,3.5);


\draw[<-,dashed] (-1.2,0)--(0v0);
\draw[->,dashed] (0v0)--(1v0);
\draw[<-,dashed] (1v0)--(2v0);

\draw[<-,dashed] (3v0)--(4v0);
\draw[->,dashed] (4v0)--(5v0);
\draw[<-,dashed] (5v0)--(6v0);
\draw[->,dashed] (6v0)--(10.2,0);

\draw[->,dashed] (-1.2,1.5)--(0v1);
\draw[<-,dashed] (0v1)--(1v1);
\draw[->,dashed] (1v1)--(2v1);

\draw[->,dashed] (3v1)--(4v1);
\draw[<-,dashed] (4v1)--(5v1);
\draw[->,dashed] (5v1)--(6v1);
\draw[<-,dashed] (6v1)--(10.2,1.5);

\draw[<-,dashed] (-1.2,3)--(0v2);
\draw[->,dashed] (0v2)--(1v2);
\draw[<-,dashed] (1v2)--(2v2);

\draw[<-,dashed] (3v2)--(4v2);
\draw[->,dashed] (4v2)--(5v2);
\draw[<-,dashed] (5v2)--(6v2);
\draw[->,dashed] (6v2)--(10.2,3);


\draw[->,dashed] (0,-1.2)--(0v0);
\draw[<-,dashed] (0v0)--(0v1);
\draw[->,dashed] (0v1)--(0v2);
\draw[<-,dashed] (0v2)--(0,4.2);

\draw[<-,dashed] (1.5,-1.2)--(1v0);
\draw[->,dashed] (1v0)--(1v1);
\draw[<-,dashed] (1v1)--(1v2);
\draw[->,dashed] (1v2)--(1.5,4.2);

\draw[->,dashed] (6,-1.2)--(4v0);
\draw[<-,dashed] (4v0)--(4v1);
\draw[->,dashed] (4v1)--(4v2);
\draw[<-,dashed] (4v2)--(6,4.2);

\draw[<-,dashed] (7.5,-1.2)--(5v0);
\draw[->,dashed] (5v0)--(5v1);
\draw[<-,dashed] (5v1)--(5v2);
\draw[->,dashed] (5v2)--(7.5,4.2);

\draw[->,dashed] (9,-1.2)--(6v0);
\draw[<-,dashed] (6v0)--(6v1);
\draw[->,dashed] (6v1)--(6v2);
\draw[<-,dashed] (6v2)--(9,4.2);

\draw[ultra thick, rounded corners=8pt, ->] 
    (-1.3, 0.2+1.5) -- (-0.2, 0.2+1.5) -- (-0.2, 1.3+1.5) -- 
    (1.3, 1.3+1.5) -- (1.3,0.2+1.5) -- (2.8,0.2+1.5) ;
    
\draw[ultra thick, rounded corners=8pt, ->] 
    (-1.3 + 6, 0.2+1.5) -- (-0.2 +6, 0.2+1.5) -- (-0.2+6, 1.3+1.5) -- 
    (1.3+6, 1.3+1.5) -- (1.3+6,4.2) ;    

\end{tikzpicture}

\caption{Case A.A:  The forward motion is performed exclusively using cells of the top and middle rows and $RC$ is horizontal.}
\label{fig:A.A}
\end{figure}

\textbf{Case A.B.} If $RC$ is a vertical column, then there exists at least one horizontal intermediate column between $OC$ and $RC$, since $OC$ is also a vertical column. Let $LHC$ denote the horizontal column immediately to the left of $RC$.  
When the ant return to $RC$, it can only perform either \textbf{(A.B.A)} $B_{lb}$ or  \textbf{(A.B.B)} $C_{lb}$. Figure~\ref{fig:A.B} illustrates the arguments that follow.

\textbf{Case A.B.A.} If it performs $B_{lb}$ on $RC$, it immediately returns to $LHC$ through $rb_{in}$, thereby turning $LHC$ into a blocking column by Lemma~\ref{pr:Hblocking}, since the last motion performed on $LHC$ was a left-to-right crossing using cells of the top and the middle rows. 

\textbf{Case A.B.B.} If it performs $C_{lb}$ on $RC$, then upon to a return from the right, the ant performs $C_{rb}$ on $RC$ and enters $LHC$ through $rb_{in}$, again turning $LHC$ into a blocking column by Lemma~\ref{pr:Hblocking}.

After the first execution of $C_{lt}$ on $OC$, in \textbf{Case A}, the ant performs at most four additional steps on $OC$ consisting of $C_{rt}$ followed by another $C_{lt}$.

\begin{figure}[htbp]
\centering

\begin{tikzpicture}[
scale=0.6,
vertex/.style={circle, draw, minimum size=2.5mm, inner sep=0pt},
fvertex/.style={circle, draw, fill=black, minimum size=2.5mm, inner sep=0pt},
gvertex/.style={circle, draw, fill=gray!60, minimum size=2.5mm, inner sep=0pt},
>=Stealth,
thick
]

\def\drawgrid{
\draw[gray] (-1.5, -0.5) -- (2.25, -0.5);
\draw[gray] (-1.5, 3.5) -- (2.25, 3.5);
\draw[gray, dotted](2.25, -0.5)--(5.25, -0.5);
\draw[ultra thick, dotted](2.75, 2.25)--(4.75, 2.25);
\draw[gray, dotted](2.25, 3.5)--(5.25, 3.5);
\draw[gray] (5.25, -0.5) -- (10.25, -0.5);
\draw[gray] (5.25, 3.5) -- (10.25, 3.5);
\draw[gray] (-0.75,-0.5)--(-0.75,3.5);
\draw[gray] (0.75,-0.5)--(0.75,3.5);
\draw[gray, dotted] (2.25,-0.5)--(2.25,3.5);
\draw[gray, dotted] (5.25,-0.5)--(5.25,3.5);
\draw[gray] (6.75,-0.5)--(6.75,3.5);
\draw[gray] (8.25,-0.5)--(8.25,3.5);
\draw[gray] (9.75,-0.5)--(9.75,3.5);
}

\def\drawgridpartial{

\draw[gray] (5.25, -0.5) -- (10.25, -0.5);
\draw[gray] (5.25, 3.5) -- (10.25, 3.5);

\draw[gray, dotted] (5.25,-0.5)--(5.25,3.5);
\draw[gray] (6.75,-0.5)--(6.75,3.5);
\draw[gray] (8.25,-0.5)--(8.25,3.5);
\draw[gray] (9.75,-0.5)--(9.75,3.5);
}

\def\drawhorizonzarrows{
\draw[<-,dashed] (-1.2,0)--(0v0);
\draw[->,dashed] (0v0)--(1v0);
\draw[<-,dashed] (1v0)--(2v0);
\draw[<-,dashed] (3v0)--(4v0);
\draw[->,dashed] (4v0)--(5v0);
\draw[<-,dashed] (5v0)--(6v0);
\draw[->,dashed] (6v0)--(10.2,0);

\draw[->,dashed] (-1.2,1.5)--(0v1);
\draw[<-,dashed] (0v1)--(1v1);
\draw[->,dashed] (1v1)--(2v1);
\draw[->,dashed] (3v1)--(4v1);
\draw[<-,dashed] (4v1)--(5v1);
\draw[->,dashed] (5v1)--(6v1);
\draw[<-,dashed] (6v1)--(10.2,1.5);

\draw[<-,dashed] (-1.2,3)--(0v2);
\draw[->,dashed] (0v2)--(1v2);
\draw[<-,dashed] (1v2)--(2v2);
\draw[<-,dashed] (3v2)--(4v2);
\draw[->,dashed] (4v2)--(5v2);
\draw[<-,dashed] (5v2)--(6v2);
\draw[->,dashed] (6v2)--(10.2,3);
}

\def\drawhorizonzarrowspartial{

\draw[<-,dashed] (3v0)--(4v0);
\draw[->,dashed] (4v0)--(5v0);
\draw[<-,dashed] (5v0)--(6v0);
\draw[->,dashed] (6v0)--(10.2,0);

\draw[->,dashed] (3v1)--(4v1);
\draw[<-,dashed] (4v1)--(5v1);
\draw[->,dashed] (5v1)--(6v1);
\draw[<-,dashed] (6v1)--(10.2,1.5);

\draw[<-,dashed] (3v2)--(4v2);
\draw[->,dashed] (4v2)--(5v2);
\draw[<-,dashed] (5v2)--(6v2);
\draw[->,dashed] (6v2)--(10.2,3);
}

\def\drawvertarrows{
\draw[->,dashed] (0,-1.2)--(0v0);
\draw[<-,dashed] (0v0)--(0v1);
\draw[->,dashed] (0v1)--(0v2);
\draw[<-,dashed] (0v2)--(0,4.2);

\draw[<-,dashed] (1.5,-1.2)--(1v0);
\draw[->,dashed] (1v0)--(1v1);
\draw[<-,dashed] (1v1)--(1v2);
\draw[->,dashed] (1v2)--(1.5,4.2);

\draw[->,dashed] (6,-1.2)--(4v0);
\draw[<-,dashed] (4v0)--(4v1);
\draw[->,dashed] (4v1)--(4v2);
\draw[<-,dashed] (4v2)--(6,4.2);

\draw[<-,dashed] (7.5,-1.2)--(5v0);
\draw[->,dashed] (5v0)--(5v1);
\draw[<-,dashed] (5v1)--(5v2);
\draw[->,dashed] (5v2)--(7.5,4.2);

\draw[->,dashed] (9,-1.2)--(6v0);
\draw[<-,dashed] (6v0)--(6v1);
\draw[->,dashed] (6v1)--(6v2);
\draw[<-,dashed] (6v2)--(9,4.2);
}

\def\drawvertarrowspartial{
\draw[->,dashed] (6,-1.2)--(4v0);
\draw[<-,dashed] (4v0)--(4v1);
\draw[->,dashed] (4v1)--(4v2);
\draw[<-,dashed] (4v2)--(6,4.2);

\draw[<-,dashed] (7.5,-1.2)--(5v0);
\draw[->,dashed] (5v0)--(5v1);
\draw[<-,dashed] (5v1)--(5v2);
\draw[->,dashed] (5v2)--(7.5,4.2);

\draw[->,dashed] (9,-1.2)--(6v0);
\draw[<-,dashed] (6v0)--(6v1);
\draw[->,dashed] (6v1)--(6v2);
\draw[<-,dashed] (6v2)--(9,4.2);
}

\begin{scope}[shift={(0, 24)}]
\node at (-2, 1.5) {1};
\node[gvertex] (0v0) at (0,0) {};
\node[gvertex] (1v0) at (1.5,0) {};
\node (2v0) at (3,0) {};
\node (3v0) at (4.5,0) {};
\node[gvertex] (4v0) at (6,0) {};
\node[gvertex] (5v0) at (7.5,0) {};
\node[gvertex] (6v0) at (9,0) {};
\node[fvertex] (0v1) at (0,1.5) {};
\node[fvertex] (1v1) at (1.5,1.5) {};
\node (2v1) at (3,1.5) {};
\node (3v1) at (4.5,1.5) {};
\node[fvertex] (4v1) at (6,1.5) {};
\node[fvertex] (5v1) at (7.5,1.5) {};
\node[fvertex] (6v1) at (9,1.5) {};
\node[vertex] (0v2) at (0,3) {};
\node[vertex] (1v2) at (1.5,3) {};
\node (2v2) at (3,3) {};
\node (3v2) at (4.5,3) {};
\node[vertex] (4v2) at (6,3) {};
\node[vertex] (5v2) at (7.5,3) {};
\node[fvertex] (6v2) at (9,3) {};
\drawgrid
\drawhorizonzarrows

\draw[->,dashed] (0,-1.2)--(0v0);
\draw[<-,dashed] (0v0)--(0v1);
\draw[->,dashed] (0v1)--(0v2);
\draw[<-,dashed] (0v2)--(0,4.2) node [at end, above]{$OC$};

\draw[<-,dashed] (1.5,-1.2)--(1v0);
\draw[->,dashed] (1v0)--(1v1);
\draw[<-,dashed] (1v1)--(1v2);
\draw[->,dashed] (1v2)--(1.5,4.2);

\draw[->,dashed] (6,-1.2)--(4v0);
\draw[<-,dashed] (4v0)--(4v1);
\draw[->,dashed] (4v1)--(4v2);
\draw[<-,dashed] (4v2)--(6,4.2);

\draw[<-,dashed] (7.5,-1.2)--(5v0);
\draw[->,dashed] (5v0)--(5v1);
\draw[<-,dashed] (5v1)--(5v2);
\draw[->,dashed] (5v2)--(7.5,4.2) node [at end, above]{$LHC$};

\draw[->,dashed] (9,-1.2)--(6v0);
\draw[<-,dashed] (6v0)--(6v1);
\draw[->,dashed] (6v1)--(6v2);
\draw[<-,dashed] (6v2)--(9,4.2) node [at end, above]{$RC$};

\draw[ultra thick, rounded corners=8pt, ->] 
    (-1.3, 0.2+1.5) -- (-0.2, 0.2+1.5) -- (-0.2, 1.3+1.5) -- 
    (1.3, 1.3+1.5) -- (1.3,0.2+1.5) -- (2.8,0.2+1.5) ;
\draw[ultra thick, rounded corners=8pt, ->] 
    (-1.3 + 6, 0.2+1.5) -- (-0.2 +6, 0.2+1.5) -- (-0.2+6, 1.3+1.5) -- 
    (1.3+6, 1.3+1.5) -- (1.3+6,0.2+1.5) -- ( 2.8+6, 0.2+1.5) --
    ( 2.8+6, 1.3+1.5) -- (1.7+6, 1.3+1.5);
\end{scope}

\begin{scope}[shift={(0, 17.5)}]
\node at (-2, 1.5) {2};
\node[gvertex] (0v0) at (0,0) {};
\node[gvertex] (1v0) at (1.5,0) {};
\node (2v0) at (3,0) {};
\node (3v0) at (4.5,0) {};
\node[gvertex] (4v0) at (6,0) {};
\node[gvertex] (5v0) at (7.5,0) {};
\node[gvertex] (6v0) at (9,0) {};
\node[vertex] (0v1) at (0,1.5) {};
\node[vertex] (1v1) at (1.5,1.5) {};
\node (2v1) at (3,1.5) {};
\node (3v1) at (4.5,1.5) {};
\node[vertex] (4v1) at (6,1.5) {};
\node[vertex] (5v1) at (7.5,1.5) {};
\node[vertex] (6v1) at (9,1.5) {};
\node[fvertex] (0v2) at (0,3) {};
\node[fvertex] (1v2) at (1.5,3) {};
\node (2v2) at (3,3) {};
\node (3v2) at (4.5,3) {};
\node[fvertex] (4v2) at (6,3) {};
\node[fvertex] (5v2) at (7.5,3) {};
\node[vertex] (6v2) at (9,3) {};
\drawgrid
\drawhorizonzarrows
\drawvertarrows
\draw[ultra thick, rounded corners=8pt, <-] 
    (-1.3, 1.3+1.5) -- (-0.2, 1.3+1.5) -- (-0.2, 0.2+1.5) -- 
    (1.3, 0.2+1.5) -- (1.3,1.3+1.5) -- (2.8,1.3+1.5) ;
\draw[ultra thick, rounded corners=8pt, <-] 
    (-1.3 + 6, 1.3+1.5) -- (-0.2 +6, 1.3+1.5) -- (-0.2+6, 0.2+1.5) -- 
    (1.3+6, 0.2+1.5) -- (1.3+6, 2.8) -- ( 2.8+6, 1.3+1.5) ;
\end{scope}

\begin{scope}[shift={(0, 11)}]
\node at (-2, 1.5) {3};
\node[gvertex] (0v0) at (0,0) {};
\node[gvertex] (1v0) at (1.5,0) {};
\node (2v0) at (3,0) {};
\node (3v0) at (4.5,0) {};
\node[gvertex] (4v0) at (6,0) {};
\node[gvertex] (5v0) at (7.5,0) {};
\node[gvertex] (6v0) at (9,0) {};
\node[fvertex] (0v1) at (0,1.5) {};
\node[fvertex] (1v1) at (1.5,1.5) {};
\node (2v1) at (3,1.5) {};
\node (3v1) at (4.5,1.5) {};
\node[fvertex] (4v1) at (6,1.5) {};
\node[fvertex] (5v1) at (7.5,1.5) {};
\node[vertex] (6v1) at (9,1.5) {};
\node[vertex] (0v2) at (0,3) {};
\node[vertex] (1v2) at (1.5,3) {};
\node (2v2) at (3,3) {};
\node (3v2) at (4.5,3) {};
\node[vertex] (4v2) at (6,3) {};
\node[vertex] (5v2) at (7.5,3) {};
\node[vertex] (6v2) at (9,3) {};
\drawgrid
\drawhorizonzarrows
\drawvertarrows
\draw[ultra thick, rounded corners=8pt, ->] 
    (-1.3, 0.2+1.5) -- (-0.2, 0.2+1.5) -- (-0.2, 1.3+1.5) -- 
    (1.3, 1.3+1.5) -- (1.3,0.2+1.5) -- (2.8,0.2+1.5) ;
\draw[ultra thick, rounded corners=8pt, ->] 
    (-1.3 + 6, 0.2+1.5) -- (-0.2 +6, 0.2+1.5) -- (-0.2+6, 1.3+1.5) -- 
    (1.3+6, 1.3+1.5) -- (1.3+6,0.2+1.5) -- ( 2.8+6, 0.2+1.5);
\end{scope}

\begin{scope}[shift={(-5, 2.5)}]

\node (3v0) at (4.5,0) {};
\node[gvertex] (4v0) at (6,0) {};
\node[gvertex] (5v0) at (7.5,0) {};
\node[vertex] (6v0) at (9,0) {};

\node (3v1) at (4.5,1.5) {};
\node[vertex] (4v1) at (6,1.5) {};
\node[vertex] (5v1) at (7.5,1.5) {};
\node[vertex] (6v1) at (9,1.5) {};

\node (3v2) at (4.5,3) {};
\node[fvertex] (4v2) at (6,3) {};
\node[fvertex] (5v2) at (7.5,3) {};
\node[vertex] (6v2) at (9,3) {};
\node at (7.5,4.5){$LHC$};
\node at (9,4.5){$RC$};

\drawgridpartial
\drawhorizonzarrowspartial
\drawvertarrowspartial

\draw[ultra thick, rounded corners=8pt, ->] 
(1.7+6,1.3) -- ( 2.8+6, 1.3)--
    ( 2.8+6, 0.2) -- ( 1.5+6, 0.2) ;
\end{scope}

\begin{scope}[shift={(4.5, 2.5)}]

\node (3v0) at (4.5,0) {};
\node[gvertex] (4v0) at (6,0) {};
\node[gvertex] (5v0) at (7.5,0) {};
\node[fvertex] (6v0) at (9,0) {};

\node (3v1) at (4.5,1.5) {};
\node[vertex] (4v1) at (6,1.5) {};
\node[vertex] (5v1) at (7.5,1.5) {};
\node[vertex] (6v1) at (9,1.5) {};

\node (3v2) at (4.5,3) {};
\node[fvertex] (4v2) at (6,3) {};
\node[fvertex] (5v2) at (7.5,3){};
\node[vertex] (6v2) at (9,3){};

\node at (7.5,4.5){$LHC$};
\node at (9,4.5){$RC$};
\drawgridpartial
\drawhorizonzarrowspartial
\drawvertarrowspartial

\draw[ultra thick, rounded corners=8pt, ->] (1.7+6 ,1.3) -- ( 2.8+6, 1.3)--
    ( 2.8+6, 0.2) -- ( 2.8+1.5+6, 0.2) ;
\end{scope}

\begin{scope}[shift={(4.5, -3.5)}]

\node (3v0) at (4.5,0) {};
\node[gvertex] (4v0) at (6,0) {};
\node[gvertex] (5v0) at (7.5,0) {};
\node[vertex] (6v0) at (9,0) {};

\node (3v1) at (4.5,1.5) {};
\node[vertex] (4v1) at (6,1.5) {};
\node[vertex] (5v1) at (7.5,1.5) {};
\node[fvertex] (6v1) at (9,1.5) {};

\node (3v2) at (4.5,3) {};
\node[fvertex] (4v2) at (6,3) {};
\node[fvertex] (5v2) at (7.5,3) {};
\node[vertex] (6v2) at (9,3) {};

\drawgridpartial
\drawhorizonzarrowspartial
\drawvertarrowspartial
\draw[ultra thick, rounded corners=8pt, -> ] 
    ( 2.8+1.5+6, 1.3) -- ( 3.2+6, 1.3) --( 3.2+6, 0.2) -- (1.7+6 ,0.2) ;
\end{scope}

\draw[line width=2pt, ->, dotted] (7.5, 9.5) -- (4.5, 7) 
node[midway, left]  {\textbf{\large A.B.A}};

\draw[line width=2pt, ->, dotted] (7.5, 9.5) -- (10.5, 7)
node[midway, right] {\textbf{\large A.B.B}};

\end{tikzpicture}

\caption{Case A.B:  The forward motion is performed exclusively using cells of the top and middle rows and $RC$ is vertical.}

\label{fig:A.B}
\end{figure}

\textbf{Case B.}
In the case where the forward motion involves all three rows, let $LHC$ denote the last column in which the ant motion uses a cell belonging to a row different from those used to bounce on $RC$ during this forward motion. Figure~\ref{fig:B} illustrates the arguments that follow.

The column $LHC$ must necessarily be horizontal.
Indeed, for the ant to cross a vertical column, it must use either the cells of the middle and top rows or those of the middle and bottom rows.
If the traversal is performed using the middle and top cells, then in order to cross or bounce in the next column (which is horizontal), the ant must again use the top and middle cells. Similarly, if the crossing is performed using the middle and bottom cells, then in order to cross or bounce in the next column, the ant must use the bottom and middle cells.
Therefore, the motion performed by the ant on these two consecutive columns necessarily involves cells from the same two rows. Consequently, $LHC$ cannot be vertical and must be horizontal.

Let $LVC$ denote the column immediately reached by the ant after $LHC$ during the forward motion toward $RC$. The motion from $LVC$ to $RC$ is performed using exactly two rows, namely either the top and middle rows or the middle and bottom rows.
Assume that these two rows are the top and middle rows. Then the ant must have traversed $LHC$ using the bottom and middle rows and have performed $C_{lt}$ on $LVC$ during the forward motion. After bouncing on $RC$, the ant returns to $LVC$, performs $C_{rt}$, and subsequently enters $LHC$ through $rt_{in}$, thereby turning $LHC$ into a blocking column by Lemma \ref{pr:Hblocking}.
\begin{figure}[htbp]
\centering
\begin{tikzpicture}[
scale=0.6,
vertex/.style={circle, draw, minimum size=2.5mm, inner sep=0pt},
fvertex/.style={circle, draw, fill=black, minimum size=2.5mm, inner sep=0pt},
gvertex/.style={circle, draw, fill=gray!60, minimum size=2.5mm, inner sep=0pt},
>=Stealth,
thick
]
\node at (-2, 1.5){1};
\node[gvertex] (0v0) at (0,0) {};
\node[gvertex] (1v0) at (1.5,0) {};
\node (2v0) at (3,0) {};
\node (3v0) at (4.5,0) {};
\node[fvertex] (4v0) at (6,0) {};
\node[gvertex] (5v0) at (7.5,0) {};
\node (6v0) at (9,0) {};
\node (7v0) at (10.5,0) {};
\node[gvertex] (8v0) at (12,0) {};
\node[gvertex] (9v0) at (13.5,0) {};

\node[fvertex] (0v1) at (0,1.5) {};
\node[fvertex] (1v1) at (1.5,1.5) {};
\node (2v1) at (3,1.5) {};
\node (3v1) at (4.5,1.5) {};
\node[vertex] (4v1) at (6,1.5) {};
\node[fvertex] (5v1) at (7.5,1.5) {};
\node (6v1) at (9,1.5) {};
\node (7v1) at (10.5,1.5) {};
\node[fvertex] (8v1) at (12,1.5) {};
\node[vertex] (9v1) at (13.5,1.5) {};

\node[vertex] (0v2) at (0,3) {};
\node[vertex] (1v2) at (1.5,3) {};
\node (2v2) at (3,3) {};
\node (3v2) at (4.5,3) {};
\node[gvertex] (4v2) at (6,3) {};
\node[vertex] (5v2) at (7.5,3) {};
\node (6v2) at (9,3) {};
\node (7v2) at (10.5,3) {};
\node[vertex] (8v2) at (12,3) {};
\node[vertex] (9v2) at (13.5,3) {};

\draw[gray] (-0.75,-0.5)--(-0.75,3.5);
\draw[gray] (0.75,-0.5)--(0.75,3.5);
\draw[gray, dotted] (2.25,-0.5)--(2.25,3.5);


\draw[gray, dotted] (5.25,-0.5)--(5.25,3.5);
\draw[gray] (6.75,-0.5)--(6.75,3.5);
\draw[gray, dotted] (8.25,-0.5)--(8.25,3.5);


\draw[gray, dotted] (11.25,-0.5)--(11.25,3.5);
\draw[gray] (12.75,-0.5)--(12.75,3.5);
\draw[gray] (14.25,-0.5)--(14.25,3.5);

\draw[gray] (-1.25, -0.5) -- (2.25, -0.5);
\draw[gray] (-1.25, 3.5) -- (2.25, 3.5);

\draw[gray, dotted](2.25, -0.5)--(5.25, -0.5);
\draw[ultra thick, dotted](2.8, 1.5)--(3.1, 0.5)--(4,2.5)-- (4.5, 0.5);
\draw[gray, dotted](2.25, 3.5)--(5.25, 3.5);

\draw[gray] (5.25, -0.5) -- (8.25, -0.5);
\draw[gray] (5.25, 3.5) -- (8.25, 3.5);

\draw[gray, dotted](8.25, -0.5)--(11.25, -0.5);
\draw[ultra thick, dotted](8.5, 2.25)--(11, 2.25);
\draw[gray, dotted](8.25, 3.5)--(11.25, 3.5);

\draw[gray] (11.25, -0.5) -- (14.75, -0.5);
\draw[gray] (11.25, 3.5) -- (14.75, 3.5);

\draw[<-,dashed] (-1.2,0)--(0v0);
\draw[->,dashed] (0v0)--(1v0);
\draw[<-,dashed] (1v0)--(2v0);

\draw[->,dashed] (3v0)--(4v0);
\draw[<-,dashed] (4v0)--(5v0);
\draw[->,dashed] (5v0)--(6v0);

\draw[<-,dashed] (7v0)--(8v0);
\draw[->,dashed] (8v0)--(9v0);
\draw[->,dashed] (9v0)--(14.7,0);

\draw[->,dashed] (-1.2,1.5)--(0v1)node[at start]{1};
\draw[<-,dashed] (0v1)--(1v1);
\draw[->,dashed] (1v1)--(2v1);

\draw[<-,dashed] (3v1)--(4v1);
\draw[->,dashed] (4v1)--(5v1);
\draw[<-,dashed] (5v1)--(6v1);

\draw[->,dashed] (7v1)--(8v1);
\draw[<-,dashed] (8v1)--(9v1);
\draw[<-,dashed] (9v1)--(14.7,1.5);

\draw[<-,dashed] (-1.2,3)--(0v2);
\draw[->,dashed] (0v2)--(1v2);
\draw[<-,dashed] (1v2)--(2v2);

\draw[->,dashed] (3v2)--(4v2);
\draw[<-,dashed] (4v2)--(5v2);
\draw[->,dashed] (5v2)--(6v2);

\draw[<-,dashed] (7v2)--(8v2);
\draw[->,dashed] (8v2)--(9v2);
\draw[->,dashed] (9v2)--(14.7,3);

\draw[->,dashed] (0,-1.2)--(0v0);
\draw[<-,dashed] (0v0)--(0v1);
\draw[->,dashed] (0v1)--(0v2);
\draw[<-,dashed] (0v2)--(0,4.2) node[at end, above] {$OC$} ;

\draw[<-,dashed] (1.5,-1.2)--(1v0);
\draw[->,dashed] (1v0)--(1v1);
\draw[<-,dashed] (1v1)--(1v2);
\draw[->,dashed] (1v2)--(1.5,4.2);

\draw[<-,dashed] (6,-1.2)--(4v0);
\draw[->,dashed] (4v0)--(4v1);
\draw[<-,dashed] (4v1)--(4v2);
\draw[->,dashed] (4v2)--(6,4.2)node[at end, above] {$LHC$};

\draw[->,dashed] (7.5,-1.2)--(5v0);
\draw[<-,dashed] (5v0)--(5v1);
\draw[->,dashed] (5v1)--(5v2);
\draw[<-,dashed] (5v2)--(7.5,4.2) node[at end, above] {$LVC$};

\draw[->,dashed] (12,-1.2)--(8v0);
\draw[<-,dashed] (8v0)--(8v1);
\draw[->,dashed] (8v1)--(8v2);
\draw[<-,dashed] (8v2)--(12,4.2);

\draw[<-,dashed] (13.5,-1.2)--(9v0);
\draw[->,dashed] (9v0)--(9v1);
\draw[<-,dashed] (9v1)--(9v2);
\draw[->,dashed] (9v2)--(13.5,4.2) node[at end, above] {$RC$};

\draw[ultra thick, rounded corners=8pt, ->] 
    (-1.3, 0.2+1.5) -- (-0.2, 0.2+1.5) -- (-0.2, 1.3+1.5) -- 
    (1.3, 1.3+1.5) -- (1.3,0.2+1.5) -- (2.8,0.2+1.5) ;
    
\draw[ultra thick, rounded corners=8pt, ->] 
    (-1.3 + 6, 0.2) -- (-0.2 +6, 0.2) -- (-0.2+6, 1.3) -- 
    (1.3+6, 1.3) -- (1.3+6, 2.8) -- ( 2.8+ 6, 2.8) ;    
    
\draw[ultra thick, rounded corners=8pt, ->] 
    (-1.3 + 12, 0.2+1.5) -- (-0.2 +12, 0.2+1.5) -- (-0.2+12, 1.3+1.5) -- 
    (1.3+12, 1.3+1.5) -- (1.3+12,0.2+1.5) -- ( 12+0.2, 0.2+1.5) ;   

\end{tikzpicture}

\vspace{0.5em}
\begin{tikzpicture}[
scale=0.6,
vertex/.style={circle, draw, minimum size=2.5mm, inner sep=0pt},
fvertex/.style={circle, draw, fill=black, minimum size=2.5mm, inner sep=0pt},
gvertex/.style={circle, draw, fill=gray!60, minimum size=2.5mm, inner sep=0pt},
>=Stealth,
thick
]
\node at (-2, 1.5){2};
\node[gvertex] (0v0) at (0,0) {};
\node[gvertex] (1v0) at (1.5,0) {};
\node (2v0) at (3,0) {};
\node (3v0) at (4.5,0) {};
\node[vertex] (4v0) at (6,0) {};
\node[gvertex] (5v0) at (7.5,0) {};
\node (6v0) at (9,0) {};
\node (7v0) at (10.5,0) {};
\node[gvertex] (8v0) at (12,0) {};
\node[gvertex] (9v0) at (13.5,0) {};

\node[vertex] (0v1) at (0,1.5) {};
\node[vertex] (1v1) at (1.5,1.5) {};
\node (2v1) at (3,1.5) {};
\node (3v1) at (4.5,1.5) {};
\node[fvertex] (4v1) at (6,1.5) {};
\node[vertex] (5v1) at (7.5,1.5) {};
\node (6v1) at (9,1.5) {};
\node (7v1) at (10.5,1.5) {};
\node[vertex] (8v1) at (12,1.5) {};
\node[fvertex] (9v1) at (13.5,1.5) {};

\node[fvertex] (0v2) at (0,3) {};
\node[fvertex] (1v2) at (1.5,3) {};
\node (2v2) at (3,3) {};
\node (3v2) at (4.5,3) {};
\node[gvertex] (4v2) at (6,3) {};
\node[fvertex] (5v2) at (7.5,3) {};
\node (6v2) at (9,3) {};
\node (7v2) at (10.5,3) {};
\node[fvertex] (8v2) at (12,3) {};
\node[fvertex] (9v2) at (13.5,3) {};

\draw[gray] (-0.75,-0.5)--(-0.75,3.5);
\draw[gray] (0.75,-0.5)--(0.75,3.5);
\draw[gray, dotted] (2.25,-0.5)--(2.25,3.5);


\draw[gray, dotted] (5.25,-0.5)--(5.25,3.5);
\draw[gray] (6.75,-0.5)--(6.75,3.5);
\draw[gray, dotted] (8.25,-0.5)--(8.25,3.5);


\draw[gray, dotted] (11.25,-0.5)--(11.25,3.5);
\draw[gray] (12.75,-0.5)--(12.75,3.5);
\draw[gray] (14.25,-0.5)--(14.25,3.5);

\draw[gray] (-1.25, -0.5) -- (2.25, -0.5);
\draw[gray] (-1.25, 3.5) -- (2.25, 3.5);

\draw[gray, dotted](2.25, -0.5)--(5.25, -0.5);
\draw[gray, dotted](2.25, 3.5)--(5.25, 3.5);

\draw[gray] (5.25, -0.5) -- (8.25, -0.5);
\draw[gray] (5.25, 3.5) -- (8.25, 3.5);

\draw[gray, dotted](8.25, -0.5)--(11.25, -0.5);
\draw[ultra thick, dotted](8.5, 2.25)--(11, 2.25);
\draw[gray, dotted](8.25, 3.5)--(11.25, 3.5);

\draw[gray] (11.25, -0.5) -- (14.75, -0.5);
\draw[gray] (11.25, 3.5) -- (14.75, 3.5);

\draw[<-,dashed] (-1.2,0)--(0v0);
\draw[->,dashed] (0v0)--(1v0);
\draw[<-,dashed] (1v0)--(2v0);

\draw[->,dashed] (3v0)--(4v0);
\draw[<-,dashed] (4v0)--(5v0);
\draw[->,dashed] (5v0)--(6v0);

\draw[<-,dashed] (7v0)--(8v0);
\draw[->,dashed] (8v0)--(9v0);
\draw[->,dashed] (9v0)--(14.7,0);

\draw[->,dashed] (-1.2,1.5)--(0v1);
\draw[<-,dashed] (0v1)--(1v1);
\draw[->,dashed] (1v1)--(2v1);

\draw[<-,dashed] (3v1)--(4v1);
\draw[->,dashed] (4v1)--(5v1);
\draw[<-,dashed] (5v1)--(6v1);

\draw[->,dashed] (7v1)--(8v1);
\draw[<-,dashed] (8v1)--(9v1);
\draw[<-,dashed] (9v1)--(14.7,1.5);

\draw[<-,dashed] (-1.2,3)--(0v2);
\draw[->,dashed] (0v2)--(1v2);
\draw[<-,dashed] (1v2)--(2v2);

\draw[->,dashed] (3v2)--(4v2);
\draw[<-,dashed] (4v2)--(5v2);
\draw[->,dashed] (5v2)--(6v2);

\draw[<-,dashed] (7v2)--(8v2);
\draw[->,dashed] (8v2)--(9v2);
\draw[->,dashed] (9v2)--(14.7,3);

\draw[->,dashed] (0,-1.2)--(0v0);
\draw[<-,dashed] (0v0)--(0v1);
\draw[->,dashed] (0v1)--(0v2);
\draw[<-,dashed] (0v2)--(0,4.2) ;

\draw[<-,dashed] (1.5,-1.2)--(1v0);
\draw[->,dashed] (1v0)--(1v1);
\draw[<-,dashed] (1v1)--(1v2);
\draw[->,dashed] (1v2)--(1.5,4.2);

\draw[<-,dashed] (6,-1.2)--(4v0);
\draw[->,dashed] (4v0)--(4v1);
\draw[<-,dashed] (4v1)--(4v2);
\draw[->,dashed] (4v2)--(6,4.2);

\draw[->,dashed] (7.5,-1.2)--(5v0);
\draw[<-,dashed] (5v0)--(5v1);
\draw[->,dashed] (5v1)--(5v2);
\draw[<-,dashed] (5v2)--(7.5,4.2);

\draw[->,dashed] (12,-1.2)--(8v0);
\draw[<-,dashed] (8v0)--(8v1);
\draw[->,dashed] (8v1)--(8v2);
\draw[<-,dashed] (8v2)--(12,4.2);

\draw[<-,dashed] (13.5,-1.2)--(9v0);
\draw[->,dashed] (9v0)--(9v1);
\draw[<-,dashed] (9v1)--(9v2);
\draw[->,dashed] (9v2)--(13.5,4.2);

\draw[ultra thick, rounded corners=8pt, ->] 
    (1.3+7.5 ,0.2+1.5) -- ( 7.5+0.2, 0.2+1.5) -- ( 7.5+0.2, 2.8) -- (-1.3+7.5, 2.8)  ;   
    
\draw[ultra thick, rounded corners=8pt, ->] 
    (1.3+12,0.2+1.5) -- ( 12+0.2, 0.2+1.5) -- ( 12+0.2, 2.8) -- (-1.3+12, 2.8)  ;   

\end{tikzpicture}

\caption{Case B: The forward motion is performed  using cells of three rows. }
\label{fig:B}
\end{figure}

If, on the other hand, the motion from $LVC$ to $RC$ involves the middle and bottom rows, a symmetric argument shows that upon returning from $RC$, the ant re-enters $LHC$ in such a way that $LHC$ again becomes a blocking column.

After the first execution of $C_{lt}$ on $OC$, in \textbf{Case B} also, the ant performs at most four additional steps on $OC$ consisting of $C_{rt}$ followed by another $C_{lt}$.

\end{proof}

\begin{lemma}\label{pr:Hbouncing}
During its motion within a three-row grid, after performing a bouncing motion within a column that is a vertical $G_{3,1}$, the ant can subsequently perform at most eight additional steps on that column before escaping the three-row grid. Moreover, these steps consist of one bouncing motion within the same side of the initial bouncing and three crossing motions, two of which start from the side of the initial bouncing.
\end{lemma}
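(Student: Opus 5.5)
Without loss of generality the initial bouncing motion on the vertical $G_{3,1}$ (call it $OC$) is $B_{lt}$; the other three bouncings follow by the top/bottom and left/right symmetries of the column. The plan is to track the color configuration of $OC$ through every admissible sequence of later entries, as in the proof of Lemma~\ref{pr:Hblocking}. For each sequence we count the steps until the ant must leave the three-row grid or can no longer come back to $OC$.

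\textbf{Step 1: states after the first bounce.} $B_{lt}$ enters through $l_{in}$ and makes the motion $LL$, so the middle and top cells are $L$ before it and $R$ after it; the bottom cell is still unspecified. After $B_{lt}$ the ant is to the left of $OC$. A later return from the left enters through $l_{in}$ with the middle cell now $R$, so the ant turns right and goes down, giving $B_{lb}$ or $C_{lb}$ according to the bottom cell. This already produces the "same side" motions that follow the first bounce.

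\textbf{Step 2: branch on the bottom cell.} We then enumerate the continuations, always using the fact that the middle cell alternates at each entry through $l_{in}$ or $r_{in}$.
\begin{itemize}
\item If the second motion is the crossing $C_{lb}$, we apply Lemma~\ref{pr:Hcrossing} directly. It allows at most four more steps, namely $C_{rb}$ followed by a second $C_{lb}$. Together with the bounce this gives $2+2+4$, i.e. three crossings, two of them starting from the left side.
\item If the second motion is the bounce $B_{lb}$, every cell of $OC$ has now been flipped. We check which entries remain possible and show that the next entry from the left yields a crossing $C_{lt}$, since the middle cell alternated back and the top cell is now $R$.
\end{itemize}
In both branches the intended conclusion is at most eight further steps, made of one more bounce on the left side and three crossings, two of which start from the left. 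We would also check that a second left bounce cannot be followed by another bounce on the left without the ant reaching an out-boundary configuration.

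\textbf{Step 3: bounding the crossings.} Once a crossing of $OC$ occurs, say $C_{lt}$ after the bounces, Lemma~\ref{pr:Hcrossing} limits the remaining activity on $OC$ to $C_{rt}$ followed by $C_{lt}$, which is four steps. The main obstacle is the case where the bounce comes \emph{after} a crossing, or where crossings are interleaved with bounces. There Lemma~\ref{pr:Hcrossing} does not apply verbatim, because its proof assumes that the returns to $OC$ are caused by a bounce on some column $RC$ to the right. We would therefore rerun its argument. When the ant comes back to the left side of $OC$ to bounce, it must have bounced on some column to the left of $OC$. Via Lemma~\ref{pr:Hblocking}, the horizontal column adjacent to $OC$ on that side then becomes blocking after the rows used change. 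This caps the number of left returns, so the total number of crossings is three and the number of extra bounces is one.

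Summing these gives $2\cdot 1+2\cdot 3=8$ additional steps, with the claimed composition.
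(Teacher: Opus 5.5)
Your Steps 1--2 are essentially the paper's proof. You reduce to $B_{lt}$ and note that the next entry (necessarily through $l_{in}$) gives either $C_{lb}$, after which Lemma~\ref{pr:Hcrossing} yields $2+4=6$ further steps (your ``$2+2+4$'' also counts the initial bounce, which is not an additional step), or $B_{lb}$ followed by a forced $C_{lt}$ and then Lemma~\ref{pr:Hcrossing}, giving $2+2+4=8$.

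The interleaving ``obstacle'' of Step~3 cannot occur, so the vague detour through Lemma~\ref{pr:Hblocking} should be dropped. Lemma~\ref{pr:Hcrossing} already states that all later motions after a crossing are crossings. Directly: after $C_{lt}$, an entry through $r_{in}$ is forced to be $C_{rt}$, which restores the top and middle cells, so the next entry through $l_{in}$ is again $C_{lt}$. Hence no bounce on $OC$ can ever follow a crossing.
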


\begin{proof}
Let $OC$ denote the vertical $G_{3,1}$ under consideration.
Assume the ant performs a bouncing motion within $OC$. Without loss of generality, we may suppose that its a left top bouncing, $B_{lt}$, since the argument presented below applies symmetrically to all other types of bouncing. Figure \ref{fig:3B} illustrates the arguments that follow.

After performing $B_{lt}$, the ant may re-enter $OC$ from the left through $l_{in}$, and it can perform either $C_{lb}$ \textbf{(A)} or $B_{lb}$ \textbf{(B)}   on $OC$. 

\textbf{Case A. } If the ant performs $C_{lb}$, then by Lemma \ref{pr:Hcrossing}, it can subsequently perform at most four additional steps on $OC$. Together with the crossing $C_{lb}$, this yields a total of six additional steps after the first bouncing.

\textbf{Case B. } If the ant instead performs $B_{lb}$, then upon one further return from the left to $OC$ through $l_{in}$ and this time can only perform $C_{lt}$ \textbf{(B.A)}. After this, by Lemma \ref{pr:Hcrossing}, it can perform at most four additional steps on $OC$ before escaping the three-row grid. Hence, in this case, the total number of additional steps after the first bouncing reaches eight.

\begin{figure}[H]
\centering

\begin{tikzpicture}[
scale=0.6,
vertex/.style={circle, draw, minimum size=2.5mm, inner sep=0pt},
fvertex/.style={circle, draw, fill=black, minimum size=2.5mm, inner sep=0pt},
gvertex/.style={circle, draw, fill=gray!60, minimum size=2.5mm, inner sep=0pt},
>=Stealth,
thick
]

\begin{scope}[shift={(0, 0)}]
\def\spacing{1.5}
\def\externalarrow{1.2}
\def\margin{0.5}

\node[gvertex] (start_v0) at (0,0) {};
\node[fvertex] (start_v1) at (0,1.5) {};
\node[fvertex] (start_v2) at (0,3) {};

\draw[gray] (-\spacing/2, -\margin) -- (\spacing/2, -\margin);
\draw[gray] (-\spacing/2, 3+ \margin) -- (\spacing/2, 3 + \margin);
\draw[gray] (-\spacing/2, -\margin) -- (-\spacing/2, 3+ \margin);
\draw[gray] (\spacing/2, -\margin) -- (\spacing/2, 3+ \margin);

\draw[<-, dashed] (-\externalarrow, 0) -- (start_v0);
\draw[->, dashed] (start_v0) -- (\externalarrow, 0);
\draw[->, dashed] (-\externalarrow, 1.5) -- (start_v1);
\draw[<-, dashed] (start_v1) -- (\externalarrow, 1.5);
\draw[<-, dashed] (-\externalarrow, 3) -- (start_v2);
\draw[->, dashed] (start_v2) -- (\externalarrow, 3);

\draw[->, dashed] (0, -\externalarrow) -- (start_v0);
\draw[<-, dashed] (start_v0) -- (start_v1);
\draw[->, dashed] (start_v1) -- (start_v2);
\draw[->, dashed] (0,4.2) -- (start_v2);

\draw[ultra thick, rounded corners=8pt, ->] 
    (-1.3, 1.7) -- (-0.2, 1.7) -- (-0.2, 2.8) -- (-1.3, 2.8);

\end{scope}

\begin{scope}[shift={(6, 2.5)}]
\def\spacing{1.5}
\def\externalarrow{1.2}
\def\margin{0.5}

\node[vertex] (caseB_v0) at (0,0) {};
\node[vertex] (caseB_v1) at (0,1.5) {};
\node[vertex] (caseB_v2) at (0,3) {};

\draw[gray] (-\spacing/2, -\margin) -- (\spacing/2, -\margin);
\draw[gray] (-\spacing/2, 3+ \margin) -- (\spacing/2, 3 + \margin);
\draw[gray] (-\spacing/2, -\margin) -- (-\spacing/2, 3+ \margin);
\draw[gray] (\spacing/2, -\margin) -- (\spacing/2, 3+ \margin);

\draw[<-, dashed] (-\externalarrow, 0) -- (caseB_v0);
\draw[->, dashed] (caseB_v0) -- (\externalarrow, 0);
\draw[->, dashed] (-\externalarrow, 1.5) -- (caseB_v1);
\draw[<-, dashed] (caseB_v1) -- (\externalarrow, 1.5);
\draw[<-, dashed] (-\externalarrow, 3) -- (caseB_v2);
\draw[->, dashed] (caseB_v2) -- (\externalarrow, 3);

\draw[->, dashed] (0, -\externalarrow) -- (caseB_v0);
\draw[<-, dashed] (caseB_v0) -- (caseB_v1);
\draw[->, dashed] (caseB_v1) -- (caseB_v2);
\draw[->, dashed] (0,4.2) -- (caseB_v2);

\draw[ultra thick, rounded corners=8pt, ->] 
    (-1.3, 1.3) -- (-0.2, 1.3) -- (-0.2, 0.2) -- (-1.3, 0.2);

\end{scope}

\begin{scope}[shift={(6, -3)}]
\def\spacing{1.5}
\def\externalarrow{1.2}
\def\margin{0.5}

\node[fvertex] (caseA_v0) at (0,0) {};
\node[vertex] (caseA_v1) at (0,1.5) {};
\node[vertex] (caseA_v2) at (0,3) {};

\draw[gray] (-\spacing/2, -\margin) -- (\spacing/2, -\margin);
\draw[gray] (-\spacing/2, 3+ \margin) -- (\spacing/2, 3 + \margin);
\draw[gray] (-\spacing/2, -\margin) -- (-\spacing/2, 3+ \margin);
\draw[gray] (\spacing/2, -\margin) -- (\spacing/2, 3+ \margin);

\draw[<-, dashed] (-\externalarrow, 0) -- (caseA_v0);
\draw[->, dashed] (caseA_v0) -- (\externalarrow, 0);
\draw[->, dashed] (-\externalarrow, 1.5) -- (caseA_v1);
\draw[<-, dashed] (caseA_v1) -- (\externalarrow, 1.5);
\draw[<-, dashed] (-\externalarrow, 3) -- (caseA_v2);
\draw[->, dashed] (caseA_v2) -- (\externalarrow, 3);

\draw[->, dashed] (0, -\externalarrow) -- (caseA_v0);
\draw[<-, dashed] (caseA_v0) -- (caseA_v1);
\draw[->, dashed] (caseA_v1) -- (caseA_v2);
\draw[->, dashed] (0,4.2) -- (caseA_v2);

\draw[ultra thick, rounded corners=8pt, ->] 
    (-1.3, 1.3) -- (-0.2, 1.3) -- (-0.2, 0.2) -- (1.3, 0.2);

\end{scope}

\begin{scope}[shift={(12, 2.5)}]
\def\spacing{1.5}
\def\externalarrow{1.2}
\def\margin{0.5}

\node[fvertex] (caseB2_v0) at (0,0) {};
\node[fvertex] (caseB2_v1) at (0,1.5) {};
\node[vertex] (caseB2_v2) at (0,3) {};

\draw[gray] (-\spacing/2, -\margin) -- (\spacing/2, -\margin);
\draw[gray] (-\spacing/2, 3+ \margin) -- (\spacing/2, 3 + \margin);
\draw[gray] (-\spacing/2, -\margin) -- (-\spacing/2, 3+ \margin);
\draw[gray] (\spacing/2, -\margin) -- (\spacing/2, 3+ \margin);

\draw[<-, dashed] (-\externalarrow, 0) -- (caseB2_v0);
\draw[->, dashed] (caseB2_v0) -- (\externalarrow, 0);
\draw[->, dashed] (-\externalarrow, 1.5) -- (caseB2_v1);
\draw[<-, dashed] (caseB2_v1) -- (\externalarrow, 1.5);
\draw[<-, dashed] (-\externalarrow, 3) -- (caseB2_v2);
\draw[->, dashed] (caseB2_v2) -- (\externalarrow, 3);

\draw[->, dashed] (0, -\externalarrow) -- (caseB2_v0);
\draw[<-, dashed] (caseB2_v0) -- (caseB2_v1);
\draw[->, dashed] (caseB2_v1) -- (caseB2_v2);
\draw[->, dashed] (0,4.2) -- (caseB2_v2);

\draw[ultra thick, rounded corners=8pt, ->] 
    (-1.3, 1.7) -- (-0.2, 1.7) -- (-0.2, 2.8) -- (1.3, 2.8);

\end{scope}

\draw[line width=2pt, ->, dotted] (2, 1.5) -- (4.25, 3.2) 
node[midway, above] {\textbf{B}};

\draw[line width=2pt, ->, dotted] (2, 1.5) -- (4.25, -0.2)
node[midway, below]  {\textbf{A}};

\draw[line width=2pt, ->, dotted] (8, 3.4) -- (10.25, 3.4)
node[midway, above] {\textbf{B.A}};

\end{tikzpicture}

\caption{After performing a Bouncing motion within a vertical $G_{3,1}$.}
\label{fig:3B}
\end{figure}
\end{proof}

\begin{corollary}\label{pr:Vcolbound}

During its motion within a three-row grid, if the motion performed by the ant upon its first visit to a column that is a vertical $G_{3,1}$ is: 
\begin{enumerate}
    \item  a crossing motion, then the ant performs at most six steps within this column before escaping the entire three-row grid. Moreover, these six steps consist of three crossing motions, two of which occur in the same direction;
    \item a bouncing motion, then the ant performs at most ten steps within this column before escaping the entire three-row grid. Moreover, these ten steps consist of two bouncing motion within the same side and three crossing motions, two of which start from the side of the bouncing;
    \item an initial motion, then the ant performs at most eleven steps within this column before escaping the entire three-row grid.
\end{enumerate}

\end{corollary}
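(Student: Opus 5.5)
The corollary follows by adding up the bounds of Lemmas~\ref{pr:Hcrossing} and~\ref{pr:Hbouncing}, plus a short case analysis for the initial motion. Each bouncing or crossing motion costs two steps and each initial motion one step (Table~\ref{tab:motiontable3}). No out motion can occur on a vertical $G_{3,1}$, so every visit to such a column is an initial, bouncing or crossing motion.

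\textbf{Item 1.} If the first motion on the vertical column $OC$ is a crossing, it costs two steps. Lemma~\ref{pr:Hcrossing} then allows at most four more steps, consisting of two crossings in opposite directions, before the ant escapes. That gives $2+4=6$ steps, made of three crossings, two of which go in the same direction as the first.

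\textbf{Item 2.} If the first motion is a bouncing, it costs two steps. Lemma~\ref{pr:Hbouncing} adds at most eight: one more bounce on the same side, and three crossings, two of which start from the bouncing side. That gives $2+8=10$ steps, with exactly the stated structure.

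\textbf{Item 3.} This is the only part needing a genuinely new case split. By symmetry take the initial motion to be $I_{lt}$, entering through $t_{in}$ and exiting through $lt_{out}$ after one step; this forces the top cell to be $R$, and it becomes $L$. Since $t_{in}$ is used only once (it is the entry to the whole grid), every later return comes through $l_{in}$ or $r_{in}$. After $lt_{out}$ the ant is on the left, so it must come back through $l_{in}$. On that return it performs $B_{lt}$, $B_{lb}$, $C_{lt}$ or $C_{lb}$, depending on the middle and bottom colors.
\begin{itemize}
\item If it crosses, Lemma~\ref{pr:Hcrossing} bounds the rest by $2+4$ steps, for $1+6=7$ in total.
\item If it bounces, Lemma~\ref{pr:Hbouncing} bounds the rest by $2+8$ steps, for $1+10=11$ in total.
\end{itemize}
So the total is at most $1+10=11$.

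The one point to check is that applying Lemma~\ref{pr:Hbouncing} to a bounce that is not the first motion on the column is legitimate. Its proof only uses the color configuration reached after the bounce and the fact that the ant is on the side of the bounce. The extra history of one prior initial motion adds no new entry arcs, since $t_{in}$ cannot be reused. If needed, I would re-verify the two sub-cases of Lemma~\ref{pr:Hbouncing} with the top cell having started at $R$ and flipped to $L$; this only constrains the colors further and cannot increase the count.
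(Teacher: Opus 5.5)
Your proposal is correct and is essentially the paper's proof. Items 1 and 2 follow from Lemmas~\ref{pr:Hcrossing} and~\ref{pr:Hbouncing} by adding the two steps of the first motion, and item 3 adds the single initial step to the worse of the crossing and bouncing bounds for the next return, giving $1+10=11$. One harmless imprecision: after $I_{lt}$ the top cell is $L$, so on the return through $l_{in}$ the motion $C_{lt}$ cannot actually occur; only $B_{lt}$, $B_{lb}$ or $C_{lb}$ is possible.
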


\begin{proof}
Statements (1) and (2) follow directly from Lemma \ref{pr:Hcrossing} and Lemma \ref{pr:Hbouncing}.
For (3), it suffices to observe that any initial motion is performed in a single step and can occur only once. Upon a subsequent return of the ant to the column, the only possible motions are either a crossing or a bouncing. The conclusion then follows immediately from (1) and (2).
\end{proof}

We now give the proof of the main result.

\begin{proof}[Proof of Theorem \ref{pr:theothreerow}]
We decompose  $S_{3,n}= SH_{3,n} + SV_{3,n}$, where $SH_{3,n}$ and $SV_{3,n}$ denote respectively the maximum number of steps the ant can perform within all the horizontal columns and all the vertical columns of the three-row grid with $n$ columns.
Except for the initial motion and the final out motion, each of which is performed in a single step and occurs at most once, the motion of the ant can be decomposed into a succession of two-step segments within individual columns. Moreover, after performing two steps within a horizontal column, the ant must necessarily perform its next two steps within a vertical column. Conversely, after visiting a vertical column, the ant must necessarily move to a horizontal column. Therefore, the total contribution of horizontal columns cannot exceed that of vertical columns by more than two steps, and we obtain $SH_{3,n} \le SV_{3,n}+2$.
It remains to bound $SV_{3,n}$.

\textbf{Case A. $n$ is even.} 
In this case, there is exactly one vertical column that is a boundary column. On such a boundary column, if the first motion of the ant is a crossing, the ant can perform at most four steps on it, consisting of two crossings in opposite directions, the first making the ant enter the whole grid and the other making the ant leave it.
If the first motion is a bouncing, then the ant can perform at most six steps, consisting of two bouncing and one crossing, after which it exits the grid.
For non-boundary vertical columns, Corollary~\ref{pr:Vcolbound} provides a bound of $10$ steps per column, except for at most one column where the bound may increase to $11$ if the ant starts its motion there. Hence
$$SV_{3,n} \le 10\left(\frac{n}{2} - 1\right) + 6 + 1 = 5n - 3,$$
where the three terms account, respectively, for: the non-boundary vertical columns; 
the boundary vertical column; and the possible initial motion within a vertical column.

\textbf{Case B. $n$ is odd.}
The two boundary columns of the three-row grid are either both vertical \textbf{(B.A)} or both horizontal \textbf{(B.B)}. 

\textbf{Case B.A.} Both boundary columns are vertical. On one boundary column the ant may perform at most six steps (two bouncing and a crossing, after which it exits the grid), while on the other it may perform at most four steps (two bouncing motions), since the ant exits the grid only once. Hence,
$$SV_{3,n} \le 10\left(\frac{n+1}{2} - 2\right) + (6 + 4) + 1 = 5n - 4,$$
where the three terms account, respectively, for: the non-boundary vertical columns; 
the two boundary vertical columns; and the possible initial motion within a vertical column.

\textbf{Case B.B.} Both boundary columns are horizontal.
In this situation,
$$SV_{3,n} \le 10\left(\frac{n-1}{2}\right) + 1 = 5n - 4,$$
where the two terms account, respectively, for: the non-boundary vertical columns (all vertical columns are non-boundary in this case); and the possible initial motion within a vertical column.

In all cases, we obtain $SV_{3,n}\le 5n-3$.
Using $SH_{3,n} \le SV_{3,n}+2$, we have 
$$SH_{3,n}\le 5n-1 \text{  and } S_{3,n}=SH_{3,n}+SV_{3,n} \le 10n-4.$$
\end{proof}

\paragraph*{Lower bound for $S_{3,n}$}

The upper bound is nearly tight. Explicit constructions show that for every  $n \ge 3$:

$$
 S_{3,n} \ge 
\begin{cases}
10n - 12 & \text{ if } n \text{ is even.   (see Figure \ref{fig:3maxeven})} , \\
10n - 13 & \text{ if } n \text{ is odd.~    (see Figure \ref{fig:3maxodd}).}
\end{cases}
$$
Moreover, our simulations for $n\le20$ yield exact escaping times consistent with these constructions.

\begin{figure}[H]
\centering

\begin{tikzpicture}[
scale=0.5,
vertex/.style={circle, draw, minimum size=2.5mm, inner sep=0pt},
fvertex/.style={circle, draw, fill=black, minimum size=2.5mm, inner sep=0pt},
gvertex/.style={circle, draw, fill=gray!60, minimum size=2.5mm, inner sep=0pt},
>=Stealth,
thick
]
\node(t1) at (-4,1.5){Starting configuration $\,~~$};
\node[vertex] (0v0) at (0,0) {};
\node[vertex] (1v0) at (1.5,0) {};
\node[vertex] (2v0) at (3,0) {};
\node[vertex] (3v0) at (4.5,0) {};
\node (4v0) at (6,0) {};
\node (5v0) at (7.5,0) {};
\node (6v0) at (9,0) {};
\node[vertex] (7v0) at (10.5,0) {};
\node[vertex] (8v0) at (12,0) {};

\node[fvertex] (0v1) at (0,1.5) {};
\node[vertex] (1v1) at (1.5,1.5) {};
\node[fvertex] (2v1) at (3,1.5) {};
\node[fvertex] (3v1) at (4.5,1.5) {};
\node (4v1) at (6,1.5) {};
\node(5v1) at (7.5,1.5) {};
\node (6v1) at (9,1.5) {};
\node[fvertex] (7v1) at (10.5,1.5) {};
\node[fvertex] (8v1) at (12,1.5) {};

\node[vertex] (0v2) at (0,3) {};
\node[vertex] (1v2) at (1.5,3) {};
\node[vertex] (2v2) at (3,3) {};
\node[fvertex] (3v2) at (4.5,3) {};
\node (4v2) at (6,3) {};
\node (5v2) at (7.5,3) {};
\node (6v2) at (9,3) {};
\node[vertex] (7v2) at (10.5,3) {};
\node[fvertex] (8v2) at (12,3) {};

\draw[gray] (-0.75, -0.5) -- (5.25, -0.5);
\draw[gray] (-0.75, 3.5) -- (5.25, 3.5);
\draw[red, dotted] (5.25, -0.5) -- (9.75, -0.5);
\draw[red, dotted] (6, 1.5) -- (9, 1.5);
\draw[red, dotted] (5.25, 3.5) -- (9.75, 3.5);
\draw[gray] (9.75, -0.5) -- (12.75, -0.5);
\draw[gray] (9.75, 3.5) -- (12.75, 3.5);

\draw[gray] (-0.75,-0.5)--(-0.75,3.5);
\draw[gray] (0.75,-0.5)--(0.75,3.5);
\draw[gray] (2.25,-0.5)--(2.25,3.5);
\draw[gray] (3.75,-0.5)--(3.75,3.5);
\draw[gray, dotted] (5.25,-0.5)--(5.25,3.5);
\draw[gray, dotted] (9.75,-0.5)--(9.75,3.5);
\draw[gray] (11.25,-0.5)--(11.25,3.5);
\draw[gray] (12.75,-0.5)--(12.75,3.5);


\draw[red] (2.25,-0.5)--(2.25,3.5);
\draw[red] (5.25,-0.5)--(5.25,3.5);

\draw[red] (2.25,-0.5)--(5.25,-0.5);
\draw[red] (2.25,3.5)--(5.25,3.5);

\draw[red] (9.75,-0.5)--(9.75,3.5);
\draw[red] (12.75,-0.5)--(12.75,3.5);

\draw[red] (9.75,-0.5)--(12.75,-0.5);
\draw[red] (9.75,3.5)--(12.75,3.5);

\draw[->] (-1.0,0)--(0v0);
\draw[<-] (0v0)--(1v0);
\draw[->] (1v0)--(2v0);
\draw[<-] (2v0)--(3v0);
\draw[->] (3v0)--(4v0);

\draw[->] (6v0)--(7v0);
\draw[<-] (7v0)--(8v0);
\draw[->] (8v0)--(13.0,0);

\draw[<-] (-1.0,1.5)--(0v1);
\draw[->] (0v1)--(1v1);
\draw[<-] (1v1)--(2v1);
\draw[->] (2v1)--(3v1);
\draw[<-] (3v1)--(4v1);

\draw[<-] (6v1)--(7v1);
\draw[->] (7v1)--(8v1);
\draw[<-] (8v1)--(13.0,1.5);

\draw[red, ->, double, very thick] (-1.25,3)--(0v2);
\draw[<-] (0v2)--(1v2);
\draw[->] (1v2)--(2v2);
\draw[<-] (2v2)--(3v2);
\draw[->] (3v2)--(4v2);

\draw[->] (6v2)--(7v2);
\draw[<-] (7v2)--(8v2);
\draw[->] (8v2)--(13.0,3);

\draw[<-] (0,-1.0)--(0v0);
\draw[->] (0v0)--(0v1);
\draw[<-] (0v1)--(0v2);
\draw[->] (0v2)--(0,4.0);

\draw[->] (1.5,-1.0)--(1v0);
\draw[<-] (1v0)--(1v1);
\draw[->] (1v1)--(1v2);
\draw[<-] (1v2)--(1.5,4.0);

\draw[<-] (3,-1.0)--(2v0);
\draw[->] (2v0)--(2v1);
\draw[<-] (2v1)--(2v2);
\draw[->] (2v2)--(3,4.0);

\draw[->] (4.5,-1.0)--(3v0);
\draw[<-] (3v0)--(3v1);
\draw[->] (3v1)--(3v2);
\draw[<-] (3v2)--(4.5,4.0);

\draw[<-] (10.5,-1.0)--(7v0);
\draw[->] (7v0)--(7v1);
\draw[<-] (7v1)--(7v2);
\draw[->] (7v2)--(10.5,4.0);

\draw[->] (12,-1.0)--(8v0);
\draw[<-] (8v0)--(8v1);
\draw[->] (8v1)--(8v2);
\draw[<-] (8v2)--(12,4.0);

\end{tikzpicture}

\vspace{0.5em}

\begin{tikzpicture}[
scale=0.5,
vertex/.style={circle, draw, minimum size=2.5mm, inner sep=0pt},
fvertex/.style={circle, draw, fill=black, minimum size=2.5mm, inner sep=0pt},
gvertex/.style={circle, draw, fill=gray!60, minimum size=2.5mm, inner sep=0pt},
>=Stealth,
thick
]
\node(t1) at (-5,1.5){After $20$ steps $~~~$};
\node[fvertex] (0v0) at (0,0) {};
\node[fvertex] (1v0) at (1.5,0) {};
\node[vertex] (2v0) at (3,0) {};
\node[vertex] (3v0) at (4.5,0) {};
\node (4v0) at (6,0) {};
\node (5v0) at (7.5,0) {};
\node (6v0) at (9,0) {};
\node[vertex] (7v0) at (10.5,0) {};
\node[vertex] (8v0) at (12,0) {};

\node[vertex] (0v1) at (0,1.5) {};
\node[vertex] (1v1) at (1.5,1.5) {};
\node[fvertex] (2v1) at (3,1.5) {};
\node[vertex] (3v1) at (4.5,1.5) {};
\node (4v1) at (6,1.5) {};
\node(5v1) at (7.5,1.5) {};
\node (6v1) at (9,1.5) {};
\node[fvertex] (7v1) at (10.5,1.5) {};
\node[fvertex] (8v1) at (12,1.5) {};

\node[vertex] (0v2) at (0,3) {};
\node[fvertex] (1v2) at (1.5,3) {};
\node[vertex] (2v2) at (3,3) {};
\node[vertex] (3v2) at (4.5,3) {};
\node (4v2) at (6,3) {};
\node (5v2) at (7.5,3) {};
\node (6v2) at (9,3) {};
\node[vertex] (7v2) at (10.5,3) {};
\node[fvertex] (8v2) at (12,3) {};

\draw[gray] (-0.75, -0.5) -- (5.25, -0.5);
\draw[gray] (-0.75, 3.5) -- (5.25, 3.5);
\draw[gray, dotted] (5.25, -0.5) -- (9.75, -0.5);
\draw[gray, dotted] (5.25, 3.5) -- (9.75, 3.5);
\draw[gray] (9.75, -0.5) -- (12.75, -0.5);
\draw[gray] (9.75, 3.5) -- (12.75, 3.5);

\draw[gray] (-0.75,-0.5)--(-0.75,3.5);
\draw[gray] (0.75,-0.5)--(0.75,3.5);
\draw[gray] (2.25,-0.5)--(2.25,3.5);
\draw[gray] (3.75,-0.5)--(3.75,3.5);
\draw[gray, dotted] (5.25,-0.5)--(5.25,3.5);
\draw[gray, dotted] (9.75,-0.5)--(9.75,3.5);
\draw[gray] (11.25,-0.5)--(11.25,3.5);
\draw[gray] (12.75,-0.5)--(12.75,3.5);

\draw[ultra thick, dotted] (6, 1.5) -- (9, 1.5);

\draw[->] (-1.0,0)--(0v0);
\draw[<-] (0v0)--(1v0);
\draw[->] (1v0)--(2v0);
\draw[<-] (2v0)--(3v0);
\draw[->] (3v0)--(4v0);

\draw[->] (6v0)--(7v0);
\draw[<-] (7v0)--(8v0);
\draw[->] (8v0)--(13.0,0);

\draw[<-] (-1.0,1.5)--(0v1);
\draw[->] (0v1)--(1v1);
\draw[<-] (1v1)--(2v1);
\draw[->] (2v1)--(3v1);
\draw[<-] (3v1)--(4v1);

\draw[<-] (6v1)--(7v1);
\draw[->] (7v1)--(8v1);
\draw[<-] (8v1)--(13.0,1.5);

\draw[->] (-1.25,3)--(0v2);
\draw[<-] (0v2)--(1v2);
\draw[->,red, double, very thick] (1v2)--(2v2);
\draw[<-] (2v2)--(3v2);
\draw[->] (3v2)--(4v2);

\draw[-> ] (6v2)--(7v2);
\draw[<-] (7v2)--(8v2);
\draw[->] (8v2)--(13.0,3);

\draw[<-] (0,-1.0)--(0v0);
\draw[->] (0v0)--(0v1);
\draw[<-] (0v1)--(0v2);
\draw[->] (0v2)--(0,4.0);

\draw[->] (1.5,-1.0)--(1v0);
\draw[<-] (1v0)--(1v1);
\draw[->] (1v1)--(1v2);
\draw[<-] (1v2)--(1.5,4.0);

\draw[<-] (3,-1.0)--(2v0);
\draw[->] (2v0)--(2v1);
\draw[<-] (2v1)--(2v2);
\draw[->] (2v2)--(3,4.0);

\draw[->] (4.5,-1.0)--(3v0);
\draw[<-] (3v0)--(3v1);
\draw[->] (3v1)--(3v2);
\draw[<-] (3v2)--(4.5,4.0);

\draw[<-] (10.5,-1.0)--(7v0);
\draw[->] (7v0)--(7v1);
\draw[<-] (7v1)--(7v2);
\draw[->] (7v2)--(10.5,4.0);

\draw[->] (12,-1.0)--(8v0);
\draw[<-] (8v0)--(8v1);
\draw[->] (8v1)--(8v2);
\draw[<-] (8v2)--(12,4.0);

\end{tikzpicture}

\vspace{0.5em}

\begin{tikzpicture}[
scale=0.5,
vertex/.style={circle, draw, minimum size=2.5mm, inner sep=0pt},
fvertex/.style={circle, draw, fill=black, minimum size=2.5mm, inner sep=0pt},
gvertex/.style={circle, draw, fill=gray!60, minimum size=2.5mm, inner sep=0pt},
>=Stealth,
thick
]
\node(t1) at (-4.5,1.5){After $20(\frac{n}{2}-1)$ steps};
\node[fvertex] (0v0) at (0,0) {};
\node[fvertex] (1v0) at (1.5,0) {};
\node[fvertex] (2v0) at (3,0) {};
\node[fvertex] (3v0) at (4.5,0) {};
\node (4v0) at (6,0) {};
\node (5v0) at (7.5,0) {};
\node (6v0) at (9,0) {};
\node[vertex] (7v0) at (10.5,0) {};
\node[vertex] (8v0) at (12,0) {};

\node[vertex] (0v1) at (0,1.5) {};
\node[vertex] (1v1) at (1.5,1.5) {};
\node[vertex] (2v1) at (3,1.5) {};
\node[vertex] (3v1) at (4.5,1.5) {};
\node (4v1) at (6,1.5) {};
\node(5v1) at (7.5,1.5) {};
\node (6v1) at (9,1.5) {};
\node[fvertex] (7v1) at (10.5,1.5) {};
\node[vertex] (8v1) at (12,1.5) {};

\node[vertex] (0v2) at (0,3) {};
\node[fvertex] (1v2) at (1.5,3) {};
\node[vertex] (2v2) at (3,3) {};
\node[fvertex] (3v2) at (4.5,3) {};
\node (4v2) at (6,3) {};
\node (5v2) at (7.5,3) {};
\node (6v2) at (9,3) {};
\node[vertex] (7v2) at (10.5,3) {};
\node[vertex] (8v2) at (12,3) {};

\draw[gray] (-0.75, -0.5) -- (5.25, -0.5);
\draw[gray] (-0.75, 3.5) -- (5.25, 3.5);
\draw[gray, dotted] (5.25, -0.5) -- (9.75, -0.5);
\draw[gray, dotted] (5.25, 3.5) -- (9.75, 3.5);
\draw[gray] (9.75, -0.5) -- (12.75, -0.5);
\draw[gray] (9.75, 3.5) -- (12.75, 3.5);

\draw[gray] (-0.75,-0.5)--(-0.75,3.5);
\draw[gray] (0.75,-0.5)--(0.75,3.5);
\draw[gray] (2.25,-0.5)--(2.25,3.5);
\draw[gray] (3.75,-0.5)--(3.75,3.5);
\draw[gray, dotted] (5.25,-0.5)--(5.25,3.5);
\draw[gray, dotted] (9.75,-0.5)--(9.75,3.5);
\draw[gray] (11.25,-0.5)--(11.25,3.5);
\draw[gray] (12.75,-0.5)--(12.75,3.5);

\draw[ultra thick, dotted] (6, 1.5) -- (9, 1.5);

\draw[->] (-1.0,0)--(0v0);
\draw[<-] (0v0)--(1v0);
\draw[->] (1v0)--(2v0);
\draw[<-] (2v0)--(3v0);
\draw[->] (3v0)--(4v0);

\draw[->] (6v0)--(7v0);
\draw[<-] (7v0)--(8v0);
\draw[->] (8v0)--(13.0,0);

\draw[<-] (-1.0,1.5)--(0v1);
\draw[->] (0v1)--(1v1);
\draw[<-] (1v1)--(2v1);
\draw[->] (2v1)--(3v1);
\draw[<-] (3v1)--(4v1);

\draw[<-] (6v1)--(7v1);
\draw[->] (7v1)--(8v1);
\draw[<-] (8v1)--(13.0,1.5);

\draw[->] (-1.25,3)--(0v2);
\draw[<-] (0v2)--(1v2);
\draw[->] (1v2)--(2v2);
\draw[<-] (2v2)--(3v2);
\draw[->] (3v2)--(4v2);

\draw[-> , red, double, very thick] (6v2)--(7v2);
\draw[<-] (7v2)--(8v2);
\draw[->] (8v2)--(13.0,3);

\draw[<-] (0,-1.0)--(0v0);
\draw[->] (0v0)--(0v1);
\draw[<-] (0v1)--(0v2);
\draw[->] (0v2)--(0,4.0);

\draw[->] (1.5,-1.0)--(1v0);
\draw[<-] (1v0)--(1v1);
\draw[->] (1v1)--(1v2);
\draw[<-] (1v2)--(1.5,4.0);

\draw[<-] (3,-1.0)--(2v0);
\draw[->] (2v0)--(2v1);
\draw[<-] (2v1)--(2v2);
\draw[->] (2v2)--(3,4.0);

\draw[->] (4.5,-1.0)--(3v0);
\draw[<-] (3v0)--(3v1);
\draw[->] (3v1)--(3v2);
\draw[<-] (3v2)--(4.5,4.0);

\draw[<-] (10.5,-1.0)--(7v0);
\draw[->] (7v0)--(7v1);
\draw[<-] (7v1)--(7v2);
\draw[->] (7v2)--(10.5,4.0);

\draw[->] (12,-1.0)--(8v0);
\draw[<-] (8v0)--(8v1);
\draw[->] (8v1)--(8v2);
\draw[<-] (8v2)--(12,4.0);

\end{tikzpicture}

\caption{For even $n$ . The current position of the ant is indicated by the red double arrow. From the last configuration, the ant will escape the grid after $8$ additional steps.}
\label{fig:3maxeven}

\end{figure}

\begin{figure}[H]
\centering

\begin{tikzpicture}[
scale=0.5,
vertex/.style={circle, draw, minimum size=2.5mm, inner sep=0pt},
fvertex/.style={circle, draw, fill=black, minimum size=2.5mm, inner sep=0pt},
gvertex/.style={circle, draw, fill=gray!60, minimum size=2.5mm, inner sep=0pt},
>=Stealth,
thick
]
\node(t1) at (-5,1.5){Starting configuration $\,~~$};
\node[vertex] (0v0) at (0,0) {};
\node[vertex] (1v0) at (1.5,0) {};
\node[vertex] (2v0) at (3,0) {};
\node[vertex] (3v0) at (4.5,0) {};
\node (4v0) at (6,0) {};
\node (5v0) at (7.5,0) {};
\node (6v0) at (9,0) {};
\node[vertex] (7v0) at (10.5,0) {};
\node[vertex] (8v0) at (12,0) {};
\node[gvertex] (9v0) at (13.5,0) {};

\node[fvertex] (0v1) at (0,1.5) {};
\node[vertex] (1v1) at (1.5,1.5) {};
\node[fvertex] (2v1) at (3,1.5) {};
\node[fvertex] (3v1) at (4.5,1.5) {};
\node (4v1) at (6,1.5) {};
\node(5v1) at (7.5,1.5) {};
\node (6v1) at (9,1.5) {};
\node[fvertex] (7v1) at (10.5,1.5) {};
\node[fvertex] (8v1) at (12,1.5) {};
\node[fvertex] (8v1) at (12,1.5) {};
\node[vertex] (9v1) at (13.5,1.5) {};

\node[vertex] (0v2) at (0,3) {};
\node[vertex] (1v2) at (1.5,3) {};
\node[vertex] (2v2) at (3,3) {};
\node[fvertex] (3v2) at (4.5,3) {};
\node (4v2) at (6,3) {};
\node (5v2) at (7.5,3) {};
\node (6v2) at (9,3) {};
\node[vertex] (7v2) at (10.5,3) {};
\node[fvertex] (8v2) at (12,3) {};
\node[vertex] (9v2) at (13.5,3) {};
\draw[gray] (-0.75, -0.5) -- (5.25, -0.5);
\draw[gray] (-0.75, 3.5) -- (5.25, 3.5);
\draw[red, dotted] (5.25, -0.5) -- (9.75, -0.5);
\draw[red, dotted] (6, 1.5) -- (9, 1.5);
\draw[red, dotted] (5.25, 3.5) -- (9.75, 3.5);
\draw[gray] (9.75, -0.5) -- (14.25, -0.5);
\draw[gray] (9.75, 3.5) -- (14.25, 3.5);

\draw[gray] (-0.75,-0.5)--(-0.75,3.5);
\draw[gray] (0.75,-0.5)--(0.75,3.5);
\draw[gray] (2.25,-0.5)--(2.25,3.5);
\draw[gray] (3.75,-0.5)--(3.75,3.5);
\draw[gray, dotted] (5.25,-0.5)--(5.25,3.5);
\draw[gray, dotted] (9.75,-0.5)--(9.75,3.5);
\draw[gray] (11.25,-0.5)--(11.25,3.5);
\draw[gray] (12.75,-0.5)--(12.75,3.5);
\draw[gray] (14.25,-0.5)--(14.25,3.5);

\draw[red] (2.25,-0.5)--(2.25,3.5);
\draw[red] (5.25,-0.5)--(5.25,3.5);

\draw[red] (2.25,-0.5)--(5.25,-0.5);
\draw[red] (2.25,3.5)--(5.25,3.5);

\draw[red] (9.75,-0.5)--(9.75,3.5);
\draw[red] (12.75,-0.5)--(12.75,3.5);

\draw[red] (9.75,-0.5)--(12.75,-0.5);
\draw[red] (9.75,3.5)--(12.75,3.5);

\draw[->] (-1.0,0)--(0v0);
\draw[<-] (0v0)--(1v0);
\draw[->] (1v0)--(2v0);
\draw[<-] (2v0)--(3v0);
\draw[->] (3v0)--(4v0);

\draw[->] (6v0)--(7v0);
\draw[<-] (7v0)--(8v0);
\draw[->] (8v0)--(9v0);
\draw[<-] (9v0)--(14.5, 0);
\draw[<-] (-1.0,1.5)--(0v1);
\draw[->] (0v1)--(1v1);
\draw[<-] (1v1)--(2v1);
\draw[->] (2v1)--(3v1);
\draw[<-] (3v1)--(4v1);

\draw[<-] (6v1)--(7v1);
\draw[->] (7v1)--(8v1);
\draw[<-] (8v1)--(9v1);
\draw[->] (9v1)--(14.5,1.5);
\draw[->, red, double, very thick] (-1.25,3)--(0v2);
\draw[<-] (0v2)--(1v2);
\draw[->] (1v2)--(2v2);
\draw[<-] (2v2)--(3v2);
\draw[->] (3v2)--(4v2);

\draw[->] (6v2)--(7v2);
\draw[<-] (7v2)--(8v2);
\draw[->] (8v2)--(9v2);
\draw[<-] (9v2)--(14.5, 3);

\draw[<-] (0,-1.0)--(0v0);
\draw[->] (0v0)--(0v1);
\draw[<-] (0v1)--(0v2);
\draw[->] (0v2)--(0,4.0);

\draw[->] (1.5,-1.0)--(1v0);
\draw[<-] (1v0)--(1v1);
\draw[->] (1v1)--(1v2);
\draw[<-] (1v2)--(1.5,4.0);

\draw[<-] (3,-1.0)--(2v0);
\draw[->] (2v0)--(2v1);
\draw[<-] (2v1)--(2v2);
\draw[->] (2v2)--(3,4.0);

\draw[->] (4.5,-1.0)--(3v0);
\draw[<-] (3v0)--(3v1);
\draw[->] (3v1)--(3v2);
\draw[<-] (3v2)--(4.5,4.0);

\draw[<-] (10.5,-1.0)--(7v0);
\draw[->] (7v0)--(7v1);
\draw[<-] (7v1)--(7v2);
\draw[->] (7v2)--(10.5,4.0);

\draw[->] (12,-1.0)--(8v0);
\draw[<-] (8v0)--(8v1);
\draw[->] (8v1)--(8v2);
\draw[<-] (8v2)--(12,4.0);

\draw[<-] (13.5,-1.0)--(9v0);
\draw[->] (9v0)--(9v1);
\draw[<-] (9v1)--(9v2);
\draw[->] (9v2)--(13.5,4.0);

\end{tikzpicture}

\vspace{0.5em}

\begin{tikzpicture}[
scale=0.5,
vertex/.style={circle, draw, minimum size=2.5mm, inner sep=0pt},
fvertex/.style={circle, draw, fill=black, minimum size=2.5mm, inner sep=0pt},
gvertex/.style={circle, draw, fill=gray!60, minimum size=2.5mm, inner sep=0pt},
>=Stealth,
thick
]
\node(t1) at (-5,1.5){After $20(\frac{n-1}{2} -1)$ steps};
\node[fvertex] (0v0) at (0,0) {};
\node[fvertex] (1v0) at (1.5,0) {};
\node[fvertex] (2v0) at (3,0) {};
\node[fvertex] (3v0) at (4.5,0) {};
\node (4v0) at (6,0) {};
\node (5v0) at (7.5,0) {};
\node (6v0) at (9,0) {};
\node[vertex] (7v0) at (10.5,0) {};
\node[vertex] (8v0) at (12,0) {};
\node[gvertex] (9v0) at (13.5,0) {};

\node[vertex] (0v1) at (0,1.5) {};
\node[vertex] (1v1) at (1.5,1.5) {};
\node[vertex] (2v1) at (3,1.5) {};
\node[vertex] (3v1) at (4.5,1.5) {};
\node (4v1) at (6,1.5) {};
\node(5v1) at (7.5,1.5) {};
\node (6v1) at (9,1.5) {};
\node[fvertex] (7v1) at (10.5,1.5) {};
\node[vertex] (8v1) at (12,1.5) {};
\node[vertex] (8v1) at (12,1.5) {};
\node[vertex] (9v1) at (13.5,1.5) {};

\node[vertex] (0v2) at (0,3) {};
\node[fvertex] (1v2) at (1.5,3) {};
\node[vertex] (2v2) at (3,3) {};
\node[fvertex] (3v2) at (4.5,3) {};
\node (4v2) at (6,3) {};
\node (5v2) at (7.5,3) {};
\node (6v2) at (9,3) {};
\node[vertex] (7v2) at (10.5,3) {};
\node[vertex] (8v2) at (12,3) {};
\node[vertex] (9v2) at (13.5,3) {};
\draw[gray] (-0.75, -0.5) -- (5.25, -0.5);
\draw[gray] (-0.75, 3.5) -- (5.25, 3.5);
\draw[gray, dotted] (5.25, -0.5) -- (9.75, -0.5);
\draw[gray, dotted] (5.25, 3.5) -- (9.75, 3.5);
\draw[gray] (9.75, -0.5) -- (14.25, -0.5);
\draw[gray] (9.75, 3.5) -- (14.25, 3.5);

\draw[gray] (-0.75,-0.5)--(-0.75,3.5);
\draw[gray] (0.75,-0.5)--(0.75,3.5);
\draw[gray] (2.25,-0.5)--(2.25,3.5);
\draw[gray] (3.75,-0.5)--(3.75,3.5);
\draw[gray, dotted] (5.25,-0.5)--(5.25,3.5);
\draw[gray, dotted] (9.75,-0.5)--(9.75,3.5);
\draw[gray] (11.25,-0.5)--(11.25,3.5);
\draw[gray] (12.75,-0.5)--(12.75,3.5);
\draw[gray] (14.25,-0.5)--(14.25,3.5);

\draw[ultra thick, dotted] (6, 1.5) -- (9, 1.5);

\draw[->] (-1.0,0)--(0v0);
\draw[<-] (0v0)--(1v0);
\draw[->] (1v0)--(2v0);
\draw[<-] (2v0)--(3v0);
\draw[->] (3v0)--(4v0);

\draw[->] (6v0)--(7v0);
\draw[<-] (7v0)--(8v0);
\draw[->] (8v0)--(9v0);
\draw[<-] (9v0)--(14.5, 0);
\draw[<-] (-1.0,1.5)--(0v1);
\draw[->] (0v1)--(1v1);
\draw[<-] (1v1)--(2v1);
\draw[->] (2v1)--(3v1);
\draw[<-] (3v1)--(4v1);

\draw[<-] (6v1)--(7v1);
\draw[->] (7v1)--(8v1);
\draw[<-] (8v1)--(9v1);
\draw[->] (9v1)--(14.5,1.5);
\draw[->] (-1.25,3)--(0v2);
\draw[<-] (0v2)--(1v2);
\draw[->] (1v2)--(2v2);
\draw[<-] (2v2)--(3v2);
\draw[->] (3v2)--(4v2);

\draw[red, ->, double, very thick] (6v2)--(7v2);
\draw[<-] (7v2)--(8v2);
\draw[->] (8v2)--(9v2);
\draw[<-] (9v2)--(14.5, 3);

\draw[<-] (0,-1.0)--(0v0);
\draw[->] (0v0)--(0v1);
\draw[<-] (0v1)--(0v2);
\draw[->] (0v2)--(0,4.0);

\draw[->] (1.5,-1.0)--(1v0);
\draw[<-] (1v0)--(1v1);
\draw[->] (1v1)--(1v2);
\draw[<-] (1v2)--(1.5,4.0);

\draw[<-] (3,-1.0)--(2v0);
\draw[->] (2v0)--(2v1);
\draw[<-] (2v1)--(2v2);
\draw[->] (2v2)--(3,4.0);

\draw[->] (4.5,-1.0)--(3v0);
\draw[<-] (3v0)--(3v1);
\draw[->] (3v1)--(3v2);
\draw[<-] (3v2)--(4.5,4.0);

\draw[<-] (10.5,-1.0)--(7v0);
\draw[->] (7v0)--(7v1);
\draw[<-] (7v1)--(7v2);
\draw[->] (7v2)--(10.5,4.0);

\draw[->] (12,-1.0)--(8v0);
\draw[<-] (8v0)--(8v1);
\draw[->] (8v1)--(8v2);
\draw[<-] (8v2)--(12,4.0);

\draw[<-] (13.5,-1.0)--(9v0);
\draw[->] (9v0)--(9v1);
\draw[<-] (9v1)--(9v2);
\draw[->] (9v2)--(13.5,4.0);

\end{tikzpicture}

\caption{For odd $n$ . The current position of the ant is indicated by the red double arrow. From the last configuration, the ant will escape the grid after $17$ additional steps.}
\label{fig:3maxodd}

\end{figure}

\subsubsection{Four-row Grids: Lower bound and conjecture}\label{sec:fourrow}

Explicit constructions (Figure \ref{fig:4max}) show that for every  $n \ge 14:$ 
$$ S_{4,n} \ge 34(n-5) + 20. $$
Moreover, our simulations indicate that $S_{4,n} = 34(n-5) + 20$ for even $n$ with $14\le n \le 18$ and odd $n$ with $9\le n \le 17$.  These observations lead us to conjecture the following linear upper bound.

\begin{conjecture}
For every integer $n \ge 2$, one has $S_{4,n} \le 34n$.
\end{conjecture}

\begin{figure}[htbp]
\centering

When $n$ is even:
\begin{tikzpicture}[
scale=0.5,
vertex/.style={circle, draw, minimum size=2.5mm, inner sep=0pt},
fvertex/.style={circle, draw, fill=black, minimum size=2.5mm, inner sep=0pt},
gvertex/.style={circle, draw, fill=gray!60, minimum size=2.5mm, inner sep=0pt},
>=Stealth,
thick
]

\node[gvertex] (0v0) at (0,0) {};
\node[vertex] (1v0) at (1.5,0) {};
\node[fvertex] (2v0) at (3,0) {};
\node[vertex] (3v0) at (4.5,0) {};
\node[vertex] (4v0) at (6,0) {};
\node[vertex] (5v0) at (7.5,0) {};
\node[vertex] (6v0) at (9,0) {};
\node (7v0) at (10.5,0) {};
\node (8v0) at (13.5,0) {};
\node[vertex] (9v0) at (15,0) {};
\node[vertex] (10v0) at (16.5,0) {};
\node[vertex] (11v0) at (18,0) {};
\node[vertex] (12v0) at (19.5,0) {};
\node[gvertex] (13v0) at (21,0) {};

\node[vertex] (0v1) at (0,1.5) {};
\node[fvertex] (1v1) at (1.5,1.5) {};
\node[vertex] (2v1) at (3,1.5) {};
\node[vertex] (3v1) at (4.5,1.5) {};
\node[vertex] (4v1) at (6,1.5) {};
\node[vertex] (5v1) at (7.5,1.5) {};
\node[fvertex] (6v1) at (9,1.5) {};
\node (7v1) at (10.5,1.5) {};
\node (8v1) at (13.5,1.5) {};
\node[vertex] (9v1) at (15,1.5) {};
\node[fvertex] (10v1) at (16.5,1.5) {};
\node[vertex] (11v1) at (18,1.5) {};
\node[fvertex] (12v1) at (19.5,1.5) {};
\node[vertex] (13v1) at (21,1.5) {};

\node[vertex] (0v2) at (0,3) {};
\node[fvertex] (1v2) at (1.5,3) {};
\node[fvertex] (2v2) at (3,3) {};
\node[vertex] (3v2) at (4.5,3) {};
\node[vertex] (4v2) at (6,3) {};
\node[vertex] (5v2) at (7.5,3) {};
\node[fvertex] (6v2) at (9,3) {};
\node (7v2) at (10.5,3) {};
\node (8v2) at (13.5,3) {};
\node[vertex] (9v2) at (15,3) {};
\node[fvertex] (10v2) at (16.5,3) {};
\node[vertex] (11v2) at (18,3) {};
\node[fvertex] (12v2) at (19.5,3) {};
\node[fvertex] (13v2) at (21,3) {};

\node[fvertex] (0v3) at (0,4.5) {};
\node[fvertex] (1v3) at (1.5,4.5) {};
\node[vertex] (2v3) at (3,4.5) {};
\node[fvertex] (3v3) at (4.5,4.5) {};
\node[fvertex] (4v3) at (6,4.5) {};
\node[fvertex] (5v3) at (7.5,4.5) {};
\node[fvertex] (6v3) at (9,4.5) {};
\node (7v3) at (10.5,4.5) {};
\node (8v3) at (13.5,4.5) {};
\node[fvertex] (9v3) at (15,4.5) {};
\node[fvertex] (10v3) at (16.5,4.5) {};
\node[vertex] (11v3) at (18,4.5) {};
\node[vertex] (12v3) at (19.5,4.5) {};
\node[fvertex] (13v3) at (21,4.5) {};

\draw[gray] (-0.75, -0.5) -- (9.75, -0.5);
\draw[gray] (-0.75, 5) -- (9.75, 5);

\draw[red, dotted] (9.75, -0.5) -- (14.25, -0.5);
\draw[red, dotted] (10.25, 2.5) -- (13.75, 2.5);
\draw[red, dotted] (9.75, 5) -- (14.25, 5);

\draw[gray] (14.25, -0.5) -- (21.75, -0.5);
\draw[gray] (14.25, 5) -- (21.75, 5);

\draw[gray] (-0.75,-0.5)--(-0.75,5);
\draw[gray] (0.75,-0.5)--(0.75,5);
\draw[gray] (2.25,-0.5)--(2.25,5);
\draw[gray] (3.75,-0.5)--(3.75,5);
\draw[gray] (5.25,-0.5)--(5.25,5);
\draw[gray] (6.75,-0.5)--(6.75,5);
\draw[gray] (8.25,-0.5)--(8.25,5);
\draw[gray, dotted] (9.75,-0.5)--(9.75,5);
\draw[gray, dotted] (14.25,-0.5)--(14.25,5);
\draw[gray] (15.75,-0.5)--(15.75,5);
\draw[gray] (17.25,-0.5)--(17.25,5);
\draw[gray] (18.75,-0.5)--(18.75,5);
\draw[gray] (20.25,-0.5)--(20.25,5);
\draw[gray] (21.75,-0.5)--(21.75,5);

\draw[red] (6.75,-0.5)--(6.75,5);
\draw[red] (9.75,-0.5)--(9.75,5);
\draw[red] (6.75,-0.5)--(9.75,-0.5);
\draw[red] (6.75,5)--(9.75,5);

\draw[red] (14.25,-0.5)--(14.25,5);
\draw[red] (17.25,-0.5)--(17.25,5);
\draw[red] (14.25,-0.5)--(17.25,-0.5);
\draw[red] (14.25,5)--(17.25,5);

\draw[<-] (-1.0,0)--(0v0);
\draw[->] (0v0)--(1v0);
\draw[<-] (1v0)--(2v0);
\draw[->] (2v0)--(3v0);
\draw[<-] (3v0)--(4v0);
\draw[->] (4v0)--(5v0);
\draw[<-] (5v0)--(6v0);
\draw[->] (6v0)--(7v0);
\draw[->] (8v0)--(9v0);
\draw[<-] (9v0)--(10v0);
\draw[->] (10v0)--(11v0);
\draw[<-] (11v0)--(12v0);
\draw[->] (12v0)--(13v0);
\draw[<-] (13v0)--(22,0);

\draw[->] (-1.0,1.5)--(0v1);
\draw[<-] (0v1)--(1v1);
\draw[->] (1v1)--(2v1);
\draw[<-] (2v1)--(3v1);
\draw[->] (3v1)--(4v1);
\draw[<-] (4v1)--(5v1);
\draw[->] (5v1)--(6v1);
\draw[<-] (6v1)--(7v1);
\draw[<-] (8v1)--(9v1);
\draw[->] (9v1)--(10v1);
\draw[<-] (10v1)--(11v1);
\draw[->] (11v1)--(12v1);
\draw[<-] (12v1)--(13v1);
\draw[->] (13v1)--(22,1.5);

\draw[<-] (-1.0,3)--(0v2);
\draw[->] (0v2)--(1v2);
\draw[<-] (1v2)--(2v2);
\draw[->] (2v2)--(3v2);
\draw[<-] (3v2)--(4v2);
\draw[->] (4v2)--(5v2);
\draw[<-] (5v2)--(6v2);
\draw[->] (6v2)--(7v2);
\draw[->] (8v2)--(9v2);
\draw[<-] (9v2)--(10v2);
\draw[->] (10v2)--(11v2);
\draw[<-] (11v2)--(12v2);
\draw[->] (12v2)--(13v2);
\draw[<-] (13v2)--(22,3);

\draw[->] (-1,4.5)--(0v3);
\draw[<-] (0v3)--(1v3);
\draw[->] (1v3)--(2v3);
\draw[<-] (2v3)--(3v3);
\draw[->] (3v3)--(4v3);
\draw[<-] (4v3)--(5v3);
\draw[->] (5v3)--(6v3);
\draw[<-] (6v3)--(7v3);
\draw[<-] (8v3)--(9v3);
\draw[->] (9v3)--(10v3);
\draw[<-] (10v3)--(11v3);
\draw[->] (11v3)--(12v3);
\draw[<-] (12v3)--(13v3);
\draw[->] (13v3)--(22,4.5);

\foreach \i in {0,1,2,3,4,5,6} {
    \pgfmathsetmacro{\x}{\i*1.5}
    \pgfmathsetmacro{\dir}{int(mod(\i,2)==0?1:0)}
    \ifnum\dir=1
        \draw[->] (\x,-1.0)--(\i v0);
        \draw[<-] (\i v0)--(\i v1);
        \draw[->] (\i v1)--(\i v2);
        \draw[<-] (\i v2)--(\i v3);
        \draw[->] (\i v3)--(\x,5.5);
    \else
        \draw[<-] (\x,-1.0)--(\i v0);
        \draw[->] (\i v0)--(\i v1);
        \draw[<-] (\i v1)--(\i v2);
        \draw[->] (\i v2)--(\i v3);
        \draw[<-] (\i v3)--(\x,5.5);
    \fi
}

\foreach \i in {9,10,11,12,13} {
    \pgfmathsetmacro{\x}{\i*1.5+1.5}
    \pgfmathsetmacro{\dir}{int(mod(\i,2)==1?1:0)}
    \ifnum\dir=1
        \draw[<-] (\x,-1.0)--(\i v0);
        \draw[->] (\i v0)--(\i v1);
        \draw[<-] (\i v1)--(\i v2);
        \draw[->] (\i v2)--(\i v3);
        \draw[<-] (\i v3)--(\x,5.5);
    \else
        \draw[->] (\x,-1.0)--(\i v0);
        \draw[<-] (\i v0)--(\i v1);
        \draw[->] (\i v1)--(\i v2);
        \draw[<-] (\i v2)--(\i v3);
        \draw[->] (\i v3)--(\x,5.5);
    \fi
}
\draw[red, ->, double, very thick] (1.5, 6)--(1v3);
\end{tikzpicture}

\vspace{1em}

When $n$ is odd:
\begin{tikzpicture}[
scale=0.5,
vertex/.style={circle, draw, minimum size=2.5mm, inner sep=0pt},
fvertex/.style={circle, draw, fill=black, minimum size=2.5mm, inner sep=0pt},
gvertex/.style={circle, draw, fill=gray!60, minimum size=2.5mm, inner sep=0pt},
>=Stealth,
thick
]

%
%
%
%

\node[gvertex] (0v0) at (0,0) {};
\node[vertex] (1v0) at (1.5,0) {};
\node[fvertex] (2v0) at (3,0) {};
\node[vertex] (3v0) at (4.5,0) {};
\node[vertex] (4v0) at (6,0) {};
\node[vertex] (5v0) at (7.5,0) {};
\node[vertex] (6v0) at (9,0) {};
\node (7v0) at (10.5,0) {};
\node (8v0) at (13.5,0) {};
\node[vertex] (9v0) at (15,0) {};
\node[vertex] (10v0) at (16.5,0) {};
\node[vertex] (11v0) at (18,0) {};
\node[fvertex] (12v0) at (19.5,0) {};
\node[fvertex] (13v0) at (21,0) {};
\node[vertex] (14v0) at (22.5,0) {};

\node[vertex] (0v1) at (0,1.5) {};
\node[fvertex] (1v1) at (1.5,1.5) {};
\node[vertex] (2v1) at (3,1.5) {};
\node[vertex] (3v1) at (4.5,1.5) {};
\node[vertex] (4v1) at (6,1.5) {};
\node[vertex] (5v1) at (7.5,1.5) {};
\node[fvertex] (6v1) at (9,1.5) {};
\node (7v1) at (10.5,1.5) {};
\node (8v1) at (13.5,1.5) {};
\node[vertex] (9v1) at (15,1.5) {};
\node[fvertex] (10v1) at (16.5,1.5) {};
\node[vertex] (11v1) at (18,1.5) {};
\node[fvertex] (12v1) at (19.5,1.5) {};
\node[vertex] (13v1) at (21,1.5) {};
\node[vertex] (14v1) at (22.5,1.5) {};

\node[vertex] (0v2) at (0,3) {};
\node[fvertex] (1v2) at (1.5,3) {};
\node[fvertex] (2v2) at (3,3) {};
\node[vertex] (3v2) at (4.5,3) {};
\node[vertex] (4v2) at (6,3) {};
\node[vertex] (5v2) at (7.5,3) {};
\node[fvertex] (6v2) at (9,3) {};
\node (7v2) at (10.5,3) {};
\node (8v2) at (13.5,3) {};
\node[vertex] (9v2) at (15,3) {};
\node[fvertex] (10v2) at (16.5,3) {};
\node[vertex] (11v2) at (18,3) {};
\node[fvertex] (12v2) at (19.5,3) {};
\node[vertex] (13v2) at (21,3) {};
\node[fvertex] (14v2) at (22.5,3) {};

\node[fvertex] (0v3) at (0,4.5) {};
\node[fvertex] (1v3) at (1.5,4.5) {};
\node[vertex] (2v3) at (3,4.5) {};
\node[fvertex] (3v3) at (4.5,4.5) {};
\node[fvertex] (4v3) at (6,4.5) {};
\node[fvertex] (5v3) at (7.5,4.5) {};
\node[fvertex] (6v3) at (9,4.5) {};
\node (7v3) at (10.5,4.5) {};
\node (8v3) at (13.5,4.5) {};
\node[fvertex] (9v3) at (15,4.5) {};
\node[fvertex] (10v3) at (16.5,4.5) {};
\node[fvertex] (11v3) at (18,4.5) {};
\node[fvertex] (12v3) at (19.5,4.5) {};
\node[fvertex] (13v3) at (21,4.5) {};
\node[gvertex] (14v3) at (22.5,4.5) {};

\draw[gray] (-0.75, -0.5) -- (9.75, -0.5);
\draw[gray] (-0.75, 5) -- (9.75, 5);

\draw[red, dotted] (9.75, -0.5) -- (14.25, -0.5);
\draw[red, dotted] (10.25, 2.5) -- (13.75, 2.5);
\draw[red, dotted] (9.75, 5) -- (14.25, 5);

\draw[gray] (14.25, -0.5) -- (23.25, -0.5);
\draw[gray] (14.25, 5) -- (23.25, 5);

\draw[gray] (-0.75,-0.5)--(-0.75,5);
\draw[gray] (0.75,-0.5)--(0.75,5);
\draw[gray] (2.25,-0.5)--(2.25,5);
\draw[gray] (3.75,-0.5)--(3.75,5);
\draw[gray] (5.25,-0.5)--(5.25,5);
\draw[gray] (6.75,-0.5)--(6.75,5);
\draw[gray] (8.25,-0.5)--(8.25,5);
\draw[gray, dotted] (9.75,-0.5)--(9.75,5);
\draw[gray, dotted] (14.25,-0.5)--(14.25,5);
\draw[gray] (15.75,-0.5)--(15.75,5);
\draw[gray] (17.25,-0.5)--(17.25,5);
\draw[gray] (18.75,-0.5)--(18.75,5);
\draw[gray] (20.25,-0.5)--(20.25,5);
\draw[gray] (21.75,-0.5)--(21.75,5);
\draw[gray] (23.25,-0.5)--(23.25,5);

\draw[red] (6.75,-0.5)--(6.75,5);
\draw[red] (9.75,-0.5)--(9.75,5);
\draw[red] (6.75,-0.5)--(9.75,-0.5);
\draw[red] (6.75,5)--(9.75,5);

\draw[red] (14.25,-0.5)--(14.25,5);
\draw[red] (17.25,-0.5)--(17.25,5);
\draw[red] (14.25,-0.5)--(17.25,-0.5);
\draw[red] (14.25,5)--(17.25,5);

\draw[<-] (-1.0,0)--(0v0);
\draw[->] (0v0)--(1v0);
\draw[<-] (1v0)--(2v0);
\draw[->] (2v0)--(3v0);
\draw[<-] (3v0)--(4v0);
\draw[->] (4v0)--(5v0);
\draw[<-] (5v0)--(6v0);
\draw[->] (6v0)--(7v0);
\draw[->] (8v0)--(9v0);
\draw[<-] (9v0)--(10v0);
\draw[->] (10v0)--(11v0);
\draw[<-] (11v0)--(12v0);
\draw[->] (12v0)--(13v0);
\draw[<-] (13v0)--(14v0);
\draw[->] (14v0)--(23.5,0);

\draw[->] (-1.0,1.5)--(0v1);
\draw[<-] (0v1)--(1v1);
\draw[->] (1v1)--(2v1);
\draw[<-] (2v1)--(3v1);
\draw[->] (3v1)--(4v1);
\draw[<-] (4v1)--(5v1);
\draw[->] (5v1)--(6v1);
\draw[<-] (6v1)--(7v1);
\draw[<-] (8v1)--(9v1);
\draw[->] (9v1)--(10v1);
\draw[<-] (10v1)--(11v1);
\draw[->] (11v1)--(12v1);
\draw[<-] (12v1)--(13v1);
\draw[->] (13v1)--(14v1);
\draw[<-] (14v1)--(23.5,1.5);

\draw[<-] (-1.0,3)--(0v2);
\draw[->] (0v2)--(1v2);
\draw[<-] (1v2)--(2v2);
\draw[->] (2v2)--(3v2);
\draw[<-] (3v2)--(4v2);
\draw[->] (4v2)--(5v2);
\draw[<-] (5v2)--(6v2);
\draw[->] (6v2)--(7v2);
\draw[->] (8v2)--(9v2);
\draw[<-] (9v2)--(10v2);
\draw[->] (10v2)--(11v2);
\draw[<-] (11v2)--(12v2);
\draw[->] (12v2)--(13v2);
\draw[<-] (13v2)--(14v2);
\draw[->] (14v2)--(23.5,3);

\draw[->] (-1,4.5)--(0v3);
\draw[<-] (0v3)--(1v3);
\draw[->] (1v3)--(2v3);
\draw[<-] (2v3)--(3v3);
\draw[->] (3v3)--(4v3);
\draw[<-] (4v3)--(5v3);
\draw[->] (5v3)--(6v3);
\draw[<-] (6v3)--(7v3);
\draw[<-] (8v3)--(9v3);
\draw[->] (9v3)--(10v3);
\draw[<-] (10v3)--(11v3);
\draw[->] (11v3)--(12v3);
\draw[<-] (12v3)--(13v3);
\draw[->] (13v3)--(14v3);
\draw[<-] (14v3)--(23.5,4.5);

\foreach \i in {0,1,2,3,4,5,6} {
    \pgfmathsetmacro{\x}{\i*1.5}
    \pgfmathsetmacro{\dir}{int(mod(\i,2)==0?1:0)}
    \ifnum\dir=1
        \draw[->] (\x,-1.0)--(\i v0);
        \draw[<-] (\i v0)--(\i v1);
        \draw[->] (\i v1)--(\i v2);
        \draw[<-] (\i v2)--(\i v3);
        \draw[->] (\i v3)--(\x,5.5);
    \else
        \draw[<-] (\x,-1.0)--(\i v0);
        \draw[->] (\i v0)--(\i v1);
        \draw[<-] (\i v1)--(\i v2);
        \draw[->] (\i v2)--(\i v3);
        \draw[<-] (\i v3)--(\x,5.5);
    \fi
}

\foreach \i in {9,10,11,12,13,14} {
    \pgfmathsetmacro{\x}{\i*1.5+1.5}
    \pgfmathsetmacro{\dir}{int(mod(\i,2)==1?1:0)}
    \ifnum\dir=1
        \draw[<-] (\x,-1.0)--(\i v0);
        \draw[->] (\i v0)--(\i v1);
        \draw[<-] (\i v1)--(\i v2);
        \draw[->] (\i v2)--(\i v3);
        \draw[<-] (\i v3)--(\x,5.5);
    \else
        \draw[->] (\x,-1.0)--(\i v0);
        \draw[<-] (\i v0)--(\i v1);
        \draw[->] (\i v1)--(\i v2);
        \draw[<-] (\i v2)--(\i v3);
        \draw[->] (\i v3)--(\x,5.5);
    \fi
}
\draw[red, ->, double, very thick] (1.5, 6)--(1v3);
\end{tikzpicture}

\caption{Configurations yielding lower bound for $S_{4,n}$. The initial position of the ant is indicated by the red double arrow.}
\label{fig:4max}
\end{figure}

\subsection{Upper Bound for the Escaping Time of Rectangular Domains}\label{sec:rectupperbound}

Using the same inductive decomposition as in Section~\ref{sec:square}, 
adapted to the rectangular setting, we now derive an upper bound for 
$S_{k,n}$ that improves on the general $2^{kn}$ bound for fixed $k$.

\begin{theorem}\label{thm:rect}
For fixed $k \ge 2$ and $n \ge k$,
\[
S_{k,n} \;\le\;
\begin{cases}
(6n-4)\cdot\left(\dfrac{n+1}{\sqrt{2}}\right)^{k-2} & \text{if } k \text{ is even,}\\[6pt]
(10n-2)\cdot\left(\dfrac{n+1}{\sqrt{2}}\right)^{k-3} & \text{if } k \text{ is odd.}
\end{cases}
\]
In particular, $S_{k,n}  \le \left(\frac{n+1}{\sqrt{2}}\right)^k.$ 
\end{theorem}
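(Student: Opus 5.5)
We will prove the bound by induction on $k$ in steps of two, peeling off the top and bottom rows of $G_{k,n}$. This mirrors the decomposition used for the square bound: $G_{k,n}$ is written as the union of the interior strip $G_{k-2,n}$ (rows $2,\dots,k-1$) and the two boundary rows, top and bottom. The base cases are $k=2$ and $k=3$, supplied by Theorem~\ref{pr:theotworow} and Theorem~\ref{pr:theothreerow}. These give $S_{2,n}\le 6(n-1)\le 6n-4$ and $S_{3,n}\le 10n-4\le 10n-2$, which match the claimed formulas with exponent $0$. The slack in the constants ($6n-4$ instead of $6n-6$, $10n-2$ instead of $10n-4$) is reserved to absorb additive terms in the induction.

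For the inductive step we follow the argument of the square theorem. The ant starts on an in-boundary arc (Remark~\ref{rm:inb_max}). It alternates between maximal stays inside the strip $G_{k-2,n}$, each of length at most $S_{k-2,n}$, and short \emph{bouncing phases} in the top or bottom row. A bouncing phase is a visit to one boundary-row cell that sends the ant back into the strip in two steps. Excursions that run horizontally along a boundary row are handled the same way, since each cell visit there is an in/out pair. Let $Y$ be the number of re-entries into the strip. We then get
\[
S_{k,n}\le (Y+1)S_{k-2,n}+2Y+c
\]
for a small constant $c$ covering the initial entry and the final exit. Next we bound $Y$ by the row-counting argument of the square theorem, applied only to the top and bottom rows; no left or right side is peeled off here. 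Each corner-type cell allows one bounce, and each other cell allows left and right bounces whose counts differ by at most one. With the same cap of $\min(i,n-i)$ per gap, this gives at most $n^2/4$ bounces per row, so $Y\le n^2/2$. Substituting, we get
\[
S_{k,n}+2\le \Bigl(\tfrac{n^2}{2}+1\Bigr)\bigl(S_{k-2,n}+2\bigr)+c'.
\]
It then remains to show $\bigl(\tfrac{n^2}{2}+1\bigr)(S_{k-2,n}+2)+c'\le \tfrac{(n+1)^2}{2}\,S_{k-2,n}$, up to the constant slack. Since $\tfrac{(n+1)^2}{2}-\tfrac{n^2}{2}-1=n-\tfrac12$ and $S_{k-2,n}\ge S_{2,n}$ grows at least linearly in $n$, the surplus $(n-\tfrac12)S_{k-2,n}$ swallows $n^2+2+c'$; this is where the margin in the base-case constants is used. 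Iterating gives the factor $\bigl(\tfrac{n+1}{\sqrt2}\bigr)^{2}$ per two rows. Finally, the ``in particular'' clause follows because $6n-4\le \tfrac{(n+1)^2}{2}$ and $10n-2\le \bigl(\tfrac{n+1}{\sqrt2}\bigr)^3$ once $n\ge k$ is moderately large. Small cases are covered by Table~\ref{tab:S99}, which is where the hypothesis $n\ge k$ and the simulation data enter.

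The main obstacle is the bound on $Y$, that is, justifying that the per-cell bounce caps of the square proof transfer to a full-width row. The square argument relied on a row flanked by corner cells on both ends. Here the end cells of the top row are corners of $G_{k,n}$, and a visit there may also exchange the ant with the side in-boundary/out-boundary arcs. First, I will redo the corner analysis: I will check that any visit using a side arc is either the initial entry or the final exit, so it contributes only to $c$. Second, I must be careful that the $\min(i,n-i)$ cap per gap is proved and not just read off Figure~\ref{fig:bouncingbound}. If that cap proves awkward, I will fall back on the cruder count of at most $n$ bounces per cell. This gives $Y=O(n^2)$ with a worse constant, and I will then check whether the stated base $(n+1)/\sqrt2$ can still be recovered.
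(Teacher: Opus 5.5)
Your argument for the displayed two-case bound is essentially the paper's. You peel off the top and bottom rows to get $S_{k,n}\le (Y+1)S_{k-2,n}+2Y+4$, bound $Y\le 2\cdot n^2/4$ via the per-gap caps $\min(i+1,n-1-i)$, shift by $T=S+2$, and absorb the constant using $(n+1)^2-(n^2+2)=2n-1$.

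Two small corrections to the sketch. First, keep $T_{k-2,n}$ rather than $S_{k-2,n}$ on the right-hand side. The only requirement is then $(2n-1)T_{k-2,n}\ge 8$, which is trivial because $T_{k-2,n}\ge 3$. So you need no lower bound on $S_{k-2,n}$ (no monotonicity, no tightness of the two-row construction); the slack in $6n-4$ and $10n-2$ is exactly the $+2$ of the shift. Second, the $\min(i+1,n-1-i)$ cap is not optional. Your fallback of $O(n)$ bounces per cell gives $Y$ of order $2n^2$, which degrades the base to roughly $\sqrt{2}\,n$. The cap does hold, and the corners cause no trouble:
\begin{itemize}
\item The ant passes from the strip into the top row only at V-cells and back only from H-cells, so every bounce is exactly two steps across one gap. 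There are no longer horizontal excursions.
\item Successive entries from below into a top V-cell alternate east/west.
\item Successive entries into a top H-cell alternate sides, except for the single final exit.
\end{itemize}
Hence adjacent gap counts differ by at most one. Each end gap carries at most one bounce, since a side arc can only be the initial entry or the final exit.

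The step that actually fails is the ``in particular'' clause. For even $k$ it needs $6n-4\le (n+1)^2/2$, i.e.\ $(n-1)(n-9)\ge 0$, so $n\ge 9$. For odd $k$ it needs $10n-2\le (n+1)^3/(2\sqrt{2})$, which holds for $n\ge 4$. For $k\ge 3$ the finitely many remaining cases ($k=n=3$, and even $k$ with $k\le n\le 8$) are indeed fine by Table~\ref{tab:S99}.

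For $k=2$ and $2\le n\le 8$, however, the table does not cover the small cases; it refutes them. There $S_{2,n}=6(n-1)>(n+1)^2/2$, e.g.\ $S_{2,2}=6>9/2$ and $S_{2,8}=42>81/2$. So the clause is false as stated and no argument can prove it for $k=2$, $n\le 8$. It holds for all $k\ge 3$ with $n\ge k$, and for $k=2$ only when $n\ge 9$.

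The paper's own proof stops at ``yields the stated bound'' and never checks this, so the defect lies in the statement itself. Your write-up should restrict the clause accordingly rather than appeal to the table.
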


\begin{proof}
We decompose $G_{k,n}$ as the union of the interior grid 
$G_{k-2,n}$ and two  $G_{1,n}$, one along the top 
boundary and one along the bottom boundary 
(Figure~\ref{fig:recgrid_decomposition}).

\begin{figure}[htbp]
\centering
\begin{tikzpicture}[
scale=0.5,
vertex/.style={circle, draw, minimum size=2.5mm, inner sep=0pt},
fvertex/.style={circle, draw, fill=black, minimum size=2.5mm, inner sep=0pt}, 
gvertex/.style={circle, draw, fill=gray!60, minimum size=2.5mm, inner sep=0pt}, 
>=Stealth,
thick
]
\def\k{6} 
\def\n{12} 
\def\spacing{1.5}
\def\externalarrow{1.0}

\pgfmathsetmacro{\gridleft}{-\spacing/2}
\pgfmathsetmacro{\gridright}{(\n-1)*\spacing+\spacing/2}
\pgfmathsetmacro{\gridbottom}{-\spacing/2}
\pgfmathsetmacro{\gridtop}{(\k-1)*\spacing+\spacing/2}

\fill[gray!50] (\gridleft, \gridbottom) rectangle (\gridright , \gridbottom+1.5) ;

\fill[gray!50] (\gridleft, \gridtop-1.5) rectangle (\gridright, \gridtop);


\foreach \i in {0,...,\numexpr\n-1} {
    \foreach \j in {0,...,\numexpr\k-1} {
        \pgfmathsetmacro{\x}{\i*\spacing}
        \pgfmathsetmacro{\y}{\j*\spacing}
        \node[gvertex] (\i v\j) at (\x,\y) {};
    }
}

\foreach \j in {0,...,\numexpr\k-1} {
    \pgfmathsetmacro{\y}{\j*\spacing}
    \pgfmathsetmacro{\dir}{int(mod(\j,2)==0?1:0)}
    \ifnum\dir=1
        \draw[->] (-\externalarrow,\y) -- (0v\j);
    \else
        \draw[<-] (-\externalarrow,\y) -- (0v\j);
    \fi
    \foreach \i in {0,...,\numexpr\n-2} {
        \pgfmathsetmacro{\arrowdir}{int(mod(\i+\j,2)==0?1:0)}
        \ifnum\arrowdir=1
            \draw[<-] (\i v\j) -- (\the\numexpr\i+1\relax v\j);
        \else
            \draw[->] (\i v\j) -- (\the\numexpr\i+1\relax v\j);
        \fi
    }
    \pgfmathsetmacro{\xright}{(\n-1)*\spacing+\externalarrow}
    \pgfmathsetmacro{\lastdir}{int(mod(\n-1+\j,2)==0?1:0)}
    \ifnum\lastdir=1
        \draw[<-] (\the\numexpr\n-1\relax v\j) -- (\xright,\y);
    \else
        \draw[->] (\the\numexpr\n-1\relax v\j) -- (\xright,\y);
    \fi
}

\foreach \i in {0,...,\numexpr\n-1} {
    \pgfmathsetmacro{\x}{\i*\spacing}
    \pgfmathsetmacro{\dir}{int(mod(\i,2)==0?1:0)}
    \ifnum\dir=1
        \draw[<-] (\x,-\externalarrow) -- (\i v0);
    \else
        \draw[->] (\x,-\externalarrow) -- (\i v0);
    \fi
    \foreach \j in {0,...,\numexpr\k-2} {
        \pgfmathsetmacro{\arrowdir}{int(mod(\i+\j,2)==0?1:0)}
        \ifnum\arrowdir=1
            \draw[->] (\i v\j) -- (\i v\the\numexpr\j+1\relax);
        \else
            \draw[<-] (\i v\j) -- (\i v\the\numexpr\j+1\relax);
        \fi
    }
    \pgfmathsetmacro{\ytop}{(\k-1)*\spacing+\externalarrow}
    \pgfmathsetmacro{\lastdir}{int(mod(\i+\k-1,2)==0?1:0)}
    \ifnum\lastdir=1
        \draw[->] (\i v\the\numexpr\k-1\relax) -- (\x,\ytop);
    \else
        \draw[<-] (\i v\the\numexpr\k-1\relax) -- (\x,\ytop);
    \fi
}
\end{tikzpicture}
\caption{Decomposition of $G_{6,12}$ as the union of the interior grid 
$G_{4,12}$ (white region) and two  $G_{1,12}$, one along the top 
boundary and one along the bottom boundary(gray shaded region). }
\label{fig:recgrid_decomposition}
\end{figure}

By Remark~\ref{rm:inb_max}, 
the ant starts from an in-boundary arc of $G_{k,n}$, reaches an 
in-boundary arc of $G_{k-2,n}$ in at most two steps, and then alternates 
between phases inside $G_{k-2,n}$ and bouncing phases on the two $G_{1,n}$, 
until it finally exits $G_{k,n}$ in at most two additional steps. If the 
ant returns to $G_{k-2,n}$ at most $Y_n$ times after its initial 
entry, we obtain
   $$ S_{k,n} \;\le\; (Y_n+1)\cdot S_{k-2,n} + 2Y_n + 4.$$

It remains to bound $Y_n$. Since the decomposition involves only the 
top and bottom $G_{1,n}$ (two sides rather than four), the same per-row 
bouncing analysis as in Section~\ref{sec:square} gives  $Y_n \le 2 \cdot \frac{n^2}{4} = \frac{n^2}{2}$. Substituting:
\[
    S_{k,n} \;\le\; \left(\frac{n^2}{2}+1\right) S_{k-2,n} + n^2 + 4 \text{ or equivalently } 2S_{k,n} \le (n^2+2)\,S_{k-2,n} + 2n^2 + 8
\]

\noindent Defining $T_{k,n} = S_{k,n}+2$, this rewrites as   $2T_{k,n} \;\le\; (n^2+2)\,T_{k-2,n} + 8.$

Since $(n^2+2) = (n+1)^2 - (2n-1)$ and $(2n-1)\,T_{k-2,n} > 8$ for all 
$k \ge 4$ and $n \ge k$ (which follows from $T_{2,n}=6(n-1)+2$ and the monotonicity of $(2n-1)\,T_{k-2,n}$), 
we obtain
\[
    2T_{k,n} \;\le\; (n+1)^2\,T_{k-2,n},
    \quad\text{i.e.,}\quad
    T_{k,n} \;\le\;  \left( \frac{n+1}{\sqrt{2}}\right)^2  \,T_{k-2,n} .
\]

Applying  this inequality by induction on $k$, using the base cases
$T_{2,n} \le 6n-4$ (for even $k$) and $T_{3,n} \le 10n-2$ (for odd $k$),
given by Theorems~\ref{pr:theotworow} and~\ref{pr:theothreerow} respectively,
yields the stated bound.
\end{proof}

\section{Conclusion and Open Questions}\label{sec:pers}

We studied the escaping time of Langton's ant on finite connected domains of
the square grid, establishing general structural properties and deriving upper
bounds. We obtained a factorial upper bound for square domains and linear bounds
for rectangular domains of height two and three, supported by exact values from simulations and matching or near-matching lower-bound constructions.
More generally, for a fixed height $k$,  we obtained an upper bound  $\left(\frac{n+1}{\sqrt{2}}\right)^k$ for rectangular domains $G_{k,n}$. 
Several directions remain open.

Our simulation results for square domains ($n \le 9$) suggest that $S_{n,n}$
grows significantly slower than the factorial upper bound $(n+1)!$, with
values consistent with cubic growth $S_{n,n} \in O(n^3)$. Can the factorial
bound be improved to a polynomial bound in $n$?

Extending the column-by-column analysis to grids of height four appears
feasible, although the case analysis becomes significantly more complex.
Preliminary simulations suggest a linear bound of the form $S_{4,n} \le 34n$,
and we have identified constructions achieving values close to this bound.
Can this conjecture be proved?

More generally, while $\left(\frac{n+1}{\sqrt{2}}\right)^k$ guarantees a polynomial upper
bound in $n$ for every fixed $k$ for $S_{k,n}$, the degree grows with $k$. Is there a
function $f$ such that $S_{k,n} \le f(k)\,n$ for all $k \in \mathbb{N}$?
In other words, is the escaping time linear in $n$ for every fixed height $k$?

Computing exact values of $S_{k,n}$ for larger grids (e.g., $10 \times 10$
and beyond) could help refine these conjectures, and designing constructions
yielding stronger lower bounds remains an important combinatorial challenge.

Finally, a better understanding of escaping time may contribute to clarifying
the computational complexity of Langton's ant on finite grids, particularly
for reachability problems.

\bibliography{biblio}

\end{document}